\newcommand{\mbe}[1]{\textcolor{purple}{{#1}}}
\def\E{\mathop{{}\mathbb{E}}}
\newcommand{\eps}{\varepsilon}
\newcommand{\ind}{\mathbbm{1}}
\DeclareMathOperator*{\argmax}{arg\,max}
\newcommand{\Prob}{\mathbb{P}}
\newcommand{\N}{\mathcal{N}}
\newcommand{\dist}{F}
\newcommand{\rev}{\mathsf{Rev}}
\newcommand{\ub}{\mathsf{ub}}
\newcommand{\us}{\mathsf{us}}
\newcommand{\diff}{\mathsf{diff}}
\newcommand{\Prange}[3]{
 \ifx #3\infty
 \Prob[#2 \leq v_{#1}]
\else \Prob[#2 \leq v_{#1} < #3]
\fi
}
\newcommand{\truncV}[2]{
V_{#1}(#2)
}
\newcommand{\items}{{\mathcal{M}}}
\newcommand{\rem}{\mathsf{rem}}
\DeclarePairedDelimiter{\set}{\{}{\}}
\DeclarePairedDelimiter{\sq}{[}{]}
\DeclarePairedDelimiter{\paren}{\lparen}{\rparen}
\DeclarePairedDelimiterX{\cond}[2]{[}{]}{#1\,\delimsize\vert\,\mathopen{} #2}
\newtheorem{theorem}{Theorem}[section]
\newtheorem{lemma}[theorem]{Lemma}
\newtheorem{claim}[theorem]{Claim}
\newtheorem{proposition}[theorem]{Proposition}
\newtheorem{definition}[theorem]{Definition}
\title{Bundling in Oligopoly:\\ Revenue Maximization with Single-Item Competitors}
\author{
Moshe Babaioff \thanks{\href{mailto:moshe.babaioff@mail.huji.ac.il}{moshe.babaioff@mail.huji.ac.il}.  Moshe Babaioff's research is supported in part by a Golda Meir Fellowship.}\\
{\small Hebrew University of Jerusalem}
\and
Linda Cai\thanks{\href{mailto:tcai@princeton.edu}{tcai@princeton.edu}.}\\
{\small Princeton University}
\and
Brendan Lucier\thanks{\href{mailto:brlucier@microsoft.com}{brlucier@microsoft.com}.}\\
{\small Microsoft Research} 
}
\date{}
\begin{document}

\maketitle

\begin{abstract}
    We consider a principal seller with $m$ heterogeneous products to sell to an additive buyer {over independent items}.  The principal can offer an arbitrary menu of product bundles, but faces competition from smaller and more agile single-item sellers.  
    The single-item sellers choose their prices after the principal commits to a menu, potentially under-cutting the principal's offerings.  We explore to what extent the principal can leverage the ability to bundle product together to extract revenue.

    Any choice of menu by the principal induces an oligopoly pricing game between the single-item sellers, which may have multiple equilibria.  When there is only a single item this model reduces to Bertrand competition, for which the principal's revenue is $0$ at any equilibrium, so we assume that no single item's value is too dominant. We establish an upper bound on the principal's optimal revenue at every 
    equilibrium: the expected welfare after truncating each item's value to its revenue-maximizing price. 
    Under a technical condition on the value distributions --- that the monopolist's revenue is sufficiently sensitive to price --- 
    we show that the principal seller can simply price the grand-bundle and ensure (in any equilibrium) a constant approximation to this bound (and hence to the optimal revenue).
    We also show that for some value distributions violating our conditions, grand-bundle pricing does not yield a constant 
    approximation to the optimal revenue in any equilibrium. 
\end{abstract}

\section{Introduction}

True monopolies are rare.  Even a large, dominant seller in a given market will almost certainly need to contend with small, agile competitors who attempt to offer similar products.  For example, suppose that a large firm has many heterogeneous goods to sell to a buyer who values them additively and independently.  This setting has been studied extensively in the algorithmic mechanism design literature when the seller is a monopolist, and it is known that the seller can often extract more revenue (sometimes close to the full social welfare) by bundling goods together and selling them as packages {instead of selling each item separately} ~\cite{HartN12,HartN17,LiY13,BabaioffILW14}.
But this sort of aggressive bundling strategy carries risk: if a smaller competitor were able to replicate one of the products, they could sell it as a stand-alone offering and attract away customers who have an especially strong preference for it relative to the others.  {On the other hand, selling items \`{a} la carte leaves the firm vulnerable to being undercut on prices.  Either way, the threat of competition even from smaller sellers of individual products can substantially impact the firm's sales strategy and revenue.}  

Our goal in this paper is to understand the ability of a multi-product principal seller to extract revenue 
in settings where he faces competition from agile item sellers.  
In our model there is a single multi-good seller (the principal) who has $m$ items to sell, plus a pool of item sellers who each has only a single one of the $m$ item types to sell.  For most of the paper we will assume that there is exactly one competing item seller per item type.\footnote{Our results extend naturally to having an arbitrary and possibly different number of item sellers for each good (including none at all), as we discuss in our model extensions below.}
The principal, acting as a market leader, first selects and commits to an arbitrary menu of product bundles.  Then, simultaneously, each item seller picks her own deterministic menu for selling her item (equivalently, picks a price for her item). The item sellers can randomize over different prices. The buyer, aiming to maximize her utility, then decides on a subset to buy from the principal seller, and on the set of items she buys from the item sellers.

The presence of the item sellers impedes the ability of the principal to extract revenue because of their ability to undercut the principal's prices.  In the special case of only a single item, this is classic Bertrand competition.  Indeed, if the principal chooses any price $p > 0$, a competing item seller would simply undercut $p$ by some small amount and steal away all of the principal's sales.  Thus, in the case of a single item type, it is impossible for the principal to generate any revenue.  However, if there are two or more items, the principal has a power that the item sellers do not: the power to bundle items together and sell them as a package.  We ask: 
{\emph{What are the limits on the revenue of the principal due to the competition? Under which market conditions can the principal extract significant revenue even in the presence of such competition?}}

Any given choice of menu by the principal seller defines a downstream oligopolistic pricing game played by the item sellers.  Our solution concept in that downstream game is mixed 
Nash equilibrium (NE): each item seller selects a distribution over price choices that maximizes her expected revenue given the {menu of the principal and the distributions selected by}  %behavior of 
the other item sellers.  %This pricing game is not continuous in general, but we show that a mixed NE is always guaranteed to exist.  The equilibrium 
{With a discrete pricing space} a mixed NE is guaranteed to exist, but is not unique in general.  
{We are able to prove an upper bound on the revenue attainable by the principal that is robust to the choice of equilibrium: it holds for any choice of menu and any equilibrium of play by the item sellers.} 
{We show that} %Namely, 
the principal's revenue cannot be greater than the expected sum of truncated item values, where each item's value is truncated to its (maximal) revenue-maximizing price.

\begin{theorem}
\label{thm:intro.bound} Suppose the buyer has independent item values $v_i \sim F_i$ and that $r_i \in \arg\max_r\{ r(1-F_i(r)) \}$ is the maximal revenue-maximizing price for item $i$.  Then for any menu of the principal seller and any mixed Nash equilibrium among the item sellers, the expected revenue of the principal seller is at most the buyer's expected truncated social welfare $\E_{\vec{v}{\sim \times_i \dist_i}}[\sum_i \min\{v_i, r_i\}]$.
\end{theorem}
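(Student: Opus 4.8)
The plan is to prove the bound in two stages: first a robust consequence of buyer rationality that holds for \emph{any} menu and \emph{any} play by the item sellers, and then a structural property of equilibrium play that forces the item sellers' prices below their monopoly reserves.

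For the first stage I would show that, regardless of the principal's menu and regardless of the (possibly randomized) prices $\vec p=(p_1,\dots,p_m)$ the item sellers post, the principal's expected revenue is at most $\E_{\vec v,\vec p}[\sum_i\min\{v_i,p_i\}]$. Fix a realization of $\vec v$ and $\vec p$, and suppose the buyer's utility-maximizing choice is to buy a bundle $S$ from the principal at price $q_S$ (plus, optimally, every item $i\notin S$ with $v_i>p_i$ from its own seller), so her utility is $\sum_{i\in S}v_i+\sum_{i\notin S}(v_i-p_i)^+-q_S$. Comparing with the alternative of buying nothing from the principal, which yields $\sum_i(v_i-p_i)^+$, optimality forces $q_S\le\sum_{i\in S}\big(v_i-(v_i-p_i)^+\big)=\sum_{i\in S}\min\{v_i,p_i\}\le\sum_i\min\{v_i,p_i\}$. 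Averaging over $\vec v$ and the sellers' randomization gives the claimed inequality, and this uses nothing about equilibrium.

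The substance is the second stage: in every mixed Nash equilibrium each item seller $i$ posts a price at most $r_i$ almost surely. Fix the equilibrium and a seller $i$ and condition on the other sellers' prices $\vec p_{-i}$. The key observation is that, after writing $v_k=\min\{v_k,p_k\}+(v_k-p_k)^+$, the buyer chooses her principal bundle $S^*$ to maximize $\sum_{k\in S}\min\{v_k,p_k\}-(\text{price of }S)$ over the menu (the remaining term $\sum_k(v_k-p_k)^+$ is independent of $S$); the only summand depending on $v_i$ or $p_i$ is $\min\{v_i,p_i\}$, it is non-decreasing in each, and it appears only in bundles containing $i$, so $\ind[i\notin S^*]$ is non-increasing in both $v_i$ and $p_i$. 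Seller $i$'s revenue at price $p_i$ is $p_i\big(1-\dist_i(p_i)\big)\,c_i(p_i)$ where $c_i(p_i):=\Prob[\,i\notin S^*\mid v_i>p_i\,]$, since she transacts exactly when $v_i>p_i$ and the buyer does not already get item $i$ inside a principal bundle; and $c_i$ is non-increasing in $p_i$ because raising $p_i$ both shrinks $\ind[i\notin S^*]$ pointwise and conditions on stochastically larger $v_i$, over which $\ind[i\notin S^*]$ is decreasing. Since $r_i$ maximizes $r\big(1-\dist_i(r)\big)$, for every $p_i>r_i$ we get $p_i\big(1-\dist_i(p_i)\big)c_i(p_i)\le r_i\big(1-\dist_i(r_i)\big)c_i(r_i)$, i.e.\ posting $r_i$ weakly dominates posting above $r_i$. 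Hence a best responder never places mass strictly above $r_i$ unless indifferent, and tracking the equality cases shows the only way indifference can occur is $p_i\big(1-\dist_i(p_i)\big)=r_i\big(1-\dist_i(r_i)\big)$ --- which contradicts $r_i$ being the \emph{maximal} revenue-maximizing price --- except in the degenerate situation where $c_i(r_i)=0$, so seller $i$ earns zero and never transacts at any price; I would dispatch that case separately (the buyer then always obtains item $i$ from the principal, and one checks directly that such an item adds at most $\E[\min\{v_i,r_i\}]$ to the principal's revenue, e.g.\ because the bundle price it supports is then controlled by the infimum of $v_i$'s support and by the non-degenerate items). With $p_i\le r_i$ established for every seller, $\min\{v_i,p_i\}\le\min\{v_i,r_i\}$ pointwise, and combining with the first stage gives $\E[\sum_i\min\{v_i,p_i\}]\le\E[\sum_i\min\{v_i,r_i\}]$, which is the theorem.

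The main obstacle is the second stage, and within it the monotonicity claim that the buyer's bundle choice is monotone in each item's value and in each competitor's price; once that is secured, the reduction to the single-item monopoly reserve is clean, and the word ``maximal'' in the definition of $r_i$ is precisely what rules out equilibria in which a seller profitably prices above her reserve. A secondary technicality is the degenerate zero-revenue case above, which I expect to handle with a short direct argument rather than anything conceptually new.
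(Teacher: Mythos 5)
Your two-stage structure is essentially the paper's (their Proposition~\ref{lem:revenue.supremum.bound} plays the role of your stage~1, their Lemma~\ref{lem:revenue.zero} the role of your stage~2), and your stage~1 is actually a cleaner, pointwise version of their Proposition~\ref{lem:revenue.supremum.bound}: comparing against ``buy nothing from the principal'' gives $p(S)\le\sum_{i\in S}\min\{v_i,q_i\}$ for every realization, which after taking expectations is exactly what is needed. Your observation that the buyer maximizes $\sum_{k\in S}\min\{v_k,q_k\}-p(S)$ is the same object as the paper's $\diff_{\vec v}(\cdot)$, and the resulting monotonicity of $\ind[i\notin S^*]$ in $(v_i,q_i)$ is the content of their Claims~\ref{claim:value-change}--\ref{claim:probMonotone}. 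Up to the ``degenerate'' case, the argument you outline is correct.

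The gap is precisely that degenerate case, and it is not dispatchable as casually as you suggest. When $c_i(r_i)=0$ you cannot conclude that seller~$i$ prices at most $r_i$; she may place arbitrary mass above $r_i$, in which case $\min\{v_i,q_i\}$ in your stage~1 bound is \emph{not} dominated by $\min\{v_i,r_i\}$, and the bound no longer closes. Your parenthetical --- ``the bundle price it supports is then controlled by the infimum of $v_i$'s support and by the non-degenerate items'' --- is not an argument and in fact fails as stated: the infimum of $\mathrm{supp}(F_i)$ is typically~$0$ (e.g.\ $v_i\in\{0,H\}$), which controls nothing. The paper handles exactly this case with a nontrivial additional lemma (Claim~\ref{lem:subset-bound-rev-plus-welfare}, a Hart--Nisan-style decomposition $\mathrm{Rev}\le\mathrm{Rev}(Y)+\mathrm{Wel}(Z)$ proved by simulating the $Z$-values and derandomizing). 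The ingredient you would need, and which you do have the pieces for, is the following: if seller~$i$ never transacts then for every $v_i>0$ the principal's purchase set contains $i$ and, by your own Claim~\ref{claim:value-change}-type monotonicity, is \emph{constant} over $v_i>0$; hence truncating $v_i$ at $r_i$ does not change the principal's realized revenue. Applying your stage~1 bound to this truncated instance and using $q_j\le r_j$ on the non-degenerate coordinates then closes the argument without invoking the revenue-plus-welfare lemma. As written, though, the proposal does not supply this step, and without it the chain ``$q_i\le r_i$ a.s.\ for every seller~$i$'' is simply false, so the reduction to stage~1 does not go through.
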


We note that this upper bound may be substantially lower than the revenue obtainable by a monopolist.  For example, suppose the item value distributions are i.i.d.\ draws from an equal revenue distribution supported on $[1,H]$ {for some large $H>m$}% MB: we need this to ensure revenue of $\Omega(m \log m)$ by bundling
,\footnote{This is the distribution $F$ over $[1,H]$ such that $1-F(z) = 1/z$ for all $1 \leq z < H$.} skewed slightly so that $1$ is the unique revenue-maximizing price.  In this instance it is known that a monopolist seller can obtain revenue $\Omega(m \log m)$ by appropriately pricing the grand bundle of all items.  But  the expected truncated welfare, our upper bound on the principal's revenue in our competitive setting, is {much lower, just} $O(m)$.

What makes it harder for the principal to extract revenue than a monopolist?  We recall that for a monopolist, it is possible to achieve a constant fraction of the optimal revenue with a simple menu that either prices the grand bundle of all items, or sets a separate price on each individual item and allows the buyer to purchase \`{a} la carte \cite{BabaioffILW14}.  In our competitive setting, however, the latter is not an option: setting a separate price on each item reduces the problem to $m$ independent instances of Bertrand competition, leading to each item's price being undercut {by its item seller} and no revenue being generated {for the principal}.  On the other hand, by bundling items together, the principal might encourage item sellers to keep their prices high; however, as we show, an item seller will never choose a price higher than their maximal monopolist reserve price $r_i$.  This ultimately prevents the principal from extracting revenue from higher values, leading to our upper bound on the principal's optimal revenue.

We show that, under certain assumptions on the value distributions, it is possible for the principal to guarantee a constant approximation to %this 
the expected truncated social welfare (which, by Theorem~\ref{thm:intro.bound}, upper bounds the optimal revenue).
Moreover, this is achievable with a simple menu that offers only the grand bundle of all products at a carefully-selected price.

\begin{theorem}[Informal]
\label{thm:intro.main}
{Suppose each item's value distribution is $(\lambda, C)$-price-sensitive (described below) and %has 
{that the variance of the truncated social welfare is sufficiently high (as a function of $\lambda$, $C$, and $\max_i r_i$).}
Then there exists a price $p$ at which the principal can sell the grand bundle of all items such that, at any mixed NE for the item sellers, the principal's revenue is at least $1/3$ of the expected truncated social welfare.} 
\end{theorem}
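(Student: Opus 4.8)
The plan is to price the grand bundle at exactly the price a monopolist facing the \emph{truncated} buyer would choose, and then show that competition can only mildly perturb the outcome. First I would pin down the downstream game. If the principal offers only the grand bundle at price $p$ and item seller $i$ posts price $q_i$, an additive buyer with values $\vec v$ prefers the grand bundle to buying items individually exactly when $\sum_i v_i - p \ge \sum_i \max\{v_i - q_i,0\}$, i.e.\ when $\sum_i \min\{v_i,q_i\}\ge p$; in that case she buys the bundle and the item sellers sell nothing. Hence the principal's revenue is $p\cdot\Prob[\sum_i\min\{v_i,q_i\}\ge p]$, while by independence across items, item seller $i$'s revenue at price $q_i$ is $\mathrm{rev}_i(q_i)\cdot g_i(q_i)$, where $\mathrm{rev}_i(q):=q(1-F_i(q))$ is her monopoly revenue curve and $g_i(q):=\Prob[\sum_{j\ne i}\min\{v_j,q_j\}<p-q]$ is an ``availability factor'' that is nonincreasing in $q$. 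As already noted, in any NE every seller posts $q_i\le r_i$ almost surely, since raising $q$ above $r_i$ strictly decreases both $\mathrm{rev}_i(q)$ and $g_i(q)$.

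Write $W:=\sum_i\min\{v_i,r_i\}$, $\mu:=\E[W]$, and let $p^\star\in\argmax_p p\,\Prob[W\ge p]$. The first key step is purely distributional: $W$ is a sum of independent variables each in $[0,r_i]\subseteq[0,\max_i r_i]$, so $\mathrm{Var}(W)=\sum_i\mathrm{Var}(\min\{v_i,r_i\})\le (\max_i r_i)\E[W]=(\max_i r_i)\,\mu$; thus the hypothesis that $\mathrm{Var}(W)$ is large forces $\mu$ to dominate $\mathrm{std}(W)$. Cantelli's inequality at the threshold $\mu-\mathrm{std}(W)$ then gives $p^\star\Prob[W\ge p^\star]\ge (\mu-\mathrm{std}(W))/2$, which exceeds $\mu/3$ once $\mathrm{Var}(W)\ge 9(\max_i r_i)^2$ (with slack to spare as the variance grows). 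So it remains to show that, when the grand bundle is priced at $p^\star$, every equilibrium of the item-seller game realizes essentially this ``monopolist-against-truncated-buyer'' revenue.

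The heart of the argument is that in any NE the posted prices are close to the reserves, so $\sum_i\min\{v_i,q_i\}$ is close to $W$. By the NE indifference condition, any $q$ in seller $i$'s support satisfies $\mathrm{rev}_i(q)g_i(q)=\max_{q'}\mathrm{rev}_i(q')g_i(q')\ge \mathrm{rev}_i(r_i)g_i(r_i)$, hence $\mathrm{rev}_i(q)\ge\mathrm{rev}_i(r_i)\cdot g_i(r_i)/g_i(q)\ge\mathrm{rev}_i(r_i)\cdot g_i(r_i)/g_i(0)$. Now $g_i(0)-g_i(r_i)=\Prob[\,p^\star-r_i\le\sum_{j\ne i}\min\{v_j,q_j\}<p^\star\,]$ and $g_i(0)=\Prob[\sum_{j\ne i}\min\{v_j,q_j\}<p^\star]$. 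If the latter is tiny for some $i$ then the grand bundle is sold with probability near $1$ and the principal already collects $\approx p^\star\ge\mu/3$; otherwise $g_i(0)$ is bounded below and high variance enters a second time: $\sum_{j\ne i}\min\{v_j,q_j\}$ is again a sum of many independent bounded terms of large variance, so by an anti-concentration (Kolmogorov--Rogozin-type) bound it places only small mass on any window of width $\le\max_i r_i$, forcing $g_i(r_i)/g_i(0)$ close to $1$. Thus $\mathrm{rev}_i(q)$ is within a small factor of its maximum for every supported $q$, and the $(\lambda,C)$-price-sensitivity assumption is exactly the converse quantitative statement: near-maximal monopoly revenue forces $q$, hence $\E[\min\{v_i,q\}]$, to be near $r_i$, hence near $\E[\min\{v_i,r_i\}]$. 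Summing over $i$ (and controlling the aggregate slack against $\mathrm{Var}(W)$) yields that $W-\sum_i\min\{v_i,q_i\}$ is small with high probability. Combining, the principal's revenue is $p^\star\Prob[\sum_i\min\{v_i,q_i\}\ge p^\star]\ge p^\star\Prob[W\ge p^\star]-p^\star\Prob[p^\star\le W<p^\star+(\text{small})]$, and the last term is again controlled by anti-concentration of $W$; with the parameters chosen so that $\mathrm{Var}(W)$ dominates the relevant functions of $\lambda$, $C$, and $\max_i r_i$, this error is smaller than the slack in $p^\star\Prob[W\ge p^\star]\ge\mu/3$.

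\textbf{Main obstacle.} The delicate point is the equilibrium analysis in the third paragraph: the flatness of one seller's availability factor $g_i$ relies on the \emph{other} sellers not collapsing their prices toward $0$ (otherwise the residual sum loses its variance), so ``prices are near reserves'' cannot be shown one seller at a time and needs either a simultaneous, fixed-point-style argument or a separate lemma ruling out equilibria in which many sellers price low. Hand in hand with this is the bookkeeping of how the per-item errors coming from price-sensitivity and from the anti-concentration constant aggregate over all $m$ items and compare to $\mathrm{Var}(W)$ --- making these line up is precisely what dictates the quantitative form of the ``variance sufficiently high'' hypothesis.
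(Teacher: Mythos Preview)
Your overall architecture matches the paper's: factor seller $i$'s revenue as $\mathrm{rev}_i(q_i)\cdot g_i(q_i)$, argue that in equilibrium $g_i$ is flat enough that price-sensitivity forces $q_i$ near $r_i$, then collect a constant fraction of the truncated welfare via concentration/anti-concentration of the bundle value. You also correctly isolate the circularity---flatness of $g_i$ depends on the other sellers' variance, which depends on their prices---as the crux.

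The paper resolves that circularity differently from either of your two suggestions (fixed-point, or a direct ``rule out many-low-price equilibria'' lemma), and its resolution rests on two ingredients your sketch lacks. First, the bundle is not priced at $p^\star=\argmax_p p\,\Prob[W\ge p]$ but at the explicit $p=\sum_i\E[\min(v_i,Cr_i)]+\tfrac{1}{4}\sigma(W)$. The argument then case-splits on whether the bundle sells with probability at least $1/2$ at equilibrium. If so, done immediately (since $p\ge C\mu\ge\tfrac{8}{9}\mu$). If not, then $\Prob[\sum_{j\ne i}V_j(s_j)\le p]\ge 1/2$ and, crucially, there is also nontrivial mass \emph{above} $p-r_i$ (else the desired probability ratio is trivially $\ge 1-\lambda(1-\alpha)$); Chebyshev on both sides together with the numerical placement of $p$ forces the \emph{aggregate mean} $\E[\sum_j\min(v_j,q_j)]$ to be at least $\sum_i\E[\min(v_i,Cr_i)]$. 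Your $p^\star$ carries no such structural guarantee in the bad case.

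Second---and this is the real key---the paper proves a \emph{mean-to-variance transfer} (Lemma~\ref{lem:highMeanVar}, built on a derivative identity relating $\sigma_i^2(q)$ and $\mu_i(q)$): whenever that aggregate mean is at least $\sum_i\E[\min(v_i,Cr_i)]$, one automatically has $\mathrm{Var}\big(\sum_j\min(v_j,q_j)\big)\ge\tfrac{1}{2}\mathrm{Var}(W)$, \emph{regardless of how the individual $q_j$ are distributed}. This decouples the anti-concentration of $\sum_{j\ne i}\min(v_j,q_j)$ from any per-seller control and breaks the circle cleanly. Berry--Esseen (rather than Kolmogorov--Rogozin) then gives the needed flatness of $g_i$, price-sensitivity pins every supported $q_i$ into $[Cr_i,r_i]$, and a second Berry--Esseen pass---now with the variance lower bound from Lemma~\ref{coro:constVarianceBound}---yields sale probability at least $0.38$. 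Your final step, subtracting $p^\star\Prob[p^\star\le W<p^\star+\text{small}]$ and controlling it by anti-concentration of $W$, is thereby avoided entirely.
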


The assumptions we impose on the value distributions are of two types.  First, we require that the total variance of the  sum of {truncated} values is high relative to each individual item's contribution.  This implies that no individual item contributes too much to the expected {truncated} welfare, or has too high an influence on the likelihood of sale to the principal.  This is an expansion of the requirement that there be more than one item for sale (which, recall, is necessary for the principal to obtain any revenue at all).  For example, if we have multiple items but a single item contributes almost all of the value, then we are ``essentially'' in the single-item case and similar issues persist.  {To give another example of what can go wrong if individual item sellers have high influence, suppose each of the $m$ items has value 1 for sure.  Then for any price $p$ that the principal sets on the grand bundle, there is a Nash equilibrium in which every item seller sets price $p/m$, precisely coordinating to undercut the grand bundle price and resulting in revenue $0$ for the principal.  In this case, even though no individual item dominates the welfare, each individual item seller still has substantial influence on the buyer's aggregate purchasing decisions at equilibrium.  Our condition that the variance is sufficiently high excludes such a scenario, since no individual item price will be significantly predictive of the buyer's utility-maximizing behavior.}  %As the principal leaves no room for item sellers to extract revenue if he is successful in selling the grand bundle, the item sellers move to undercut the principal, taking all the revenue from the principal. Our positive result is based on high enough variance, so when the principal prices at a constant fraction of the expected truncated welfare, both him and the item sellers get high revenue in every NE.

Second, we require that each item's value distribution is non-trivially ``price sensitive", by which we mean that the revenue obtained decreases sufficiently (parameterized by $\lambda$) at prices sufficiently below the revenue-maximizing price (parameterized by $C$). This is a technical condition that essentially rules out point-mass distributions and equal-revenue distributions, requiring that the distributions are sufficiently far from these edge case distributions (and the bound we require on the number of items is parameterized by how far they are).

We leave a formal definition of the price sensitivity condition to later sections (see Section~\ref{sec:approxopt}).  Instead, we now note some simpler sufficient conditions that imply our approximation result: 
\begin{enumerate}
    \item In the i.i.d.\ case where all item values are drawn from a distribution $G$, if $G$ has a unique revenue-maximizing price $r^*$ that is greater than the minimum positive value in the support of $G$, our conditions will be satisfied for all sufficiently large $m$.
    \item For every $i$, there exists a $\delta$ such that if the value distribution of item $i$
    has revenue-optimizing price $r_i \geq r_{min} > 0$, density at least $\delta/r_i > 0$ on $[0, r_i]$, %\mbe{have first (left) derivative at least $\delta$, and} 
    and strictly concave revenue curves (in price space) with second derivatives at most $-\delta/r_i < 0$, 
    our conditions will be satisfied for all sufficiently large $m$. 
\end{enumerate}

It is worth noting that while in Theorem~\ref{thm:intro.main} and its corollaries we describe our assumptions as though they must hold for all items, this is not strictly necessary.  Our results degrade gracefully in that sense that if some subset $S$ of the items satisfy the necessary conditions, then the principal's revenue can approximate the expected truncated welfare of the items in $S$.  Indeed, the principal can always choose to ignore any items not in $S$, 
{leaving the corresponding item sellers to act as monopolists (so in every equilibrium they price at their revenue-maximizing prices)}. 

While %this second condition is 
the assumptions we impose are technical and we suspect they can be relaxed, we show that some form of assumption on the 
value distributions is necessary for the grand bundle to obtain a good revenue-approximation result.  
{
We construct a family of problem instances for 
which, unlike in Bertrand competition, the principal's optimal menu generates substantial revenue.} 
{For this family, the revenue of the best partition menu (in which the items are partitioned into disjoint bundles and each is assigned a price) is higher than the revenue of any grand-bundle pricing by an arbitrarily large factor ($\Theta(m)$).} % MB: we said "gap" but it did not make the direction explicit 
We leave open the question of how to construct an approximately optimal menu (which must necessarily be more complex than pricing the grand bundle) for general distributions.  In particular, is there always an approximately revenue-optimal menu for the principal that is a partition menu?  If so, is it possible to construct such an approximately-optimal menu in polynomial time?

\medskip
\noindent
\textbf{Proof Techniques.}
We now briefly describe the proof of Theorem~\ref{thm:intro.main} and how it differs from corresponding results for a monopolist seller. 
For a monopolist, if the buyer's expected total value from all items is sufficiently concentrated around its mean, the principal can collect {a constant fraction}
of this surplus by pricing the grand bundle {at a price that is constant fraction of the mean.}  
The idea behind Theorem~\ref{thm:intro.main} is similar, but the principal must additionally consider the impact of prices chosen by the item sellers in equilibrium.  If the buyer's expected \emph{truncated} welfare is sufficiently concentrated around its mean, the principal can attempt to extract {a constant fraction of} this welfare as revenue with
{a price that is constant fraction of the mean.
This can potentially work if the competing item sellers set prices close to their optimal monopolist reserves $r_i$, but (unlike for a monopolist) this approach would certainly}
fail if the item sellers set prices so low that the buyer rarely buys from the principal.  To show that this doesn't happen
{under the assumptions of Theorem~\ref{thm:intro.main},} we use the Berry-Esseen theorem to argue that %, under the assumptions of Theorem~\ref{thm:intro.main}, 
the expected truncated welfare is sufficiently \emph{anti}-concentrated (in addition to our earlier requirement of being concentrated enough).
This variability implies that either the principal seller sells the grand bundle with constant probability (in which case we are already done, {since the price is a constant fraction of the benchmark}), %or the expected truncated welfare is sufficiently large and variable that no individual item can contribute to it too significantly.  The latter case implies that 
or each individual item seller has limited influence on whether or not the buyer purchases from the principal.  In the latter case, item sellers behave like (approximate) monopolists, so as long as their revenue is non-trivially sensitive to their choice of price, they will choose prices close {enough} to their revenue-maximizing prices at equilibrium {(at least $C\cdot r_i$ for every $i$ where $C>0$ is a constant that depends on the value distribution)}.  
But now, since the item sellers are setting high prices, the principal effectively acts as a monopolist with respect to the truncated value distributions and can capture a constant fraction of the truncated welfare by its choice of price on the grand bundle.

\medskip
\noindent
\textbf{Model extensions.} Our model focuses on the case that there is one item seller for each item the principal seller has. Our results can easily be extended to the case that there is any number of item sellers for each good (including none at all). In any case that  there are several item sellers which all supply exactly the same item, the price of that item will drop to 0 {in every equilibrium} (standard Bertrand competition between these item sellers), so such items cannot contribute to the revenue of the principal at all.\footnote{The principal can extract the exact same revenue with or without these items, since it can be assumed the buyer would purchase those items from the item sellers at price $0$ regardless, so would behave as though the items are not offered by the principal.} Let us now move to consider items that have no item seller at all. The principal seller is a monopolist on these items. Let $X$ be the set of items or which the principal is a monopolist, and $Y$ be the set of items with exactly one item seller for each item. Similarly to the claim proved by Hart and Nisan\cite{HartN12}, we observe that the revenue of principal in this model is bounded by the revenue obtainable from $X$ plus the welfare of $Y$. By the result of \cite{BabaioffILW14}, either selling the items in $X$ separately, or bundling all of them together, provides constant approximation to the optimal revenue of the monopolist. Combining the appropriate menu for $X$ with our offer of $Y$ as a bundle in the case that we are able to obtain a positive result (constant approximation to the expected truncated social welfare), we get constant approximation to the optimal menu over the set $X\cup Y$.

\medskip
\noindent
\textbf{Organization.}  In Section~\ref{sec:model} we formalize our model, including the market timing and equilibrium concept.  In Section~\ref{sec:upperbound} we establish our upper bound on the principal's optimal revenue, by first analyzing the structure of the buyer's purchasing decisions (Section~\ref{subsec:buyerDecision}) and the best-responses prices of the item sellers (Section~\ref{subsec:itemSeller}) then bounding the principal's 
revenue at equilibrium (Section~\ref{subsec:upperbound}).  In Section~\ref{sec:approxopt} we prove our main result: a constant revenue approximation for the principal under a general set of conditions, followed by some market structures that satisfy these conditions.
Finally, in Section~\ref{sec:brev-prev-lb} we show that the best achievable revenue from pricing the grand bundle can be an arbitrarily poor approximation to the principal's optimal revenue.

\subsection*{Additional Related Work}
\textbf{Auction Design for a Monopolist Seller.} 
Revenue optimal auction design is known to be a hard problem, even when the designer has monopoly over the market.
Specifically, when an auction involves multiple items, even when there is only one additive buyer, the revenue optimal auction of a monopolist may involve unrealistic features such as requiring randomization, lack of revenue monotonicity, and being computationally intractable \cite{Thanassoulis04, ManelliV07, Pavlov11, HartN13, DaskalakisDT14, HartR15, DaskalakisDT17}. 
{On the other hand, simple auction formats have been shown to provide a constant approximation to the monopolist optimal revenue for an additive buyer~\cite{HartN12,HartN17,LiY13,BabaioffILW14}.}
\cite{CaiDW16} establishes a duality framework which gives a tractable benchmark that bounds the optimal revenue within constant factor. The benchmark has subsequently been heavily utilized in the design and analysis of simple and approximately optimal auctions \cite{LiuP18, CaiZ17, EdenFFTW17b, EdenFFTW17a, BrustleCWZ17, DevanurW17, FuLLT18, BeyhaghiW19, CaiS21}. Specifically, in the setting of one additive buyer, the better between selling separately and selling the grand bundle is a constant approximation to the optimal revenue. 
Relative to that literature, we initiate a study of revenue benchmarks for a non-monopolistic principal who acts as a market leader and can design a menu (possibly with bundling) to which other sellers might respond.  

\textbf{Oligopoly. } Oligopoly has long been a topic studied in economics. Most literature studies price competition between two or several large firms who offer similar sets of products. See, for example, \cite{chen1997equilibrium,armstrong2010competitive} for examples of duopoly analysis. Particularly related to this paper is a line of work that studies how much a firm who has obtained monopolist status in one product can use bundling to deter single product competitors in another product, and its legislative implications 
\cite{whinston1989tying,nalebuff2004bundling,vickers2005abuse,avenali2013bundling}. 
More recent work has considered competition between multiple firms who can each bundle their products, such as \cite{khodabakhshian2020competitive} which considers the case of two firms each with two products.

In contrast, our work studies the strategies of a large firm that offers multiple products, given that small competitors already exist in the market for each product. Such situations is common for tech markets (e.g. security products, graphic products), where the fixed cost of maintaining a small business is relatively lower compared to other. \cite{mantovani2013strategic} studies this setting, but under the restricted assumption that the principal seller sells two products. \cite{shuai2022dominant} also studies the two product setting, but where the principal seller is competing with several small businesses for each product. Our paper initiates a systematic study of oligopoly with single item competitors, where we allow any number of items to be offered by the principal seller, and we focus on the principal seller's problem of constructing an approximately revenue-optimal menu.

\section{Model and Preliminaries} \label{sec:model}

We consider a market for a set $\items$ of $m$ item types. Items are indivisible. %, and denote the set of all items by $\items$. 
There is a single buyer in the market, interested in buying at most one item from each item type, and having a private additive {non-negative} valuation over subsets of items. We denote the value of item $i\in \items$ by {$v_i\geq 0$},  
and thus the value of a subset of items $B \subseteq \items$ is $\sum_{i \in B} v_i$. %\mbe{We denote the vector of item values by $\vec{v}=(v_1,\ldots, v_m)$.}

{We assume that the valuation of the buyer} 
is drawn from a %commonly-known 
prior distribution $\dist$, which is common knowledge among all participants, with the value of each item $i$ drawn independently from the distribution $\dist_i$ {with support in $[0,\infty)$}, %\mbc{this is the point we talk about support, change to $[0,\infty)$?}, 
so $\dist = \times_{i=1}^m \dist_i$. %\mbc{I just realized that we claim that the optimal revenue is obtained. This does not necessarily happen if allowing an infinite support. One option is to assume bounded support. Alternatively, we can just assume that there is some Myerson price (price that obtains the optimal revenue)}
We assume (wlog) that $\dist_i$ {is non-trivial ($F_i[v_i=0]<1$). %, has bounded support and 
{Let $\rev_{F_i}(\cdot)$ be the revenue function for $F_i$: the revenue by pricing at price $x$ is $\rev_{F_i}(x)= x \cdot (1-\dist_i(x))$. \footnote{Where we write $\dist_i(x) = \Pr_{v_i \sim \dist_i}[v_i < x]$.} When clear from the context we use $\rev_{i}(\cdot)$ to denote $\rev_{\dist_i}(\cdot)$.
We assume that $\dist_i$ has a maximal revenue-maximizing (Myerson) price. 
That is, there exist a price $r_i$ such that $\rev_{F_i}(r_i)\geq \rev_{F_i}(x)$ for every price $x$, 
and $\rev_{F_i}(r_i)> \rev_{F_i}(y)$ for every price $y>r_i$.}  
} 
The realized value profile of the buyer  $\vec{v} = (v_1, \cdots, v_m)$ is drawn from  the prior product distribution $\dist = \times_{i=1}^m \dist_i$. Only the buyer knows the realized values.

In this market there are multiple sellers that supply the desired items, {all with $0$ cost of production}. 
There is one seller that we call the \emph{principal seller}
who can supply any subset of the $m$ items.
In addition to the principal seller, there are $m$ \emph{item sellers}, where item seller $i$ %, denoted by $i$ 
can supply a copy of item $i\in \items$ only (that is, an item that is completely identical to the $i$'th item supplied by the principal seller).  
The sellers have no value for their items, only value for money received from the buyer.
All agents are assumed to be risk neutral, with quasi-linear utility functions. Each seller aims to maximize their revenue obtained.  The buyer aims to maximize their expected utility, defined as the difference between their valuation for the acquired items and the total price paid. %  \mbe{The buyer can both acquire some bundle of items from the principal seller, as well as acquiring some items from item sellers.}

\paragraph{{Market Timing}}
Informally, our market operates as follows.  First, the principal seller chooses a deterministic menu of offers, and commits to it. Then, each item seller simultaneously picks her own deterministic menu for selling her item (equivalently, picks a price for her item). 
Finally, the buyer
selects a subset of items to buy from the principal seller and a subset of items to buy from the item sellers, either of which might be empty. 
More formally, the timing of our market 
is as follows:

\begin{enumerate}
 \item The principal seller first chooses and commits to a deterministic menu of prices, one for each subset of items $T \subseteq \items$.  We will write $p \colon 2^\items \to \mathbb{R}_{\geq 0}$ 
 to denote the price menu selected by the principal seller, so that $p(T)$ is the price assigned to subset $T \subseteq \items$.\footnote{As is standard, it will sometime be convenient to think about a partial menu that does not explicitly list all subsets as menu entries (e.g., only prices the grand bundle). In this case, we assume free disposal so that the price of a set $T$ is the cheapest price of all $S$ such that $T\subseteq S$ in the menu, and infinite if there is no such set $S$.
 } 
{%I suggest we replace the OLD below by ``
 The principal has to offer the option of buying nothing and paying zero ($p(\emptyset) = 0$). %"
 }  
 {We denote by $\mathcal{P}$ the space of all such deterministic price menus. }
    \item Next, {knowing the menu $p$ picked by the principal seller,}
    the item sellers simultaneously select prices for their corresponding items.  Each item seller $i$ chooses a single non-negative price $q_i \geq 0$ at which to offer item $i$.
    \item Finally, {the buyer value profile $(v_1,\ldots, v_M)\sim \times_i \dist_i$ is realized, and} the buyer purchases items from the principal and/or some item sellers.  This involves choosing a {single} subset $T \subseteq \items$ of items from the menu offered by the principal {and paying $p(T)$ to the principal}, as well as choosing a subset $U \subseteq \items$ of items from individual item sellers, {paying $q_i$ to each item seller $i\in U$}. 
    The buyer acts as a price-taker and will always make purchase decisions to maximize their own utility.  In case of indifference, we assume that the buyer breaks ties in favor of minimizing the revenue of the principal seller,\footnote{{As item sellers price their items after the principal seller, they can always slightly reduce their price to win any tie.}} and then in favor of maximizing the number of items purchased from the item sellers.  %\blcomment{Todo:  comment about why we chose this tie-breaking rule, maybe in a footnote.  I.e., forward-pointer to where and how we use this in proofs.}\mbc{added a footnote. Do we want also a forward-pointer?}\blcomment{Nah.}
\end{enumerate}
Thus, for menu $p$ and item sellers prices $\vec{q}$,
the buyer's utility when her  values for the items are $\vec{v}$ is $\sum_{i \in T \cup U} v_i - p(T) - \sum_{i \in U}q_i$.  The principal seller obtains revenue $p(T)$, and each item seller $i$ obtains either revenue $q_i$ if $i \in U$, and $0$ otherwise. %or revenue $0$ if $i \not\in U$.

\paragraph{{The Game and Equilibrium Concepts}} 

Given a value distribution $\dist$ for the buyer, 
a menu $p\in \mathcal{P}$ picked by the principal seller together with the buyer behaviour as specified above, induce a simultaneous-move game 
{$G_{p,\dist}$} among the item sellers.
{A mixed strategy $s_i$ of item seller $i$ is a distribution over prices $q_i$.  {We will assume for technical convenience that item prices are constrained to be multiples of some arbitrarily and sufficiently 
small increment $\epsilon > 0$ (and that $q_i \leq 1$ for each $i$, which is without loss of generality since $v_i \leq 1$)}. %with support in $[0,r_i]$. 
Note that item sellers pick strategies in the game {$G_{p,\dist}$}, thus the strategy may depend on $p$. }
The space of mixed strategies of item seller $i$ is denoted by $S_i$. %MB: observe that this space is independent of $p$. 
Given $p$ and a profile of mixed strategies $\vec{s}=(s_1,s_2,\ldots, s_m)$, we denote the (expected) utility of item seller ${i}$ in the game $G_{p,\dist}$ 
by $u_{i, \dist}({p}, \vec{s})$ (where the expectation is over the distribution of values $\vec{v}\sim \dist$ and distribution of item seller prices, $q_i\sim s_i$ for every $i$). 
When $\dist$ is clear from the context we sometimes use  $u_{i}({p}, \vec{s})$ to the utility $u_{i, \dist}({p}, \vec{s})$.

We are now ready to define equilibria of the game $G_{p,\dist}$.

\begin{definition}
Consider a market with value distribution $\dist=\times_i \dist_i$. 
For any % $s_{\princi}$ 
{principal menu $p \in {\mathcal{P}}$}, a  %pure 
(mixed) 
\textbf{Nash Equilibrium} of the induced game 
{$G_{p,\dist}$} is a profile of %pure (mixed)
strategies $(s_{1}^*, s_{2}^*, \dots, s_{m}^*) \in  \times_{i=1}^m S_{i}$ among item sellers  such that for every item seller $i$ and for all $s_{i}\in S_{i}$ %(where the support of every strategy of $i$ is in $[0,r_i]$), 
we have:
$$
u_{i, \dist}( {p}, (s_{i}^*, s_{{-i}}^*)) \geq u_{i, \dist}( {p}, (s_{i}, s_{{-i}}^*)),
$$
where $s_{{-i}}^*$ represents the strategy profile of all item sellers other than item seller $i$.
\end{definition}

We first briefly comment on the existence of Nash equilibria in $G_{p, F}$. For any $\dist$ and $p$, consider a game $G'_{p, \dist}$, which is identical to $G_{p, \dist}$, except that the pricing space for each item seller $i$ is restricted to $[0, r_i]$. Since we assume prices are multiples of some arbitrarily small increment, the game $G'_{p, \dist}$ is finite and thus at least one mixed Nash equilibrium is guaranteed to exist. We will soon see in \Cref{sec:upperbound} (in particular \Cref{lem:sellerUtility2}) that for any item seller $i$, pricing above $r_i$ is a weakly dominated strategy, so any deviation to a price above $r_i$ is not beneficial. As a result, a mixed Nash equilibrium in $G'_{p, \dist}$ is also a mixed Nash equilibrium in $G_{p, \dist}$. 

Given $\dist=\times_i \dist_i$, a menu $p$ and a profile of mixed strategies $\vec{s}=(s_1,s_2,\ldots, s_m)$, the \emph{revenue (utility) of the principal seller from $(p,\vec{s})$} in the game $G_{p,\dist}$ is defined to be the expected payment to the principal with menu $p$ when $\vec{v}\sim \dist$, and $q_i\sim s_i$ for every item seller $i$ (given the buyer behaviour). 
The \emph{revenue of the principal seller from menu $p$} is defined to be the lowest revenue, over all Nash equilibrium profiles $\vec{s}$, of the principal seller from $(p,\vec{s})$ in the game $G_{p,\dist}$.
The principal seeks to find $p$ with high revenue from menu $p$.

\section{An Upper Bound on the Principal's Revenue} \label{sec:upperbound}
In this section we 
analyze
the behavior of the agents in our model. We will discuss both the buyer utility and the item sellers' utilities, and consequently, structures in their decision making processes. Using these observations, we will present a benchmark that upper bounds the {revenue the principal obtains by any menu she picks, in any mixed Nash equilibrium of the induced game between the item sellers.} {Missing proofs in this section can be found in \Cref{app:sec:upperbound}.}

\subsection{Understanding the Buyer's Optimal Decision} \label{subsec:buyerDecision}

We begin by considering the purchase decision of a buyer with value profile $\vec{v}$ for the items, given the pricing decisions $(p,\vec{q})$ of all sellers. In our analysis, we will represent the buyer's %satisfaction or
utility with different purchase options using the following notation: 

{
\textbf{Utility from the Principal}: 
{We use $\ub_{P, \vec{v}}(T, p)$ to denote the maximum utility the buyer can attain by purchasing any subset $S\subseteq T$ of items from the principal seller, when  the buyer's realized values are $\vec{v}$ and the principal's price menu $p$. That is:}
\[ \ub_{P, \vec{v}}(T, p) = \max_{S \subseteq T} \left( \sum_{i \in S}v_i - p(S)\right) . \]
}

\textbf{Utility from Item Sellers}: 
{
We use $\ub_{I, \vec{v}}(T, \vec{q})$ to denote the maximum utility the buyer can attain by purchasing any subset $S\subseteq T$ of items from the item sellers, when  the buyer's realized values are $\vec{v}$ and the item sellers price vector is $\vec{q}$. That is:}
\[ \ub_{I, \vec{v}}(T, \vec{q}) = \max_{S \subseteq T} \sum_{i \in S}(v_i - q_i). \]
{Clearly, items that are priced higher than their value will not be acquired.}

\textbf{Utility of Best Response Considering Seller Strategies}: {Fix some realized values $\vec{v}$ of the buyer, a price menu $p$ of the principal seller, and a vector of (possibly randomized) strategies $(s_1, \cdots, s_n)$ of the item sellers.
The buyer's expected utility from their utility-maximizing purchasing decision will be denoted by $\ub_{\vec{v}}(p, s_1, \cdots, s_n)$ and is computed as follows:
}
\begin{align}
\label{eq:buyer.max}
\ub_{\vec{v}}(p, s_1, \cdots, s_n) = \E_{\vec{q} \sim \vec{s}}\left[\max_{T \subseteq \items} \left\{\ub_{P,\vec{v}}(T,p) + \ub_{I,\vec{v}}(\items\backslash T,\vec{q})\right\}\right].
\end{align}

It will also be useful to introduce notation for the difference in utility from choosing between the principal and the item sellers for the same set of items.
Namely, we define
{
\begin{align} 
\label{eq:diff}
\diff_{\vec{v}}(T, p, \vec{q}) = \ub_{P, \vec{v}}(T, p) - \ub_{I, \vec{v}}(T, \vec{q}).
\end{align}
}

An immediate implication of \eqref{eq:buyer.max} is that given $\vec{v}$, $p$, and $\vec{q}$, the set $T$ of items that the buyer purchases from the principal is a maximizer of $\diff_{\vec{v}}(T,p,\vec{q})$.  The following lemma uses this observation to characterize how a change in the valuation profile $\vec{v}$ can influence the set of items purchased from the principal; this will be useful later when analyzing the principal's optimal revenue.
\begin{restatable}{claim}{claimValueChange}
\label{claim:value-change}
    Fix any $p$ and any $\vec{q}$.  Suppose that $\vec{v}$ and $\vec{v}'$ differ only on a subset $S \subseteq M$ of items, and let $T$ and $T'$ denote the items purchased from the principal under valuations $\vec{v}$ and $\vec{v}'$ respectively.  Then if $T \cap S = T' \cap S$ then $T = T'$. %\mbc{I think we can consider moving the proof to the appendix.}
\end{restatable}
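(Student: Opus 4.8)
The plan is to reformulate the buyer's purchasing decision as the maximization of a single set function, and then observe that changing $\vec{v}$ only on $S$ perturbs this function in a way that depends on the chosen set only through its intersection with $S$. One might try to argue directly from the observation, noted above, that the purchased set $T$ maximizes $\diff_{\vec{v}}(\cdot,p,\vec{q})$; but $\diff_{\vec{v}}$ contains an inner maximization over subsets (because the principal's menu is non-additive), which obstructs the perturbation argument, so the first step is to peel that off. If the buyer acquires set $A$ from the principal and a disjoint set $B$ from the item sellers, her utility is $\sum_{i \in A} v_i - p(A) + \sum_{i \in B}(v_i - q_i)$, where $p(A)$ is the free-disposal price of $A$. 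Optimizing over $B$ for fixed $A$ sets $B = \{i \notin A : v_i > q_i\}$ and makes her total utility equal to $\big( \sum_{i : v_i > q_i}(v_i - q_i) \big) + \psi_{\vec{v}}(A)$, where
\[ \psi_{\vec{v}}(U) \;:=\; \sum_{i \in U} \min\{v_i, q_i\} \;-\; p(U) . \]
Since the first summand is independent of $A$, the set the buyer purchases from the principal is a maximizer of $\psi_{\vec{v}}$, with ties broken first by the smaller principal revenue $p(A)$ and then by the larger count $|\{i \notin A : v_i > q_i\}|$ of items bought from item sellers. So $T$ is the tie-broken maximizer of $\psi_{\vec{v}}$ and $T'$ is the tie-broken maximizer of $\psi_{\vec{v}'}$.

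Next, because $v_i = v_i'$ for $i \notin S$ and $p(\cdot)$ does not depend on the valuation,
\[ \psi_{\vec{v}}(U) - \psi_{\vec{v}'}(U) \;=\; \sum_{i \in U \cap S} \big( \min\{v_i, q_i\} - \min\{v_i', q_i\} \big) , \]
which depends on $U$ only through $U \cap S$. Applying this with $U = T$ and with $U = T'$, which by hypothesis satisfy $T \cap S = T' \cap S$, and subtracting, we get $\psi_{\vec{v}}(T) - \psi_{\vec{v}}(T') = \psi_{\vec{v}'}(T) - \psi_{\vec{v}'}(T')$. The left-hand side is non-negative since $T$ maximizes $\psi_{\vec{v}}$, and the right-hand side is non-positive since $T'$ maximizes $\psi_{\vec{v}'}$, so both are zero; in particular $T'$ is also a maximizer of $\psi_{\vec{v}}$ and $T$ is also a maximizer of $\psi_{\vec{v}'}$.

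It remains to resolve the tie-break. Since under valuations $\vec{v}$ the buyer selects $T$ even though the equally optimal $T'$ is available, the tie-breaking rule ranks $T$ at least as high as $T'$: thus $p(T) \le p(T')$, and in case of equality $|\{i \notin T : v_i > q_i\}| \ge |\{i \notin T' : v_i > q_i\}|$. By symmetry, under $\vec{v}'$ the rule ranks $T'$ at least as high as $T$, yielding the reverse inequalities with $\vec{v}'$ in place of $\vec{v}$. The price inequalities force $p(T) = p(T')$. For the counts, split each count into its contribution from $R := T \cap S = T' \cap S$ and its contribution from $\items \setminus S$: for a fixed valuation the former is a single number shared by $T$ and $T'$, while the latter is the same under $\vec{v}$ and $\vec{v}'$ since the two valuations agree on $\items \setminus S$; the two count inequalities then pin the $\items \setminus S$ contributions of $T$ and $T'$ to a common value. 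Hence the two explicit tie-break criteria take identical values on $T$ and $T'$, and since the final tie-break among sets that agree on both criteria depends only on the sets themselves, it cannot favor $T$ over $T'$ under $\vec{v}$ while favoring $T'$ over $T$ under $\vec{v}'$; therefore $T = T'$.

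The main obstacle is the reformulation step: one must correctly collapse the bundle-pricing buyer problem --- whose non-additive prices are precisely what make the problem interesting --- to the maximization of the ``additive-looking'' objective $\psi_{\vec{v}}$, with $p$ evaluated at the full chosen set, and verify that the model's tie-breaking rule transfers cleanly to this reformulation. Once this is done, the localization identity and the short optimality-plus-tie-break argument are essentially immediate; the only extra care needed is that the second tie-break criterion nominally depends on the valuation, so one must check that, on sets sharing the same intersection with $S$, it in fact depends only on the coordinates where $\vec{v}$ and $\vec{v}'$ agree.
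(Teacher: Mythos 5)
Your argument is correct and follows the same core strategy as the paper's proof: express the buyer's choice as maximizing a function of the purchased set, observe that changing the valuation only on $S$ shifts this objective by an amount depending on the chosen set only through its intersection with $S$, and combine this with optimality and tie-breaking to force $T = T'$. You take a bit more care than the paper in two places: by explicitly peeling off the inner maximizations to arrive at $\psi_{\vec{v}}(U) = \sum_{i \in U}\min\{v_i,q_i\} - p(U)$ you avoid the slight imprecision in the paper's claimed identity $\diff_{\vec{v}}(T)-\diff_{\vec{v}'}(T)=\sum_{i\in T\cap S}(v_i-v'_i)$ (the correct shift is $\sum_{i\in T\cap S}(\min(v_i,q_i)-\min(v'_i,q_i))$, which still depends only on $T\cap S$, so the paper's conclusion stands); and you spell out why the model's tie-breaking rule, whose second criterion nominally depends on $\vec{v}$, nonetheless ranks $T$ against $T'$ identically under $\vec{v}$ and $\vec{v}'$ when $T\cap S=T'\cap S$, a point the paper dispatches with the unelaborated phrase ``we break ties consistently.''
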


\subsection{Item Seller Best-Responses} \label{subsec:itemSeller}
{We next turn to the pricing problem faced by an item seller in the game defined by the principal's chosen menu $p$.  We first consider the best response of an item seller $i$ {given a deterministic profile of prices $q_{-i}$}  %given the prices $q_{-i}$ 
selected by the other item sellers.  We will show (in Lemma~\ref{lem:sellerUtility2}) that the best response problem for $i$ can be viewed as a monopolist's pricing problem, but with revenue reduced by a factor that depends on the selected price, with higher prices leading to steeper reductions.

{We begin by characterizing the ways in which a change in one item seller's price can influence the set of items purchased from the principal seller.}
\begin{restatable}{claim}{claimSameMaxMenu}
\label{claim:sameMaxMenu}
Fix some principal menu $p$, item seller $i$, the prices of the other item sellers $q_{-i}$, 
and the realized buyer valuation $\vec v$. Then there exist sets $T_i \ni i$ and $T_{\neg i} \not\ni i$ and a threshold $\theta_i \geq 0$ such that, if $q_i$ is the price chosen by agent $i$, {then the set that the buyer purchases from the principal is $T_i$ if $q_i \leq \theta_i$, and is $T_{\neg i}$ if $q_i > \theta_i$.  } 
\end{restatable}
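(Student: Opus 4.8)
The plan is to lean on the observation recorded immediately after~\eqref{eq:diff}: for a fixed valuation profile $\vec v$ and seller prices $(p,\vec q)$, the set the buyer purchases from the principal is a maximizer of $T\mapsto\diff_{\vec v}(T,p,\vec q)$ over $T\subseteq\items$, ties resolved by the buyer's stated rule. Fixing $p$, $i$, $q_{-i}$ and $\vec v$, I would track how this maximizer behaves as the single scalar $q_i$ varies over the allowed prices.

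\emph{Step 1 (isolate the dependence on $q_i$).} Since $\ub_{P,\vec v}(T,p)$ does not involve $\vec q$ at all, while $\ub_{I,\vec v}(T,\vec q)=\sum_{j\in T}\max\{v_j-q_j,0\}$ involves $q_i$ only through the summand $\max\{v_i-q_i,0\}$ and only when $i\in T$, we may write $\diff_{\vec v}(T,p,\vec q)=d_T$, a constant in $q_i$, for each $T\not\ni i$; and $\diff_{\vec v}(T,p,\vec q)=c_T-\max\{v_i-q_i,0\}$, with $c_T$ a constant in $q_i$, for each $T\ni i$. The crucial point is that the subtracted term $\max\{v_i-q_i,0\}$ is the same for \emph{every} set containing $i$. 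Consequently, regardless of $q_i$, the maximizer of $\diff_{\vec v}(\cdot,p,\vec q)$ among sets containing $i$ is the fixed set $T_i:=\argmax_{T\ni i}c_T$, and the maximizer among sets not containing $i$ is the fixed set $T_{\neg i}:=\argmax_{T\not\ni i}d_T$ (internal ties broken by the buyer's rule). Hence for every $q_i$, the buyer's purchase from the principal is one of the two $q_i$-independent sets $T_i\ni i$ or $T_{\neg i}\not\ni i$.

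\emph{Step 2 (threshold structure).} The buyer picks $T_i$ exactly when $c_{T_i}-\max\{v_i-q_i,0\}>d_{T_{\neg i}}$, picks $T_{\neg i}$ when the inequality is reversed, and applies the secondary tie-breaking rule (lower principal revenue, then more items bought from item sellers) at equality. Because $q_i\mapsto\max\{v_i-q_i,0\}$ is continuous, piecewise linear with a single breakpoint, and monotone, the buyer's choice between the two fixed sets $T_i$ and $T_{\neg i}$ switches at most once as $q_i$ ranges over $[0,\infty)$; writing $\theta_i\ge0$ for the switch point, the two regimes are as in the statement, namely the buyer purchases $T_i$ when $q_i\le\theta_i$ and $T_{\neg i}$ when $q_i>\theta_i$ (with the boundary value assigned by the tie-breaking rule). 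One should also handle the degenerate cases in which one of $T_i,T_{\neg i}$ is preferred for all $q_i\ge0$, where $\theta_i$ equals $0$ or lies above the allowed price range.

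I expect the only real effort to be bookkeeping rather than any deep idea: checking that $T_i$ and $T_{\neg i}$ are genuinely well-defined single maximizers once the buyer's two-layer tie-breaking rule is imposed (the secondary criterion could a priori move with $q_i$, through whether item $i$ is itself bought from its seller, so one must verify that this does not destroy the single-switch conclusion), and then confirming that the induced selection really is a single threshold in $q_i$, not a union of intervals --- which is exactly what the monotonicity of $q_i\mapsto\max\{v_i-q_i,0\}$ delivers. The degenerate cases and the placement of the boundary tie are routine but should be spelled out so that $\theta_i$ is unambiguous.
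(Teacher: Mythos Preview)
Your approach is essentially the same as the paper's: both isolate the $q_i$-dependence of $\diff_{\vec v}(T,p,\vec q)$ to show it shifts by the common term $(v_i-q_i)_+$ for every $T\ni i$ and is constant for every $T\not\ni i$, yielding fixed candidates $T_i,T_{\neg i}$ independent of $q_i$, and then invoke the monotonicity of $q_i\mapsto(v_i-q_i)_+$ to obtain the single threshold. The only stylistic difference is that the paper pins down $T_i$ concretely as $T(0)$ (the set chosen at $q_i=0$, using the tie-breaking rule to argue $i\in T(0)$), whereas you define $T_i$ abstractly as the argmax among sets containing $i$ and defer the tie-breaking verification to the bookkeeping you flag at the end.
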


An immediate corollary of Claim~\ref{claim:sameMaxMenu} is that the probability of sale for an item seller is weakly decreasing in the choice of price $q_i$.  This is true even if we fix the buyer's realized values $\vec{v}$, and in particular even if we condition on the buyer's value for item $i$ being higher than the chosen price $q_i$. This is because a higher price from item seller $i$ will increase the attractiveness of purchasing a set that includes item $i$ from the principal.  To state this formally, 
we will write $\ind_{i, \vec{v}}(p, \vec{q})$ for the indicator variable of sale for item seller $i$, given valuations $\vec{v}$, principal menu $p$, and item seller prices $\vec{q}$.

\begin{restatable}{proposition}{claimProbMonotone} \label{claim:probMonotone}
    Fix any item seller $i$ and prices $q'_i > q_i$ for seller $i$.  Then for any principal menu $p$, other item sellers' prices $q_{-i}$, and buyer values $\vec{v}$, it holds that $\ind_{i, \vec{v}}(p, (q_i, q_{-i})) \geq \ind_{i, \vec{v}}(p,(q'_i, q_{-i}))$.   
\end{restatable}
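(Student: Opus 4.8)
The plan is to deduce the monotonicity of the sale indicator directly from the structural characterization in Claim~\ref{claim:sameMaxMenu}. First I would fix the item seller $i$, the principal menu $p$, the other item sellers' prices $q_{-i}$, and the buyer valuation $\vec{v}$, and invoke Claim~\ref{claim:sameMaxMenu} to obtain the sets $T_i \ni i$ and $T_{\neg i} \not\ni i$ together with the threshold $\theta_i \ge 0$, so that the buyer purchases $T_i$ from the principal whenever $q_i \le \theta_i$ and purchases $T_{\neg i}$ whenever $q_i > \theta_i$. Note crucially that these objects depend only on $p$, $q_{-i}$, and $\vec v$, not on $q_i$ itself.

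Next I would split into the two relevant cases determined by where $q'_i$ falls. If $q'_i \le \theta_i$, then since $q_i < q'_i \le \theta_i$ as well, the buyer purchases $T_i$ from the principal in both scenarios; what remains is to argue that the set $U$ of items bought from the item sellers, which the buyer chooses to be a utility-maximizing subset of $\items \setminus T_i$ with respect to the prices in force, has the same membership for item $i$ in both cases. But $i \in T_i$, so $i \notin \items \setminus T_i$, meaning item $i$ is simply not available to purchase from item seller $i$ in either scenario (the buyer has already committed to buying $i$ from the principal). Hence $\ind_{i,\vec v}(p,(q_i,q_{-i})) = \ind_{i,\vec v}(p,(q'_i,q_{-i})) = 0$ and the inequality holds with equality. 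If instead $q'_i > \theta_i$, then under price $q'_i$ the buyer purchases $T_{\neg i} \not\ni i$ from the principal, so item $i$ is available from item seller $i$; whether the buyer actually buys it is irrelevant because $\ind_{i,\vec v}(p,(q'_i,q_{-i})) \le 1 $, and we only need to compare against $\ind_{i,\vec v}(p,(q_i,q_{-i})) \in \{0,1\}$. The only way the inequality could fail is if $\ind_{i,\vec v}(p,(q_i,q_{-i})) = 0$ while $\ind_{i,\vec v}(p,(q'_i,q_{-i})) = 1$; I would rule this out by a short exchange argument: the buyer buying item $i$ from seller $i$ at the higher price $q'_i$ contributes $v_i - q'_i$ to utility, and if this is part of an optimal bundle at price $q'_i$ then $v_i - q'_i \ge 0$, so $v_i - q_i > v_i - q'_i \ge 0$, and the buyer facing the lower price $q_i$ can replicate the same purchasing plan (same set from the principal, same set from the item sellers) with weakly — in fact strictly, on item $i$ — higher utility, so buying $i$ from seller $i$ is still optimal (and the tie-breaking rule favoring more item-seller purchases ensures it is actually selected). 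Hence $\ind_{i,\vec v}(p,(q_i,q_{-i})) = 1$, a contradiction.

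The main subtlety — and the step I would be most careful about — is handling the interaction between the principal's purchased set and the item sellers' purchased set correctly, in particular making sure that lowering $q_i$ cannot cause the buyer to switch the principal-bundle in a way that removes item $i$ from the item-seller market. This is exactly what Claim~\ref{claim:sameMaxMenu} rules out: the principal-set as a function of $q_i$ is a step function that goes from $T_i$ (containing $i$) at low prices to $T_{\neg i}$ (not containing $i$) at high prices, so lowering $q_i$ can only move us from the "item $i$ available from seller $i$" regime toward the "bought from the principal" regime, never the reverse. Once this monotone structure is in hand, the remaining case analysis is routine, and the tie-breaking conventions (favoring more purchases from item sellers) only help, since they make the buyer more rather than less inclined to buy from seller $i$ at the lower price.
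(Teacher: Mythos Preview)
Your proof is correct, but it takes a different route from the paper's. The paper does not split on $\theta_i$ or use an exchange argument; instead, after invoking \Cref{claim:sameMaxMenu}, it observes that for $q_i>0$ the buyer purchases from seller $i$ precisely when item $i$ is not in the principal-purchased set \emph{and} $v_i\ge q_i$, and then reads off that $\ind_{i,\vec v}(p,\vec q)$ is a single step function of $q_i$ (with threshold essentially $\max\{v_i,\theta_i\}$), from which monotonicity is immediate. The boundary case $q_i=0$ is handled separately via the tie-breaking rule.

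Your exchange argument is in fact a self-contained proof of the whole proposition and does not need the case split on $\theta_i$ at all: the implication ``$\ind(q'_i)=1\Rightarrow\ind(q_i)=1$'' that you establish in Case~2 is exactly the contrapositive of the claim, and it uses nothing about where $q'_i$ sits relative to $\theta_i$. One step you leave implicit is worth stating: showing that the replicated plan has strictly higher utility at $q_i$ than at $q'_i$ is not by itself enough to conclude that buying from seller $i$ remains optimal at $q_i$ --- you also need the (easy) observation that any plan \emph{not} purchasing from seller $i$ has utility independent of the price $q_i$, and hence cannot overtake the replicated plan at $q_i$ if it did not already beat it at $q'_i$. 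With that line added your exchange argument is complete, and Case~1 (along with the appeal to \Cref{claim:sameMaxMenu}) becomes unnecessary scaffolding.
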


Our goal now is to analyze the revenue maximization problem for item seller $i$. 
For an item seller $i$, their expected revenue from choosing a price $q_i$ is simply $q_i$ times the probability of sale. 
We will use $\us_{i,\dist}(p, \vec{s})$ to denote the utility of item seller $i$, given the price menu $p$ of the principal, {the value distribution $\dist=\times_i \dist_i$ for the buyer,} and the (possibly mixed) strategies {$\vec{s}$} of all item sellers.
When $\dist$ is clear from the context we sometimes use $\us_{i}(p, \vec{s})$ to denote $\us_{i,\dist}(p, \vec{s})$.
Then, recalling that $\ind_{i, \vec{v}}(p, \vec{q})$ is the indicator variable for sale by item seller $i$, the following lemma relates the item seller's utility to the revenue of a monopolist seller.

\begin{lemma} 
\label{lem:sellerUtility2} For any menu $p$ and any profile of mixed strategies $\vec{s}$ it holds that:
    \begin{align}
    \us_{i} (p, \vec{s}) = 
    {\E_{q_i \sim s_i}\sq*{ \rev_{\dist_i}(q_i)  \cdot \E_{v_{-i} \sim \dist_{-i}, q_{-i} \sim s_{-i}} \sq*{ \ind_{i, (\infty, v_{-i})}(p, \vec{q})}}}
    \label{eq:sellerUtility2}
\end{align}
\end{lemma}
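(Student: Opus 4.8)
The plan is to reduce item seller $i$'s expected revenue to a per-price quantity and show that, conditioned on seller $i$'s own price $q_i$ and on the realized data $(q_{-i},v_{-i})$ of everyone else, the probability that seller $i$ makes a sale equals $(1-\dist_i(q_i))$ times an indicator that does not depend on $v_i$ at all. By definition, seller $i$'s expected revenue is $\us_i(p,\vec s) = \E_{q_i\sim s_i,\,q_{-i}\sim s_{-i},\,\vec v\sim\dist}\sq{q_i\cdot\ind_{i,\vec v}(p,\vec q)}$, so it suffices to prove, for every fixed $p$, $q_i$, $q_{-i}$, $v_{-i}$, the identity
\[ \E_{v_i\sim\dist_i}\sq{\ind_{i,(v_i,v_{-i})}(p,\vec q)} \;=\; (1-\dist_i(q_i))\cdot\ind_{i,(\infty,v_{-i})}(p,\vec q). \]
Granting this, multiplying through by $q_i$ turns the left side into the $v_i$-integrand of $\us_i$ and the right side into $\rev_{\dist_i}(q_i)\cdot\ind_{i,(\infty,v_{-i})}(p,\vec q)$; since $\rev_{\dist_i}(q_i)=q_i(1-\dist_i(q_i))$ depends on $q_i$ alone, Fubini lets us pull it out of the expectation over $(q_{-i},v_{-i})$, yielding exactly \eqref{eq:sellerUtility2}.

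So the whole content is the displayed identity. Fix $p,q_i,q_{-i},v_{-i}$ and regard $\ind_{i,(v_i,v_{-i})}(p,\vec q)$ as a function of $v_i$. If $v_i<q_i$ then buying item $i$ from seller $i$ strictly hurts the buyer, so the indicator is $0$; hence $\E_{v_i}[\ind_{i,(v_i,v_{-i})}]$ equals the integral of $\ind_{i,(v_i,v_{-i})}$ over $\{v_i\ge q_i\}$ only. I claim that on $\{v_i\ge q_i\}$ the function $\ind_{i,(v_i,v_{-i})}(p,\vec q)$ is constant; calling that constant $c$ gives $\E_{v_i}[\ind_{i,(v_i,v_{-i})}]=\Pr[v_i\ge q_i]\cdot c=(1-\dist_i(q_i))\cdot c$, and identifying $c$ with the value at $v_i\to\infty$ closes the identity. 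To prove the claim, note first that for $v_i\ge q_i$ seller $i$ makes a sale if and only if the buyer does not take item $i$ in her bundle from the principal: if item $i$ is not in that bundle, buying it from seller $i$ is weakly utility-improving (as $v_i-q_i\ge0$), and at $v_i=q_i$ her tie-breaking in favour of maximizing items bought from item sellers makes her do so. Now observe that raising $v_i$ by any $\delta>0$ while keeping $v_i\ge q_i$ changes the buyer's utility by exactly $+\delta$ for \emph{every} purchase plan: if the plan acquires item $i$ from the principal, $v_i$ enters its utility additively; if it does not, then item $i$ is acquired from seller $i$ and $v_i$ again enters additively through $v_i-q_i$. Thus all of the buyer's achievable utilities shift by a common constant as $v_i$ ranges over $[q_i,\infty)$, so her utility-maximizing plan --- together with her tie-breaking among optima, which depends only on prices and item counts, not on $v_i$ --- is the same throughout this range. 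In particular, whether item $i$ is in her principal-bundle, hence $\ind_{i,(v_i,v_{-i})}(p,\vec q)$, is constant for $v_i\ge q_i$; letting $v_i$ grow without bound (which is the meaning of the formal symbol $\infty$, harmless since there are finitely many subsets and all prices are bounded) identifies this constant with $\ind_{i,(\infty,v_{-i})}(p,\vec q)$.

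The only real obstacle is this invariance step; the rest is bookkeeping, the one subtlety being to phrase it so that both of the buyer's tie-breaking levels (principal-versus-seller for item $i$, and the global indifference rule) are respected. An alternative derivation could instead invoke \Cref{claim:value-change} with $S=\{i\}$ --- which already says the buyer's principal-bundle is pinned down by whether it contains item $i$ --- together with a monotonicity argument in the spirit of \Cref{claim:probMonotone} to obtain the threshold structure in $v_i$; but the ``common shift'' observation above seems to give the cleanest self-contained proof. To conclude, one substitutes the identity back into $\us_i(p,\vec s)=\E\sq{q_i\cdot\ind_{i,\vec v}(p,\vec q)}$ and factors out $\rev_{\dist_i}(q_i)$ as described, obtaining \eqref{eq:sellerUtility2}.
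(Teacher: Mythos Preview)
Your proof is correct and structurally identical to the paper's: both factor $\us_i$ through $\Pr[v_i\ge q_i]$ and then argue that $\ind_{i,\vec v}(p,\vec q)$ is constant in $v_i$ on $\{v_i\ge q_i\}$. The execution of that constancy step differs. The paper routes through \Cref{claim:sameMaxMenu} to pin down the two candidate principal-bundles $T_i\ni i$ and $T_{\neg i}\not\ni i$, then computes $\diff_{\vec v}(T_i,p,\vec q)=\sum_{j\in T_i}\min(v_j,q_j)-p(T_i)$ explicitly to see it is independent of $v_i$ once $v_i\ge q_i$; your common-shift argument bypasses that lemma and works directly with the buyer's optimization, which is a bit more self-contained. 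One phrasing nit: your claim that the $+\delta$ shift holds for \emph{every} purchase plan is literally false for plans that acquire item $i$ from neither source; what actually makes the argument go through is that for $v_i\ge q_i$ any such plan is weakly dominated (add $i$ to $U$), so only plans that do acquire $i$ are candidates for the optimum, and among those the shift is indeed uniform---you should tighten the wording there.
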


\begin{proof}
From the definition of $\ind_{i, \vec{v}}(p, \vec{q})$, we have
\begin{align}
    \us_{i} (p, \vec{s}) &= 
    \E_{\vec{q} \sim \vec{s}, \vec{v} \sim \dist} \sq*{q_i \cdot \ind_{i, \vec{v}}(p, \vec{q})} %\\  &=     
    &= \E_{\vec{q} \sim \vec{s}, v_{-i} \sim \dist_{-i}} \sq*{q_i \cdot \Prob_{v_i \sim \dist_i} [v_i \geq q_i] \cdot \E_{v_i \sim \dist_i | {v_i}\geq q_i} [\ind_{i, \vec{v}}(p, \vec{q})]} \label{eq:sellerUtility}
\end{align}

We next notice that, taking $T_i$ and $T_{\neg i}$ as in the statement of Claim~\ref{claim:sameMaxMenu}, for any $v_i \geq q_i$ we have that $\ind_{i, \vec{v}}(p, \vec{q}) = 1$ if and only if the buyer does not choose option $T_i$ at price $q_i$, which recall occurs if and only if $\diff_{\vec{v}}(T_i, p, \vec{q}) \geq \diff_{\vec{v}}(T_{\neg i}, p, \vec{q})$ (from equation \eqref{eq:diff} and the discussion immediately following.)
But we notice that, as we change $v_i$, the value of $\diff_{(v_i, v_{-i})}(T_i, p, \vec{q})$ remains the same for any $v_i \geq q_i$.  This is because 
\begin{align*}
    \diff_{(v_i, v_{-i})}(T_i, p, \vec{q}) &= \ub_{P, \vec{v}}(T_i, p) - \ub_{I, \vec{v}}(T_i, \vec{q})\\
    &= \left(\sum_{j \in T_i} v_j - p(T_i)\right) - \left(\sum_{j \in T_i} v_j - \sum_{j \in T_i} \min(v_j,q_j) \right)\\
    &= \sum_{j \in T_i} \min(v_j,q_j) - p(T_i).
\end{align*}
We therefore conclude that $\ind_{i, \vec{v}}(p, \vec{q})$ is a constant for all $v_i \geq q_i$.  In particular, $\E_{v_i \sim \dist_i | \mbe{v_i}\geq q_i} [\ind_{i, \vec{v}}(p, \vec{q})] = \ind_{i, (\infty, v_{-i})}(p, \vec{q})$.  Plugging this into \eqref{eq:sellerUtility}, we conclude that
\begin{align*}
    \us_{i} (p, \vec{s}) &= \E_{\vec{q} \sim \vec{s}, v_{-i} \sim \dist_{-i}} \sq*{q_i \cdot \Prob_{v_i \sim \dist_i} [v_i \geq q_i] \cdot \ind_{i, (\infty, v_{-i})}(p, \vec{q})} \\
    &=\E_{q_i \sim s_i}\sq*{ \rev_{\dist_i}(q_i)  \cdot \E_{v_{-i} \sim \dist_{-i}, q_{-i} \sim s_{-i}} \sq*{ \ind_{i, (\infty, v_{-i})}(p, \vec{q})}}
\end{align*}
   as required.
\end{proof}

{An interpretation of \Cref{lem:sellerUtility2} is that the item seller's revenue when pricing 
{using strategy $s_i$ %at $q_i$ 
is equal to the expectation (over $q_i\sim s_i$) of the product of the item seller's monopolist Myerson revenue when pricing at $q_i$, reduced by the expectation of $\ind_{i, (\infty, v_{-i})}(p, \vec{q})$ (over $v_{-i}$ and $q_{-i}$) which depends on $q_i$ but not on $v_i$, {and is monotone non-increasing in $q_i$}.} 

\subsection{An Upper Bound on the Principal's Revenue}\label{subsec:upperbound}

In this section we present an upper bound on the principal revenue {for any menu and} in any Nash equilibrium. The upper bound is the expected welfare when each item value distribution is truncated at the {highest} Myerson price for that distribution.

\begin{definition}[Expected Truncated Social Welfare] \label{def:truncated}
    For any $i\in \items$, let $\dist_i$ be a value distribution for item $i$ and let $r_i$ be the maximal Myerson price for $\dist_i$.
    The buyer's \emph{expected truncated social welfare} when her item value vector $(v_1,\ldots, v_n)$ is sampled from $\times_i \dist_i$ 
    is defined to be $\sum_{i} \E_{v_i \sim \dist_i} [\min(r_i, v_i)]$. 
\end{definition}
We prove that the expected truncated  welfare upper bounds the revenue of the principal seller:

\begin{theorem}\label{thm:upper-bound}
Consider a market with value distribution $\dist=\times_i \dist_i$. 
For any menu $p$ of the principal seller, and for any mixed Nash equilibrium $\vec{s}$ in the game $G_{p,\dist}$, the expected revenue of principal seller under $(p,\vec{s})$ is at most the buyer's expected truncated social welfare $\sum_{i} \E_{v_i \sim \dist_i} [\min(r_i, v_i)]$. 
\end{theorem}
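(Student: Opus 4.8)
\medskip
\noindent\textbf{Proof idea.}
The plan is to bound the principal's payment \emph{pointwise} --- for each realized value profile $\vec v$ and each realized price profile $\vec q$ of the item sellers --- and only afterwards take expectations, invoking the equilibrium condition solely in order to keep each item seller's price below its reserve $r_i$. For the pointwise bound, fix the menu $p$ and a price profile $\vec q$, and for a realized $\vec v$ let $T^*$ be the set of items the buyer purchases from the principal, so that her payment is $p(T^*)$. I would first show that $p(T^*)\le\sum_{i\in T^*}\min(v_i,q_i)$. This is exactly buyer optimality against the deviation ``buy nothing from the principal, and buy every item $i$ with $v_i\ge q_i$ from its item seller'': writing out the two utilities and cancelling the common term $\sum_{i\notin T^*}\max(v_i-q_i,0)$ gives $\sum_{i\in T^*}v_i-p(T^*)\ge\sum_{i\in T^*}\max(v_i-q_i,0)$, which rearranges to the claimed inequality. (Equivalently, it is $\diff_{\vec v}(T^*,p,\vec q)\ge\diff_{\vec v}(\emptyset,p,\vec q)=0$ together with $\diff_{\vec v}(T^*,p,\vec q)\le\sum_{i\in T^*}\min(v_i,q_i)-p(T^*)$, both immediate from the characterization following \eqref{eq:buyer.max}, the definition \eqref{eq:diff}, and $p(\emptyset)=0$.) Taking expectations, the principal's revenue at $(p,\vec s)$ is at most $\sum_i\E_{\vec v,\vec q}\bigl[\min(v_i,q_i)\cdot\ind(i\in T^*)\bigr]$.

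It then remains to show that at any mixed NE $\vec s$ each item seller prices at most $r_i$ almost surely, for then $\min(v_i,q_i)\le\min(v_i,r_i)$ pointwise, so the $i$-th term above is at most $\E_{v_i\sim\dist_i}[\min(v_i,r_i)]$ and summing over $i$ yields the expected truncated social welfare. For the pricing claim, \Cref{lem:sellerUtility2} writes seller $i$'s utility from $s_i$ as $\E_{q_i\sim s_i}\bigl[\rev_i(q_i)\cdot\varphi_i(q_i)\bigr]$ with $\varphi_i(q_i):=\E_{v_{-i}\sim\dist_{-i},\,q_{-i}\sim s_{-i}}\bigl[\ind_{i,(\infty,v_{-i})}(p,\vec q)\bigr]$, and \Cref{claim:probMonotone} (applied pointwise and averaged) shows $\varphi_i$ is non-increasing in $q_i$. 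Hence for any $q_i>r_i$ we have $\rev_i(q_i)<\rev_i(r_i)$ (strict, by maximality of $r_i$) and $\varphi_i(q_i)\le\varphi_i(r_i)$; so if $\varphi_i(q_i)>0$ then $r_i$ strictly improves the product and $q_i$ is not a best response, and if $\varphi_i(q_i)=0$ the product is $0$, which --- outside the degenerate sub-case below --- is also not a best response. Thus no price above $r_i$ lies in the support of $s_i$.

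The one subtle point, and where I expect the actual work to lie, is the degenerate sub-case just flagged: a (weakly dominated) price $q_i>r_i$ could a priori persist in a NE if seller $i$'s best-response revenue equals $0$, i.e.\ $\rev_i(q)\varphi_i(q)=0$ for every $q$. This forces $\varphi_i(q)=0$ for every positive price in the support of $\dist_i$, and monotonicity together with the discreteness of the price grid then forces the ``residual value'' the principal leaves on item $i$ to be at most one increment $\epsilon$. I would resolve this with a short structural argument using \Cref{claim:sameMaxMenu} and \Cref{claim:value-change}: in this case, whenever the buyer does buy a bundle $S\ni i$ from the principal, the marginal price $p(S)-p(S\setminus\{i\})$ for $i$ is at most $\epsilon$, so such items can be ``peeled off for free'' and contribute at most $O(m\epsilon)$ overall --- negligible because $\epsilon$ is taken arbitrarily small. (Alternatively, one checks that lowering such a seller's price down to $r_i$ leaves the buyer's purchases from the principal unchanged in every realization, reducing to the non-degenerate case.) The main-line steps above are otherwise routine; it is this degenerate bookkeeping that I anticipate as the only real obstacle.
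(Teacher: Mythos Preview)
Your pointwise inequality $p(T^*)\le\sum_{i\in T^*}\min(v_i,q_i)$, obtained by comparing the buyer's chosen bundle $T^*$ against the deviation ``buy $\emptyset$ from the principal and shop \`a la carte,'' is correct and is a more elementary route than the paper's \Cref{lem:revenue.supremum.bound}. The paper instead truncates each $v_i$ to $q_i$ via \Cref{claim:value-change} and then bounds revenue by welfare; your single buyer-deviation argument gets the same conclusion in one line. Combined with the (correct) observation from \Cref{lem:sellerUtility2} and \Cref{claim:probMonotone} that any best-responding item seller with strictly positive equilibrium revenue prices in $[0,r_i]$, this cleanly finishes the non-degenerate case.

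The gap is exactly where you flag it: the degenerate sellers $i\in Z$ with equilibrium revenue $0$, who may price above $r_i$. Neither of your two sketches closes this. For option~(a), you do get $p(T^*)-p(T^*\setminus\{i\})\le\epsilon$ for a \emph{single} $i\in Z\cap T^*$ (by comparing $T^*=T_i$ against $T_i\setminus\{i\}$ at the deviation price $q_i'=\epsilon$, which is legitimate because \Cref{claim:sameMaxMenu} guarantees the buyer still selects $T_i$ at any $q_i'\le\theta_i$). But the telescoping step ``peel off all of $Z$ for total cost $O(m\epsilon)$'' does not follow: the next marginal $p(T^*\setminus\{i\})-p(T^*\setminus\{i,j\})$ is not controlled by any buyer-optimality comparison, since $T^*\setminus\{i\}$ is not a set the buyer actually chose. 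For option~(b), lowering $q_i$ to $r_i$ can cross the threshold $\theta_i$ and change $T^*$ (specifically when $r_i\le\theta_i<q_i$), so the reduction to the non-degenerate case is not immediate. The problematic regime that neither sketch addresses is $r_i<v_i<q_i$ with $i\in T^*$: here your pointwise bound contributes $\min(v_i,q_i)=v_i>r_i$, and nothing you've written excludes this event from having positive probability.

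The paper handles $Z$ differently: it argues that for $i\in Z$ the principal must sell item $i$ in essentially every realization (else seller~$i$ would profit from a tiny price), then invokes \Cref{claim:value-change} to show the principal's revenue is unchanged if one truncates $v_i$ to $r_i$ on $Z$, and finally uses a Hart--Nisan style $\mathrm{Rev}(Y)+\mathrm{Wel}(Z)$ decomposition (\Cref{lem:subset-bound-rev-plus-welfare}) to reduce to the already-handled $Y=\items\setminus Z$. Your pointwise bound would slot in nicely as a replacement for \Cref{lem:revenue.supremum.bound} inside that argument, but the degenerate case still needs the paper's truncation-plus-decomposition machinery (or a genuinely new idea) rather than the $\epsilon$-peeling you propose.
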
 

To prove Theorem~\ref{thm:upper-bound}, we first argue that the claimed bound holds under the assumption that each $s_i$ is supported on $[0,r_i]$.  We actually prove something stronger: if each $s_i$ is supported on some arbitrary range $[0,\bar{s}_i]$, then the principal's revenue is at most the welfare when the value of item $i$ is truncated at $\bar{s}_i$.  This is true even if the strategies $s_i$ are not in equilibrium, and even if the principal can choose menu $p$ after the item sellers commit to their strategies.

\begin{proposition}
\label{lem:revenue.supremum.bound}
    Consider a market  with value distribution $\dist=\times_i \dist_i$.
    Fix any menu $p$ of the principal seller and any strategies $\vec{s}$ (which may not be a NE) in the game $G_{p,\dist}$. 
    Denote the supremum of the support of $s_i$ by $\bar{s}_i$. 
    Then the expected revenue of principal seller under $(p,\vec{s})$ is at most  $\sum_{i} \E_{v_i \sim \dist_i} [\min(\bar{s}_i, v_i)]$. 
\end{proposition}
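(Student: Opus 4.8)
The plan is to upper bound the principal's payment \emph{pointwise}, for every realization of the buyer's value profile $\vec v$ and the item sellers' realized prices $\vec q$, and then take expectations; the resulting bound will depend on neither $p$ nor on whether $\vec s$ is an equilibrium. Fix such a realization, and let $T\subseteq\items$ be the set the buyer purchases from the principal (so the principal collects $p(T)$) and $U\subseteq\items$ the set she purchases from the item sellers. Since buying the same item from both sellers is never strictly profitable (the buyer wants at most one copy and each $q_i\ge 0$), we may assume $T\cap U=\emptyset$, so the buyer's realized utility equals $\sum_{i\in T}v_i - p(T) + \sum_{i\in U}(v_i-q_i)$.

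The key step is to compare the buyer's utility-maximizing choice against the always-available option of buying nothing from the principal (recall $p(\emptyset)=0$) and buying optimally from the item sellers, which yields utility $\ub_{I,\vec v}(\items,\vec q)=\sum_{i\in\items}(v_i-q_i)_+$. Because $T,U$ are disjoint and every term $(v_i-q_i)_+$ is non-negative, this quantity is at least $\sum_{i\in T}(v_i-q_i)_+ + \sum_{i\in U}(v_i-q_i)_+\ge \sum_{i\in T}(v_i-q_i)_+ + \sum_{i\in U}(v_i-q_i)$. Since the buyer's realized utility is at least this much, cancelling the common summand $\sum_{i\in U}(v_i-q_i)$ leaves $\sum_{i\in T}v_i - p(T)\ge \sum_{i\in T}(v_i-q_i)_+$, i.e.\ $p(T)\le \sum_{i\in T}\big(v_i-(v_i-q_i)_+\big)=\sum_{i\in T}\min(v_i,q_i)\le \sum_{i\in\items}\min(v_i,q_i)$. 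Since $\bar s_i$ is the supremum of the support of $s_i$ we have $q_i\le\bar s_i$ almost surely, hence $p(T)\le\sum_i\min(v_i,\bar s_i)$ on every realization. Taking expectations over $\vec v\sim\dist$ and $\vec q\sim\vec s$ and using linearity of expectation, the principal's revenue under $(p,\vec s)$ is at most $\E_{\vec v\sim\dist}\big[\sum_i\min(v_i,\bar s_i)\big]=\sum_{i}\E_{v_i\sim\dist_i}[\min(\bar s_i,v_i)]$.

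I expect the only delicate point to be the bookkeeping around the buyer's purchase sets: formally justifying that $T\cap U=\emptyset$ without loss of generality (in particular handling the edge case $q_i=0$, where the stated tie-breaking rule may leave $i$ in $U$), and reconciling the free-disposal convention for partial menus so that ``the set the buyer buys from the principal'' and her payment to the principal are unambiguously defined. The remaining content is just the identity $v_i-(v_i-q_i)_+=\min(v_i,q_i)$ together with monotonicity of $\min$, so there is essentially no computation.
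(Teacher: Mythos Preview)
Your proof is correct and takes a genuinely different route from the paper's.  The paper argues structurally: using Claim~\ref{claim:value-change} and the analysis from Lemma~\ref{lem:sellerUtility2}, it shows that for fixed $\vec q$ the set the buyer purchases from the principal is unchanged if each $v_i$ is replaced by $\min(v_i,q_i)$, so the principal's revenue coincides with its revenue under the truncated distribution $\times_i F'_i$, and then bounds that revenue by the truncated welfare.  Your argument bypasses all of this structure and bounds $p(T)$ directly by comparing the buyer's chosen bundle against the always-available option of buying nothing from the principal.

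What each approach buys: the paper's route actually proves the stronger intermediate fact that the principal's revenue \emph{equals} its revenue under the truncated value distribution, not merely that it is bounded by the truncated welfare; this equality is conceptually useful (and is reused in the proof of Theorem~\ref{thm:upper-bound}).  Your route is more elementary---it needs nothing from Sections~\ref{subsec:buyerDecision}--\ref{subsec:itemSeller} and is a two-line pointwise inequality.  Regarding your flagged ``delicate point'': you can in fact drop the assumption $T\cap U=\emptyset$ entirely.  From the buyer's utility $\sum_{i\in T\cup U}v_i - p(T) - \sum_{i\in U}q_i \ge \sum_i (v_i-q_i)_+ = \sum_i v_i - \sum_i \min(v_i,q_i)$ and $v_i,q_i\ge 0$, one gets $p(T)\le \sum_i \min(v_i,q_i)$ immediately with no disjointness needed, so the $q_i=0$ edge case never arises.
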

\begin{proof}
    Fix any $\vec{q}$ and $p$.
    Recall that for all $i$ and all $\vec{v}$ in the support of $\dist$, $\ind_{i,\vec{v}}(p,\vec{q})$ is constant for all $v_i > q_i$.  By Claim~\ref{claim:value-change}, this means that the set of items purchased from the principal is constant for all $v_i > q_i$ as well.  Let $\dist'_i$ denote the distribution over $\min\{v_i, q_i\}$ where $v_i \sim F_i$.
    Then we conclude that the distribution over the set of items purchased from the principal is not affected if we replace $\dist_i$ with $\dist'_i$.  Applying this argument to each $i$ in sequence, we conclude that the distribution over the set of items purchased from the principal is identical under $\times_i \dist_i$ and $\times_i \dist'_i$.  This implies that the principal's expected revenue is likewise the same under $\times_i \dist_i$ and $\times_i \dist'_i$.

    Since, for any $p$ and $\vec{q}$, the expected revenue achieved by the principal is at most the expected welfare, we conclude that the principal's revenue is at most the expected welfare under $\times_i \dist'_i$, which is 
    {$\sum_i \E_{v_i \sim \dist'_i} [\min(q_i, v_i)] = \sum_i \E_{v_i \sim \dist_i} [\min(q_i, v_i)]$.} %$\sum_i \E_{v_i \sim \dist_i} [\min(q_i, v_i)]$. 
    
    Taking an expectation over $\vec{q} \sim \vec{s}$ and noting that $q_i \leq \bar{s}_i$ for all such realizations, the principal's expected revenue is at most $\sum_{i} \E_{v_i \sim \dist_i} [\min(\bar{s}_i, v_i)]$, as claimed.
\end{proof}

Our next step is to consider items $i$ for which the item seller's strategy $s_i$ is not supported on $[0,r_i]$.  We observe that any such seller must be generating revenue $0$ at equilibrium.

\begin{lemma}
\label{lem:revenue.zero} 
    Consider a market with value distribution $\dist=\times_i \dist_i$. 
    Fix any menu $p$ of the principal seller, and any strategy profile $\vec{s}$ in the game $G_{p,\dist}$, such that $s_i$ is a best response of item seller $i$ to $s_{-i}$. If  the probability that $s_i$ assigns to prices strictly greater than $r_i$ is positive,  then the expected revenue of item seller $i$ under $(p,\vec{s})$ in $G_{p,\dist}$ is $0$. 
\end{lemma}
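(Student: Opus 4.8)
The plan is to combine the product formula of \Cref{lem:sellerUtility2} with the strict suboptimality of pricing above $r_i$ in the Myerson revenue curve, together with the monotonicity of the sale probability in the item seller's own price.

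First I would evaluate \Cref{lem:sellerUtility2} on a degenerate strategy to read off the utility of a \emph{pure} deviation: against the fixed profile $s_{-i}$, deterministically pricing item $i$ at $x$ yields item seller $i$ the utility
\[
\rev_{\dist_i}(x)\cdot \alpha_i(x),
\qquad\text{where}\qquad
\alpha_i(x):=\E_{v_{-i}\sim \dist_{-i},\, q_{-i}\sim s_{-i}}\!\left[\ind_{i,(\infty,v_{-i})}(p,(x,q_{-i}))\right]\in[0,1].
\]
Next I would record that $\alpha_i(\cdot)$ is non-increasing: by the corollary of \Cref{claim:sameMaxMenu} (equivalently \Cref{claim:probMonotone}), for every fixed $v_{-i}$ and $q_{-i}$ the indicator $\ind_{i,(\infty,v_{-i})}(p,(x,q_{-i}))$ is non-increasing in $x$ --- raising $i$'s price only makes some bundle containing $i$ more attractive to buy from the principal --- and this pointwise inequality survives taking expectations over $v_{-i}$ and $q_{-i}$.

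Now I would assume toward a contradiction that $\us_i(p,\vec s)>0$. Since $s_i$ is a best response to $s_{-i}$ and the pricing game is finite (prices are multiples of $\epsilon$), every price in the support of $s_i$ is itself a best response, and so, played deterministically, yields utility exactly $\us_i(p,\vec s)$. By hypothesis the support of $s_i$ contains some atom $q_i^*>r_i$, hence $\rev_{\dist_i}(q_i^*)\,\alpha_i(q_i^*)=\us_i(p,\vec s)>0$, which forces both $\rev_{\dist_i}(q_i^*)>0$ and $\alpha_i(q_i^*)>0$. Compare this with the pure deviation to price $r_i$: best response gives $\rev_{\dist_i}(r_i)\,\alpha_i(r_i)\le \us_i(p,\vec s)=\rev_{\dist_i}(q_i^*)\,\alpha_i(q_i^*)$. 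On the other hand, monotonicity of $\alpha_i$ (since $r_i<q_i^*$) gives $\alpha_i(r_i)\ge \alpha_i(q_i^*)$, and the assumption that $r_i$ is the \emph{maximal} revenue-maximizing price gives the strict inequality $\rev_{\dist_i}(r_i)>\rev_{\dist_i}(q_i^*)$; chaining these,
\[
\rev_{\dist_i}(r_i)\,\alpha_i(r_i)\ \ge\ \rev_{\dist_i}(r_i)\,\alpha_i(q_i^*)\ >\ \rev_{\dist_i}(q_i^*)\,\alpha_i(q_i^*),
\]
where the last step uses $\alpha_i(q_i^*)>0$. This contradicts the previous inequality, so in fact $\us_i(p,\vec s)=0$.

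I do not expect a substantive obstacle. The only points that need care are (i) invoking the standard fact that a pure strategy played with positive probability in a mixed best response attains the maximum utility --- valid here because the game is finite after discretizing prices --- and (ii) applying the monotonicity statement in exactly the conditional form (``value of item $i$ above the chosen price'') in which the corollary of \Cref{claim:sameMaxMenu} is phrased, which is precisely the form appearing inside $\alpha_i$. The one place where the lemma's hypothesis on $\dist_i$ is genuinely used is the \emph{strict} inequality $\rev_{\dist_i}(r_i)>\rev_{\dist_i}(q_i^*)$, which is exactly why $r_i$ was taken to be the maximal Myerson price.
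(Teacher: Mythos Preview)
Your proposal is correct and follows essentially the same approach as the paper: both apply the product formula from \Cref{lem:sellerUtility2}, use the monotonicity of the sale-probability factor from \Cref{claim:probMonotone}, and exploit the strict inequality $\rev_{\dist_i}(r_i)>\rev_{\dist_i}(q_i^*)$ for $q_i^*>r_i$ to conclude that any such $q_i^*$ in the support forces the utility to be zero. The paper phrases this directly (``$\us_i(p,(z_i,s_{-i}))<\us_i(p,(r_i,s_{-i}))$ unless both equal $0$'') while you frame it as a contradiction, but the content is identical.
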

\begin{proof}
For any $z_i > {r_i}$ we have $r_i \times \Prob[v_i \geq r_i] > z_i \times \Prob[v_i \geq z_i]$ (from the definition of $r_i$).  Since we also have $\E_{q_{-i} \sim s_-i, v_{-i} \sim \dist_{-i}} \sq*{\ind_{i, (\infty, v_{-i})}(p,(z_i, q_{-i}))} \leq \E_{q_{-i} \sim s_-i, v_{-i} \sim \dist_{-i}} \sq*{\ind_{i, (\infty, v_{-i})}(p,(r_i, q_{-i}))}$ by Proposition~\ref{claim:probMonotone}, Equation \eqref{eq:sellerUtility2} implies that $\us_{i} (p, (z_i, s_{-i})) < \us_{i} (p, (r_i, s_{-i}))$ unless $\us_{i} (p, (z_i, s_{-i})) = \us_{i} (p, (r_i, s_{-i})) = 0$.  Therefore, if any $z_i > r_i$ is in the support of $s_i$, we must have $\us_{i} (p, (z_i, s_{-i})) = 0$ and hence $\us_{i} (p, (s_i, s_{-i})) = 0$.
\end{proof}

We are now ready to complete the proof of Theorem~\ref{thm:upper-bound}.  Roughly speaking, we will show that the principal cannot extract high revenue from any item $i$ for which the item seller's strategy $s_i$ is not supported on $[0,r_i]$.  This follows because, since any such item seller must be obtaining $0$ revenue at equilibrium, the principal must be selling item $i$ whenever $v_i > 0$.

\begin{proof}[Proof of Theorem~\ref{thm:upper-bound}]
    Observe that, for each $i$, we have 
    ${r}_i>0$,   %$\hat{r}_i\geq r_i>0$.   Additionally, it holds that 
    ${r}_i$ is in the support of $\dist_i$, and that  $E_{v_i \sim \dist_i}[\min(v_i, {r}_i)]>0$.
Let $Z$ be the set of indexes of item sellers that price strictly above ${r_i}$ with positive probability according to $s_i$.
By Lemma~\ref{lem:revenue.zero}, each $i \in Z$ has expected revenue $0$ at equilibrium.
    So for any $i \in Z$ and any $q_i > 0$,
    the probability that item seller $i$ sells item $i$ at price $q_i$ must be $0$
    (as otherwise there is a beneficial deviation for item seller $i$). By Claim \ref{claim:sameMaxMenu}, the threshold $\theta_i$ for $i$ must be $0$, and thus there is a set $T_{\neg i}$ that is bought for any $\vec{v}$ (in the support of $\dist$) with $v_i > 0$.
    In particular, for every $\vec{v}$ with $v_i > 0$ the probability (over $\vec{q} \sim \vec{s}$) of the principal selling item $i$ must be $1$ (since otherwise item seller $i$ could obtain positive revenue by pricing at some $\varepsilon_i > 0$). 

    Let us now introduce some notation.  Fixing $p$ and $\vec{q}$, let $T(\vec{v})$ be the set of items purchased from the principal when the valuation profile is $\vec{v}$.  And for any $\vec{v}$, let $\psi(\vec{v})$ be the vector with $i$th coordinate $\min\{v_i, r_i\}$ for all $i \in Z$.  That is, $\psi(\vec{v})$ is the vector $\vec{v}$ with each coordinate $i \in Z$ capped at $r_i$.
    
    We now claim that for any realization of $\vec{q} \sim \vec{s}$ and any $\vec{v}$ in the support of $\dist$, $T(\vec{v}) = T(\psi(\vec{v}))$.  This is because $\vec{v}$ and $\psi(\vec{v})$ differ only on items $i \in Z$ for which $v_i > 0$, and we have argued that $T(\vec{v})$ and $T(\psi(\vec{v}))$ must both contain all such items.  Claim~\ref{claim:value-change} therefore implies that $T(\vec{v}) = T(\psi(\vec{v}))$.

    Write $\dist' = \times_i \dist'_i$ for the distribution over $\psi(\vec{v})$ for $\vec{v} \sim \dist$.
    Since $T(\vec{v}) = T(\psi(\vec{v}))$ for all $\vec{q} \sim \vec{s}$ and $\vec{v} \sim F$, the principal generates the same expected revenue under value distribution $\dist$ and value distribution $\dist'$.  It therefore suffices to show that the principal's expected revenue under $(p,\vec{s})$ with value distribution $\dist'$ is at most the buyer's expected truncated social welfare.

    Define $Y = M \backslash Z$ for notational convenience, so that $Z$ and $Y$ form a partition of $M$.  Write $Rev(Y)$ for the expected revenue of the principal's revenue-optimal menu that sells only items in $Y$, over $\vec{v} \sim \dist'$ and $\vec{q} \sim \vec{s}$.  Write $Wel(Z)$ for the expected welfare of items in $Z$, again over $\vec{v} \sim \dist'$.  We then claim that the principal's expected revenue under $(p,\vec{s})$ is at most $Rev(Y) + Wel(Z)$. 
    {
        \begin{restatable}{claim}{lemSubsetBound}\label{lem:subset-bound-rev-plus-welfare}
        Consider a market with value distribution $\dist' = \times_i \dist'_i$ and {a partition of the items to sets $Y, Z \subset \items$ ($Y \mathbin{\dot{\cup}} Z=\items$).}
        For any menu $p$ of the principal seller, and for any mixed Nash equilibrium $\vec{s}$ in the game $G_{p,\dist}$, the expected revenue of principal seller under $(p,\vec{s})$ is at most $Rev(Y) + Wel(Z)$.  
        \end{restatable}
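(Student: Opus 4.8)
The plan is to adapt the classical argument of Hart and Nisan bounding the revenue from a set of items by the revenue from one part plus the \emph{welfare} of the complementary part, here in the presence of the item sellers. Fix any realization of the item sellers' prices $\vec{q} \sim \vec{s}$; it suffices to prove the bound conditional on $\vec{q}$ and then take expectations. Let $T(\vec{v})$ denote the set the buyer purchases from the principal under valuation $\vec{v}$, and let $p(T(\vec{v}))$ be the principal's revenue. First I would decompose the principal's payment as follows: since the buyer is a utility maximizer, for the set $T = T(\vec{v})$ the buyer (weakly) prefers buying $T$ from the principal over buying only $T \cap Y$ from the principal (and optionally acquiring $T \cap Z$ from the item sellers). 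Comparing these two options, and using that buying $T\cap Z$ from the item sellers contributes at most $\sum_{i \in T \cap Z} v_i$ to utility, one gets
\[
p(T) \;\le\; p(T \cap Y) \;+\; \sum_{i \in T \cap Z} v_i \;\le\; p(T\cap Y) + \sum_{i \in Z} v_i .
\]
The second term, in expectation over $\vec{v} \sim \dist'$, is exactly $Wel(Z)$ (which does not depend on $\vec{q}$), so it remains to bound $\E_{\vec{v} \sim \dist'}[\, p(T(\vec{v}) \cap Y)\,] \le Rev(Y)$.

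For the $Rev(Y)$ term, the key step is to argue that the map $\vec{v} \mapsto p(T(\vec{v})\cap Y)$ is the revenue of a feasible (randomized, in general) mechanism that sells only items in $Y$. Concretely, define a menu/mechanism $\M$ over items in $Y$ as follows: on reported values $\vec{w}$ for the items in $Y$, draw $v_i \sim \dist'_i$ for $i \in Z$ independently, form the full profile $\vec{v} = (\vec{w}, v_Z)$, let the buyer's induced purchase $T(\vec{v})$ (in the original game against $p$ and $\vec{q}$) determine the $Y$-allocation $T(\vec{v}) \cap Y$ at price $p(T(\vec{v}) \cap Y)$. I would then verify that $\M$ is incentive compatible and individually rational for the buyer restricted to $Y$: IR holds because $p(T(\vec{v})) \le \ub_{P,\vec{v}}(\items, p) \le$ (buyer's utility from $T(\vec{v})$) $+ p(T(\vec{v}))$ gives nonnegative utility, and more carefully the buyer's utility from $T \cap Y$ at price $p(T\cap Y)$ is at least her utility from $\emptyset$; IC holds because if the buyer with true $Y$-values $\vec{w}$ could gain by reporting $\vec{w}'$, then in the original game — where the same randomization over $v_Z$ and the same item-seller prices apply — she could profitably deviate to the purchasing behavior induced by $\vec{w}'$, contradicting that $T(\cdot)$ is the utility-maximizing choice. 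Since $\M$ only ever allocates items in $Y$, its expected revenue is at most $Rev(Y)$, the optimal revenue from selling only $Y$ against $\dist'$; and its expected revenue is by construction $\E_{\vec{v} \sim \dist'}[p(T(\vec{v})\cap Y)]$.

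Putting the pieces together, $\E_{\vec{v}\sim\dist'}[p(T(\vec{v}))] \le \E_{\vec{v}\sim\dist'}[p(T(\vec{v})\cap Y)] + \E_{\vec{v}\sim\dist'}\big[\sum_{i\in Z} v_i\big] \le Rev(Y) + Wel(Z)$, and averaging over $\vec{q} \sim \vec{s}$ preserves the inequality (noting $Rev(Y)$ and $Wel(Z)$ were already defined as expectations over $\vec{q}$ and $\vec{v}$). I expect the main obstacle to be the careful justification that the constructed single-part mechanism $\M$ is genuinely incentive compatible: one must be precise that when we fix $\vec{q}$ and the (simulated) values of items in $Z$, the buyer's optimal response over all subsets $T \subseteq \items$ indeed reduces to a well-defined direct mechanism on the $Y$-coordinates, and that free disposal of the principal's menu (so that $p(T \cap Y)$ is a valid menu price) is used correctly. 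A secondary subtlety is that $\M$ may be randomized (it simulates $v_Z$), but randomized mechanisms are still bounded by $Rev(Y)$ since $Rev(Y)$ denotes the optimal revenue over all mechanisms selling only $Y$, and in the single-additive-buyer setting this is unaffected — though if one wants to be safe one can fix $v_Z$ to its revenue-optimizing realization by an averaging argument.
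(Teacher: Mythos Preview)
Your overall strategy matches the paper's: both adapt the Hart--Nisan decomposition by simulating the $Z$-values and constructing a mechanism that sells only items in $Y$, then derandomizing. The revenue inequality $p(T)\le p(T\cap Y)+\sum_{i\in T\cap Z}v_i$ you derive is correct and useful.

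However, your constructed mechanism $\M$ --- which allocates $T(\vec{v})\cap Y$ and charges $p(T(\vec{v})\cap Y)$ --- is \emph{not} incentive compatible, and your IC sketch does not go through. The argument ``if the buyer could gain by misreporting in $\M$, she could profitably deviate in the original game'' fails because the two utilities are not comparable: in $\M$ the buyer's payoff is $\sum_{i\in T\cap Y}w_i-p(T\cap Y)$, whereas in the original game her payoff from buying $T$ also includes the $Z$-item values and the item-seller surplus, and these extra terms differ across the options being compared. Concretely, take $Y=\{1,2\}$, $Z=\{3\}$, no effective item sellers, and menu (after free disposal) $p(\{1\})=p(\{2\})=1$, $p(\{3\})=p(\{2,3\})=2$, $p(\{1,2\})=p(\{1,3\})=p(\{1,2,3\})=10$. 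With $v_3=5$, $w_1=3$, $w_2=0.5$, the optimal purchase is $T=\{2,3\}$ (utility $3.5$), so $\M$ allocates $\{2\}$ at price $1$, giving utility $-0.5$. But by reporting $(w_1',w_2')=(100,0)$ the induced purchase is $T'=\{1\}$, and $\M$ allocates $\{1\}$ at price $1$, giving utility $2$. So misreporting strictly helps, and you cannot conclude $\E[p(T\cap Y)]\le Rev(Y)$.

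The paper sidesteps this by charging $p(T)-\sum_{i\in T\cap Z}\tilde v_i$ (i.e., refunding the simulated $Z$-values) rather than $p(T\cap Y)$. Then the buyer's utility in the simulated mechanism is exactly $\sum_{i\in T}(\vec w,\tilde v_Z)_i-p(T)$, which is the quantity the original buyer maximizes over $T$, so truthful reporting is optimal by construction. Your first decomposition step is then unnecessary; the $Wel(Z)$ term appears directly as the expected refund. Replacing your price rule with this refund rule fixes the gap and yields essentially the paper's proof.
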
  
    }  
    We note that such a bound on the seller's revenue is already known for the case of a monopolist (see~\cite{HartN17}), and the proof for the principal seller is very similar; {we include the proof of \Cref{lem:subset-bound-rev-plus-welfare} in \Cref{app:sec:upperbound} for completeness. }%we include the proof of the claim below for completeness. \mbc{to save space and focus on our original claims we should make that a lemma, and move the proof to the appendix.}
Now, by \Cref{lem:revenue.supremum.bound}, the definition of $Y$, and the fact that $\dist'_i = \dist_i$ for all $i \in Y$, we have $Rev(Y) \leq \sum_{i \in Y} \E_{v_i \sim \dist_i} [\min(r_i, v_i)]$.  From the definition of $\dist'_i$ for $i \in Z$, $Wel(Z) = \sum_{i \in Z} \E_{v_i \sim \dist_i} [\min(r_i, v_i)]$.  Our claim then implies that the principal's expected revenue under $(p,\vec{s})$ is at most $\sum_i \E_{v_i \sim \dist_i} [\min(r_i, v_i)]$, as required.
\end{proof}

We note that there are markets in which the principal cannot obtain revenue that is any constant approximation to the upper bound: 
a trivial example illustrating this is a market with only one item, in which the principal cannot obtain positive revenue, yet the expected truncated social welfare is positive.  Yet, when the principal is able to find a menu with revenue that is constant approximation to the upper bound, it is clearly a constant approximation to the optimal revenue of the principal. In the next section we move to present sufficient conditions for such revenue extraction.  

\section{Approximating Optimal Revenue via Bundle Pricing} \label{sec:approxopt}

In this section we show that under certain conditions on the value distributions, the principal seller can guarantee expected revenue that is a constant approximation to the expected truncated social welfare.  Moreover, this can be achieved by a menu that prices only the grand bundle of all items.  As the expected truncated social welfare is an upper bound on the principal's revenue (by \Cref{thm:upper-bound}), that simple menu is a constant approximation to the revenue-maximizing menu for the principal. 

{
We begin by building up some intuition for the conditions we will impose on the value distributions.
Intuitively, selling the items as a single grand bundle can potentially % MB: the prior wording seem too strong as it ignored the competition from the item sellers.
be a good strategy for the principal when the buyer's value for the grand bundle is concentrated (e.g., close to a Gaussian distribution), and no single item affects the buyer's value for the bundle to a significant degree. 
Such a concentration plays an important role %This concentration is 
even in the simpler setting of Monopolist pricing \cite{BabaioffILW14}, and is guaranteed to occur when the variance of the buyer's value for the grand bundle is much larger than the buyer's value for any individual item.
}

{
However, in our setting, the principal seller faces extra complexity due to the competing item sellers. 
Our goal is a revenue guarantee that holds over all equilibria (similar to the price of anarchy),
so we must bound the principal's revenue over all outcomes that can occur in item seller equilibria.  Ideally (from the principal's perspective), each item seller would set a price that is not too much lower than their highest Myerson price.
To this end, we further analyze item seller best-response strategies by examining the revenue curve faced by an item seller. Recall from \Cref{lem:sellerUtility2} 
that, given the principal's pricing menu $p$, item seller $i$'s revenue {under strategy profile}
$(s_1, \cdots, s_n)$ 
is equal to 
\begin{align*}
\us_{i} (p, \vec{s}) &=
{\E_{q_i \sim s_i}\sq*{ \rev_{\dist_i}(q_i)  \cdot \E_{v_{-i} \sim \dist_{-i}, q_{-i} \sim s_{-i}} \sq*{ \ind_{i, (\infty, v_{-i})}(p, \vec{q})}}} \\
      & = \E_{q_i \sim s_i}\sq*{ \rev_{\dist_i}(q_i)  \cdot \Prob_{v_{-i} \sim \dist_{-i}, q_{-i} \sim s_{-i}}\sq*{\max_{T \ni i} \set*{ p(T) - \sum_{j \in T, j \neq i} \min(v_j, q_j)} \geq q_i}},
\end{align*}
where we recall that  $\rev_{\dist_i}(\cdot)$ 
is the monopolist revenue function for item $i$ with value distribution $\dist_i$. If we restrict our attention to a principal 
who is using a grand bundling strategy, where $p(T) = p(\items)$ for all non-empty sets $T$, 
the above formula can be simplified to 

\begin{align}
    \us_{i} (p, \vec{s}) = \E_{q_i \sim s_i}\sq*{ \rev_{\dist_i}(q_i)  \cdot \Prob_{v_{-i} \sim \dist_{-i}, q_{-i} \sim s_{-i}}\sq*{p(\items) - q_i \geq \sum_{j \in \items, j \neq i} \min(v_j, q_j)}}. \label{eq:item-seller-util-bundle}
\end{align}

Observe that the first term inside the expectation is just the monopolist revenue of item seller $i$ when pricing at $q_i$, and the second term can be interpreted as the probability that the buyer does not purchase the grand bundle from the principal, % note that the buyer can buy the grand bundle from the item sellers!
conditioned on the buyer having value for item $i$ that is at least $q_i$. 
{
As long as the aggregate variance in the truncated social welfare is high, the second term will be relatively insensitive to changes in $q_i$.
If we additionally assume that item seller $i$'s monopolist revenue is non-trivially sensitive to large price reductions (i.e., well below the highest Myerson price), item seller $i$'s best response price must be close to their highest Myerson price, which is what we want.  
This motivates the following property of value distributions, which we call ``price sensitivity." 
}

\begin{definition}[(Monopolist) Price Sensitivity] \label{def:priceSensitivity}
    Let $\dist_i$ be the value distribution of item $i$ and $r_i$ be the highest Myerson price for item $i$. $\dist_i$ is \emph{$(\lambda, C)$-price sensitive} for parameters $C\in (0,1)$ and $\lambda > 0$ %\mbc{I think we better define it with $\lambda>0$}
    if for all $\alpha \in [0, C]$ it holds that $\frac{\rev_{\dist_i}(r_i) - \rev_{\dist_i}(\alpha \cdot r_i)}{(1-\alpha) \cdot r_i} \geq \lambda$. %\mbc{I suggest we divide both sides by $(1-\alpha)$ so it will be clearer this is just the line slope} 
\end{definition}

{Intuitively, the price sensitivity property requires that the revenue curve $\rev_{\dist_i}(\cdot)$ decays steeply enough for prices substantially lower than the highest Myerson price $r_i$.  A geometric interpretation of $(\lambda, C)$-price sensitivity is that, over the range $q \in [0, C \cdot r_i]$, the revenue curve $\rev_{\dist_i}(\cdot)$ lies below a line of slope $\lambda$ passing through the point $(r_i, \rev_{\dist_i}(r_i))$.  See Figure~\ref{fig:price-sensitivity}.}

In our main theorem 
(\Cref{thm:generalPrincipalBundleRev})
we formally show that large variance and price sensitivity are sufficient conditions for selling the grand bundle to be constant revenue-approximating for the principal seller. In this section we will provide intuitions and discuss the proof outline for 
\Cref{thm:generalPrincipalBundleRev}.
All missing proofs in this section can be found in \Cref{app:sec:approxopt}.

{We define some notations. For $v_i\sim \dist_i$ we use $V_i(r_i)$ to denote the random variable $\min(v_i, r_i)$: the value of item $i$ truncated at $r_i$. Note that the truncated social welfare is the random variable $V(\vec{r}) = \sum_i V_i(r_i)= \sum_i \min(v_i, r_i)$ where $v_i\sim \dist_i$ for every $i$ and $\vec{r}=(r_1,...,r_m)$. 
The expected truncated social welfare is simply the mean of $V(\vec{r})$, and we write $\sigma(V(\vec{r}))$ to denote the standard deviation of %the truncated social welfare 
$V(\vec{r})$. %With slight abuse of notation we write 
We use $\mu_i(r_i) = \mu(V_i(r_i))$ to denote the mean of $V_i(r_i)$, and $\sigma_i(r_i) = \sigma(V_i(r_i))$ to denote the standard deviation of  $V_i(r_i)$.
For distributions $\dist_1,\ldots, \dist_m$ we define $K(\dist_1,\ldots, \dist_m) = \frac{\max_{i \in \items} \set*{r_i - \mu_i(r_i)}}{\min_{i \in \items} \set*{r_i - \mu_i(r_i)}}$ to be the ratio between the largest and smallest value of $\set*{r_i - \mu_i(r_i)}$. Also, for notational convenience, we will sometimes use $p$ to denote the principal's price of the grand bundle, instead of $p(\items)$.  
}
We are now ready to state our main result.

\begin{restatable}{theorem}{thmGeneralPrincipalBundleRev}\label{thm:generalPrincipalBundleRev}
Consider a market with value distribution $\dist=\times_i \dist_i$. 
Assume that there exists constant $\lambda>0$ % $\lambda \in (0, \infty)$ 
and $C = 1 - \frac{\min_{i \in \items} F_i(r_i)^4}{8K+1}$
where 
$K= K(\dist_1,\ldots, \dist_m)$
such that the value distribution of every item $i \in \items$ is $(\lambda, C)$-price sensitive. In addition, assume 
$\sigma(V(\vec{r})) \geq \frac{12}{(\lambda(1-C))^{3/2}} \cdot \max_{j \in \items} \{r_j\}$. 
When the principal offers the grand bundle at price $p = \sum_{i}\E[\min(v_i, C \cdot r_i)] + \frac{\sigma(V(\vec{r}))}{4}$, the revenue of the  principal  in every Nash equilibrium of the game $G_{p,\dist}$ 
is  at least $\frac{1}{3}$ of the expected truncated social welfare $\sum_{i}\E[\min(v_i, r_i)]$.
\end{restatable}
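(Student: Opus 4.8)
The plan is to write the principal's revenue as $p$ times the probability that the buyer purchases the grand bundle, and then to lower-bound that probability by an absolute constant. Under a grand-bundle menu, free disposal and additivity imply that the buyer's best option involving the principal is to buy everything (utility $\sum_i v_i-p$) while her best option avoiding the principal gives utility $\sum_i(v_i-q_i)^+$, so by the stated tie-breaking rule the buyer buys the bundle exactly when $\sum_i\min(v_i,q_i)>p$; hence the principal's revenue is exactly $p\cdot\Prob_{\vec v\sim\dist,\ \vec q\sim\vec s}\!\big[\sum_i\min(v_i,q_i)>p\big]$. Since $\min(v_i,Cr_i)\ge C\min(v_i,r_i)$ pointwise, the chosen price satisfies $p\ge C\mu$ where $\mu:=\sum_i\E[\min(v_i,r_i)]$ is the expected truncated welfare, and from the definition of $C$ (together with $F_i(r_i)<1$ and $K\ge1$) one gets $C>8/9$; so it suffices to show $\Prob[\text{bundle sold}]\ge\frac{1}{3C}$ (note $\frac{1}{3C}<\frac38$). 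I would first observe that in every equilibrium each $s_i$ is supported on $[0,r_i]$: an item seller pricing above $r_i$ with positive probability earns $0$ by \Cref{lem:revenue.zero}, yet, exactly as in the proof of \Cref{thm:upper-bound}, would then have a profitable undercut to a tiny positive price, a contradiction. In particular $\min(v_i,q_i)=\min(\min(v_i,r_i),q_i)$ always.

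Next I would split on item-seller behavior. \emph{Case (I): every item seller prices at least $Cr_i$ almost surely.} Then $\min(v_i,q_i)\ge\min(v_i,Cr_i)$ a.s., so writing $V(C\vec r):=\sum_i\min(v_i,Cr_i)$ and using the choice of $p$, $\Prob[\text{bundle sold}]\ge\Prob\!\big(V(C\vec r)-\E[V(C\vec r)]>\tfrac14\sigma(V(\vec r))\big)$. I would lower-bound this via the Berry--Esseen theorem for the independent sum $V(C\vec r)$, whose summands lie in $[0,Cr_i]$: the probability is at least $1-\Phi\!\big(\tfrac{\sigma(V(\vec r))}{4\sigma(V(C\vec r))}\big)-\tfrac{c_0\max_jr_j}{\sigma(V(C\vec r))}$ (with $\Phi$ the standard normal CDF and $c_0$ absolute, bounding $\sum_i\E|\min(v_i,Cr_i)-\E[\min(v_i,Cr_i)]|^3\le\max_jr_j\cdot\sigma(V(C\vec r))^2$). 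Two facts then finish this case: (a) $\sigma(V(C\vec r))\ge\Omega(1)\cdot\sigma(V(\vec r))$, which is where the precise form $1-C=\frac{\min_iF_i(r_i)^4}{8K+1}$ is used --- lowering the truncation level from $r_i$ to $Cr_i$ moves each summand by at most $(1-C)r_i$, and a per-item computation gives $\mathrm{Var}(\min(v_i,Cr_i))\ge\Omega(1)\cdot\mathrm{Var}(\min(v_i,r_i))$, with $F_i(r_i)^4$ controlling how much mass of $\dist_i$ sits near $r_i$ and $K=\frac{\max_i(r_i-\mu_i(r_i))}{\min_i(r_i-\mu_i(r_i))}$ balancing contributions across items; and (b) the hypothesis $\sigma(V(\vec r))\ge\frac{12}{(\lambda(1-C))^{3/2}}\max_jr_j$, which via (a) and $\lambda(1-C)\le1$ makes the error term small and keeps the normal term bounded away from $1$. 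Together these give $\Prob[\text{bundle sold}]\ge$ an absolute constant $>\frac{1}{3C}$.

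\emph{Case (II): some item seller $i$ places positive probability on a price $\alpha r_i<Cr_i$.} By \Cref{lem:sellerUtility2} / \eqref{eq:item-seller-util-bundle}, seller $i$'s payoff at price $q$ equals $\rev_{\dist_i}(q)\cdot\beta_i(q)$, where $\beta_i(q):=\Prob_{v_{-i}\sim\dist_{-i},\ q_{-i}\sim s_{-i}}\!\big[W_{-i}\le p-q\big]$ with $W_{-i}:=\sum_{j\ne i}\min(v_j,q_j)$, and $\beta_i$ is non-increasing. Since deviating to $r_i$ is unprofitable, $\rev_{\dist_i}(\alpha r_i)\beta_i(\alpha r_i)\ge\rev_{\dist_i}(r_i)\beta_i(r_i)$; combining with $(\lambda,C)$-price sensitivity ($\rev_{\dist_i}(\alpha r_i)\le\rev_{\dist_i}(r_i)-\lambda(1-C)r_i$), $\rev_{\dist_i}(r_i)\le r_i$, and $\beta_i(\alpha r_i)\le\beta_i(0)$ yields $\beta_i(r_i)\le(1-\lambda(1-C))\beta_i(0)$, i.e.\ the ``influence'' $\beta_i(0)-\beta_i(r_i)=\Prob[p-r_i<W_{-i}\le p]\ge\lambda(1-C)\beta_i(0)$. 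A Berry--Esseen anti-concentration bound for the independent sum $W_{-i}$ caps the left side by $O(\max_jr_j/\sigma(W_{-i}))$, so $\beta_i(0)=O\!\big(\frac{\max_jr_j}{\lambda(1-C)\,\sigma(W_{-i})}\big)$. Finally, since $\{\sum_j\min(v_j,q_j)\le p\}\subseteq\{W_{-i}\le p\}$, we have $\Prob[\text{bundle not sold}]\le\beta_i(0)$, hence $\Prob[\text{bundle sold}]\ge1-\beta_i(0)\ge\frac{1}{3C}$ once $\sigma(W_{-i})$ is a sufficiently large constant multiple of $\max_jr_j/(\lambda(1-C))$, which follows from the variance hypothesis. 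Combining the two cases, $\Prob[\text{bundle sold}]\ge\frac{1}{3C}$ and hence the revenue is at least $C\mu\cdot\frac{1}{3C}=\mu/3$.

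The step I expect to be the main obstacle is the variance bookkeeping in Case (II): $\sigma(W_{-i})$ is not fixed but depends on the other item sellers' equilibrium strategies, so if several item sellers simultaneously place mass on low prices the naive lower bound on $\sigma(W_{-i})$ can collapse. Handling this is where the theorem's parameters are really spent --- one must argue that among the item sellers pricing below their $Cr_i$ there is a choice of index $i$ for which $\sigma(W_{-i})$ is still $\Omega(1)\cdot\sigma(V(\vec r))$ (equivalently, rule out a coordinated low-pricing equilibrium in which every item seller would gain by raising its price), using the per-item variance bound behind the choice of $C$ (the $F_i(r_i)^4$ factor) together with the ratio $K$ to bound the variance lost when item $i$ is removed. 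A second, more routine, difficulty is carrying the Berry--Esseen absolute constants through both applications precisely enough that the final bound lands exactly at $\mu/3$; this is what pins down the specific numbers in the statement (the $8K+1$, the $\tfrac14$ offset in $p$, and the $12$ and exponent $3/2$ in the variance condition).
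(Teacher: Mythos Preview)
Your overall architecture is close to the paper's, and your Case~(I) is essentially the paper's ``good'' branch: once every $s_i$ is supported in $[Cr_i,r_i]$, one dominates $\sum_i\min(v_i,q_i)$ below by $V(C\vec r)$, uses \Cref{coro:constVarianceBound} to get $\sigma(V(C\vec r))\ge\Omega(1)\cdot\sigma(V(\vec r))$, and runs Berry--Esseen. The paper runs Berry--Esseen on $\sum_i V_i(s_i)$ rather than on $V(C\vec r)$, but this is a cosmetic difference.

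The substantive divergence is in the case split and in how the variance obstacle you flag is resolved. The paper does \emph{not} split on ``some seller prices below $Cr_i$'' versus not; it splits on $\Prob[\text{bundle sold}]\ge 1/2$ versus $<1/2$. The first branch is immediate. In the second branch the paper proves \Cref{lem:containedResponse}: under $\Prob[\text{bundle not sold}]\ge 1/2$, \emph{no} seller can be pricing below $Cr_i$ in equilibrium, which lands you back in your Case~(I). The point of this inversion is that the hypothesis $\Prob[\text{bundle not sold}]\ge 1/2$ is exactly what powers the missing variance bound. Your Case~(II) tries to go the other direction --- from ``some seller prices low'' to ``bundle sells often'' --- and, as you note, this stalls when you cannot certify that $\sigma(W_{-i})$ is large.

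The key idea you are missing for that certification is a Chebyshev $\Rightarrow$ high-mean $\Rightarrow$ high-variance bootstrap, not a pigeonhole over ``good'' indices. Concretely: from your own inequality $\beta_i(r_i)\le(1-\lambda(1-C))\beta_i(0)$, either $\beta_i(r_i)$ is already so large that the ratio $\beta_i(r_i)/\beta_i(\alpha r_i)$ beats $1-\lambda(1-\alpha)$ trivially, or else $1-\beta_i(r_i)=\Prob[W_{-i}>p-r_i]\ge\lambda(1-\alpha)$. Now assume for contradiction that $\sigma_{-i}(s_{-i})$ is small; Chebyshev then forces $\mu_{-i}(s_{-i})\ge p-r_i-\sigma_{-i}(s_{-i})/\sqrt{\lambda(1-\alpha)}$, and the $\sigma(V(\vec r))/4$ offset built into $p$ makes this at least $\sum_j\mu_j(Cr_j)$. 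At this point the paper invokes \Cref{lem:highMeanVar} (high aggregate mean of the truncated variables implies high aggregate variance, via \Cref{lem:meanToVar} and the definition of $K$), which yields $\sigma_{-i}^2(s_{-i})\ge\tfrac12\sigma^2(V(\vec r))-r_i^2$, contradicting smallness. So $\sigma_{-i}(s_{-i})$ is large for \emph{every} $i$, and the Berry--Esseen step you wrote goes through uniformly --- no special index needs to be selected. This mean-to-variance lemma is the ingredient your sketch does not mention, and it (together with the preliminary split on $\Prob[\text{bundle sold}]$) is precisely where the constants $8K+1$, the $1/4$ offset, and the $12/(\lambda(1-C))^{3/2}$ threshold are spent.
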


We note that as $K= K(\dist_1,\ldots, \dist_m)$ grows large, $C$ approaches $1$ and the requirement on the variance $\sigma(V(\vec{r}))$ becomes more demanding. Thus, conditions that imply that $K$ is not too large are desirable. 
{For example, when items are i.i.d. we get $K=1$, the best we can hope for. For non-i.i.d., it will sometimes be beneficial for the principal not to bundle items that significantly increase $K$, when their removal does not decrease the expected truncated welfare too much (such pre-processing is simple, and we do not discuss it further).}

\subsection{Implications} \label{subsec:approxopt:implications}
Since the statement of \Cref{thm:generalPrincipalBundleRev} is a bit technical, we here discuss two %implications of 
corollaries of
\Cref{thm:generalPrincipalBundleRev} that provide simpler sufficient conditions for our approximation result.
In both cases we show that 
selling the grand bundle gives a constant approximation to the principal's optimal revenue when the number of items is sufficiently large.  In the first case, 
items values are sampled i.i.d.\ from a distribution for which the Myerson price is not the minimal value in the support.  In the second case, 
each (not necessarily identical) distribution is  $\delta$-smooth and $\delta$-revenue-concave (formally defined in \Cref{def:rev-concave} and \Cref{def:dist-smooth}).

\subsubsection{I.I.D. Valuations}
We start with the i.i.d. case, first looking at a single distribution.
\begin{restatable}[One Item Distribution, price sensitivity]{lemma}{lemPriceSense}\label{lem:price-sense}
   Consider some value distribution $G$, and assume it has a highest %\mbc{can this be "highest"?}
   Myerson price $r_G$. Then $G$ is $(\lambda, C)$-price sensitive for $C = 1 - \frac{G(r_G)^4}{9}$ and $\displaystyle \lambda = \lambda(G) = \min_{C' \leq C} \frac{\rev_{G}(r_G) - \rev_{G}(C' \cdot r_G)}{(1-C') \cdot r_G}$. 
\end{restatable}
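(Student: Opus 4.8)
The plan is to verify the claim directly against \Cref{def:priceSensitivity}, for which the required inequality is essentially built into the definition of $\lambda(G)$. Indeed, since $\lambda(G) = \min_{C' \le C} \frac{\rev_G(r_G) - \rev_G(C' r_G)}{(1-C') r_G}$, for every $\alpha \in [0, C]$ we immediately get $\frac{\rev_G(r_G) - \rev_G(\alpha r_G)}{(1-\alpha) r_G} \ge \lambda(G)$, which is exactly the defining inequality of $(\lambda(G), C)$-price-sensitivity. So the real work is to check that the stated parameters are \emph{admissible} for \Cref{def:priceSensitivity}: that $C \in (0,1)$ and that $\lambda(G)$ is a well-defined positive real.

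First I would handle $C \in (0,1)$: since $G(r_G) = \Pr_{v\sim G}[v < r_G] \le 1$ we have $G(r_G)^4 / 9 \le 1/9$, so $C = 1 - G(r_G)^4/9 \ge 8/9 > 0$; and since there is positive mass strictly below $r_G$ (equivalently $r_G$ exceeds the left endpoint of the support, which is the regime in which this lemma gets used), $G(r_G) > 0$ and hence $C < 1$. Next I would show $\lambda(G) > 0$ by a compactness/semicontinuity argument on the revenue curve. The map $x \mapsto \rev_G(x) = x(1 - G(x))$ is upper semicontinuous (it is left-continuous, and its right limits can only jump downward, at atoms of $G$), so it attains a maximum $M$ over the compact interval $[0, C r_G]$; since $r_G$ is a revenue-maximizing price and $Cr_G < r_G$ we have $M \le \rev_G(r_G)$. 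Granting the strict inequality $M < \rev_G(r_G)$ (see below), we get $\rev_G(r_G) - \rev_G(C' r_G) \ge \rev_G(r_G) - M > 0$ for every $C' \in [0,C]$, while the denominator $(1-C')r_G$ is continuous and bounded away from $0$ on $[0,C]$ (as $C < 1$); hence the ratio is bounded below by $(\rev_G(r_G) - M)/r_G > 0$, so $\lambda(G) > 0$. One can also note that $C' \mapsto \rev_G(C'r_G)$ is upper semicontinuous, so the ratio is lower semicontinuous on the compact set $[0,C]$ and the infimum defining $\lambda(G)$ is actually attained.

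The hard part will be the strict inequality $\rev_G(x) < \rev_G(r_G)$ for all $x \le C r_G$ (i.e.\ $M < \rev_G(r_G)$). Some genuine assumption on $G$ is needed here: it holds whenever $r_G$ is the \emph{unique} revenue-maximizing price (the hypothesis under which this lemma feeds into the i.i.d.\ corollary of \Cref{subsec:approxopt:implications}), but it can fail for degenerate $G$ --- e.g.\ a two-point distribution both of whose atoms maximize revenue --- in which case $\lambda(G) = 0$ and $G$ fails to be price-sensitive, which is exactly the edge case this condition is meant to exclude. I would therefore carry out the argument under the uniqueness hypothesis (or, more weakly, that no revenue-maximizer lies in $[0, Cr_G]$). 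As a closing remark, the particular constant $C = 1 - G(r_G)^4/9$ is irrelevant to the qualitative statement; it is chosen so that this lemma specializes \Cref{thm:generalPrincipalBundleRev} to the i.i.d.\ case, where $K(G,\dots,G) = 1$ and therefore $8K+1 = 9$.
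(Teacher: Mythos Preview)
Your core argument is exactly the paper's: the defining inequality of $(\lambda(G),C)$-price-sensitivity is built into $\lambda(G) = \min_{C' \le C} \frac{\rev_G(r_G)-\rev_G(C'r_G)}{(1-C')r_G}$, so there is nothing more to check. The paper's proof is literally that one sentence and stops.

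Where you diverge from the paper is in the extra work verifying that the parameters are \emph{admissible} for \Cref{def:priceSensitivity}, i.e.\ $C \in (0,1)$ and $\lambda(G) > 0$. The paper's proof of this lemma does not do this at all. In particular, the paper does not prove $\lambda(G) > 0$ here; that positivity is only established later, in the proof of \Cref{cor:iidRev}, under the additional hypothesis that $r_G$ is the \emph{unique} Myerson price --- exactly the hypothesis you correctly identify as necessary. So your observation that the lemma as stated (with only a ``highest'' Myerson price) does not by itself force $\lambda(G) > 0$ is accurate; the paper simply treats the lemma as a tautological repackaging of the definition and defers the substantive positivity check to the corollary. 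Your compactness/semicontinuity argument for $\lambda(G) > 0$ is sound under uniqueness and is more explicit than what the paper eventually says, but it is additional content beyond what this lemma's proof in the paper contains.
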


Let $\sigma_G(r_G)$ %\mbc{I suggest changing the notation to  $\sigma_G(r_G)$ as it makes the dependence clearer. Also the star looks like indicating optimality (as in $r_G$ be best price), but it is not the case here.} 
denote the standard deviation of $\min(v, r_G)$, where $v \sim G$.  
We next show that for a distribution $G$ that is $(\lambda, C)$-price sensitive and has a highest 
Myerson price $r_G$, 
the sum of $m$ i.i.d. samples from distribution $G$ will reach a target minimal variance threshold $\frac{12}{(\lambda(1-C))^{3/2}} \cdot r_G$ if we take $m$ that is large enough. 
\begin{restatable}[Independent and Identical Items, variance]{lemma}{lemIIDVar}\label{lem:iid-var}
    Consider markets with $m$ i.i.d items, each with value distribution $\dist_i = G$ that is $(\lambda, C)$-price sensitive and has a highest %unique 
    Myerson price $r_G$. 
    If $m \geq \frac{144}{\lambda^3(1-C)^3} \cdot \paren*{\frac{r_G}{\sigma_G(r_G)}}^2$ then $\sigma(V(\vec{r})) \geq \frac{12}{(\lambda(1-C))^{3/2}} \cdot r_G$. 
\end{restatable}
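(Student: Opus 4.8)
The plan is a one-line variance computation followed by a numerical rearrangement. Since the $m$ items are i.i.d., the truncated values $V_1(r_1),\ldots,V_m(r_m)$ are i.i.d.\ copies of $\min(v,r_G)$ with $v\sim G$, so $V(\vec r)=\sum_{i=1}^m V_i(r_i)$ is a sum of $m$ independent random variables each of variance $\sigma_G(r_G)^2$. Additivity of variance over independent summands gives $\mathrm{Var}(V(\vec r))=m\,\sigma_G(r_G)^2$, and hence $\sigma(V(\vec r))=\sqrt{m}\,\sigma_G(r_G)$.

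It then remains to verify that the assumed bound on $m$ forces $\sqrt{m}\,\sigma_G(r_G)\geq \frac{12}{(\lambda(1-C))^{3/2}}\,r_G$. If $\sigma_G(r_G)=0$ then the hypothesis $m\geq \frac{144}{\lambda^3(1-C)^3}\,(r_G/\sigma_G(r_G))^2$ is an inequality with infinite right-hand side and is never satisfied, so the statement holds vacuously; thus assume $\sigma_G(r_G)>0$. Dividing through and taking square roots, and using $\lambda^3(1-C)^3=(\lambda(1-C))^3$, the hypothesis is equivalent to $\sqrt{m}\geq \frac{12}{(\lambda(1-C))^{3/2}}\cdot\frac{r_G}{\sigma_G(r_G)}$; multiplying both sides by $\sigma_G(r_G)>0$ and using the identity $\sigma(V(\vec r))=\sqrt{m}\,\sigma_G(r_G)$ from the previous paragraph yields exactly $\sigma(V(\vec r))\geq \frac{12}{(\lambda(1-C))^{3/2}}\,r_G$, as claimed.

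There is essentially no obstacle here: the substantive content is merely that in the i.i.d.\ case the standard deviation of the grand bundle's truncated value grows like $\sqrt{m}$, so it eventually clears any fixed threshold, and the stated bound on $m$ is precisely the numerical inversion of that $\sqrt{m}$ growth (with all the $\lambda$ and $1-C$ factors tracked). The only mild subtlety worth a sentence in the writeup is the degenerate point-mass case $\sigma_G(r_G)=0$, handled vacuously as above; note that price sensitivity alone (Lemma~\ref{lem:price-sense}) does not exclude a point mass, but the variance hypothesis does.
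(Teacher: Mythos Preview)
Your proof is correct and follows exactly the same approach as the paper: use independence to get $\sigma(V(\vec r))=\sqrt{m}\,\sigma_G(r_G)$, then rearrange the hypothesis on $m$. The only addition is your explicit handling of the degenerate case $\sigma_G(r_G)=0$, which the paper omits.
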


Combining \Cref{lem:price-sense}, \Cref{lem:iid-var} and \Cref{thm:generalPrincipalBundleRev} we derive the following corollary.

\begin{restatable}[Independent and Identical Items, revenue]{corollary}{corIIDRev}\label{cor:iidRev}
Let $G$ be a distribution that has a unique Myerson price $r_G>inf (G)$.
Then for 
all sufficiently large $m$ {(polynomial in parameters of $G$)} %\mbc{I wonder if we can argue something stronger. Say that the number is poly in some parameter of $G$ that we can identify}
and valuation distribution $\dist=\times_{i=1}^m G= G^m$, 
the principal can find a price for the grand bundle such that she obtains revenue at least a $\frac{1}{3}$ fraction of the expected truncated social welfare $\sum_{i}\E[\min(v_i, r_i)]$ in every equilibrium of $G_{p,\dist}$. 
\end{restatable}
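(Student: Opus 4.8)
The plan is to check that the i.i.d.\ instance $\dist = G^m$ satisfies every hypothesis of \Cref{thm:generalPrincipalBundleRev} once $m$ is large enough, and then invoke that theorem. Since all items share the distribution $G$, we have $r_i = r_G$ and the quantities $r_i - \mu_i(r_i)$ are all equal, so $K(\dist_1,\ldots,\dist_m)=1$ and $\max_{j\in\items} r_j = r_G$. Consequently the value of $C$ that \Cref{thm:generalPrincipalBundleRev} demands is $C = 1 - \tfrac{\min_i F_i(r_i)^4}{8K+1} = 1 - \tfrac{G(r_G)^4}{9}$ (using $K=1$ and $F_i = G$), which is exactly the value of $C$ produced by \Cref{lem:price-sense}. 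Hence \Cref{lem:price-sense} already gives that every item's distribution is $(\lambda, C)$-price sensitive with $\lambda = \lambda(G)$ and this precise $C$.

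The next step is to verify that $1-C$, $\lambda(G)$ and $\sigma_G(r_G)$ — all of which appear in denominators in the hypotheses of \Cref{thm:generalPrincipalBundleRev} and \Cref{lem:iid-var} — are strictly positive constants determined by $G$. That $1-C = \tfrac{G(r_G)^4}{9} > 0$ follows because $r_G$ lies strictly above the infimum of the support of $G$, so $G(r_G) = \Pr_{v\sim G}[v < r_G] > 0$. For $\lambda(G) > 0$: the revenue curve $\rev_{G}(\cdot)$ is upper semi-continuous, hence attains its maximum over the compact interval $[0, C\cdot r_G]$ at some point $x^\star \le C r_G < r_G$; since $r_G$ is the \emph{unique} revenue-maximizing price we get $\rev_{G}(x^\star) < \rev_{G}(r_G)$, and therefore $\lambda(G) \geq \tfrac{\rev_{G}(r_G) - \rev_{G}(x^\star)}{r_G} > 0$ (each term in the min defining $\lambda(G)$ has numerator at least $\rev_{G}(r_G) - \rev_{G}(x^\star)$ and denominator at most $r_G$). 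Finally $\sigma_G(r_G) > 0$, i.e.\ $\min(v, r_G)$ is not a.s.\ constant: otherwise either $v \ge r_G$ a.s.\ (contradicting $G(r_G) > 0$), or $G$ is a point mass at some $c \le r_G$, whose unique Myerson price is $c = \inf(G)$ (contradicting $r_G > \inf(G)$).

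With $\lambda(G), C, r_G, \sigma_G(r_G)$ now fixed positive constants and $1-C > 0$, \Cref{lem:iid-var} shows that any $m \ge \tfrac{144}{\lambda(G)^3 (1-C)^3}\cdot\big(\tfrac{r_G}{\sigma_G(r_G)}\big)^2$ — a threshold that is polynomial in $1/\lambda(G)$, $1/(1-C) = 9/G(r_G)^4$, $r_G$ and $1/\sigma_G(r_G)$, i.e.\ in the parameters of $G$ — forces $\sigma(V(\vec{r})) \ge \tfrac{12}{(\lambda(G)(1-C))^{3/2}} \cdot r_G = \tfrac{12}{(\lambda(G)(1-C))^{3/2}}\cdot\max_{j}r_j$, which is exactly the variance hypothesis of \Cref{thm:generalPrincipalBundleRev}. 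Having verified both the price-sensitivity hypothesis and the variance hypothesis, \Cref{thm:generalPrincipalBundleRev} applies directly with grand-bundle price $p = m\cdot\E_{v\sim G}[\min(v, C r_G)] + \tfrac{\sigma(V(\vec{r}))}{4}$, yielding principal revenue at least $\tfrac13 \sum_i \E[\min(v_i, r_i)]$ in every Nash equilibrium of $G_{p,\dist}$, as claimed.

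I expect the main obstacle to be the positivity argument in the second paragraph — specifically establishing $\lambda(G) > 0$, which is where the hypotheses ``unique Myerson price'' and ``$r_G > \inf(G)$'' are genuinely used (together with upper semi-continuity of the revenue curve on a compact interval and the exclusion of point masses and of distributions supported on $[r_G,\infty)$). Once the positivity of $\lambda(G)$, $1-C$ and $\sigma_G(r_G)$ is in hand, the remainder is bookkeeping: matching the constant $C$ across \Cref{lem:price-sense} and \Cref{thm:generalPrincipalBundleRev}, and reading the variance bound off \Cref{lem:iid-var}.
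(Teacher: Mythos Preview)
Your proposal is correct and follows essentially the same approach as the paper: verify $K=1$ in the i.i.d.\ case so that the $C$ from \Cref{lem:price-sense} matches the $C$ required by \Cref{thm:generalPrincipalBundleRev}, establish positivity of $1-C$, $\lambda(G)$, and $\sigma_G(r_G)$ from the hypotheses on $G$, then invoke \Cref{lem:iid-var} and \Cref{thm:generalPrincipalBundleRev}. Your treatment of $\lambda(G)>0$ via upper semi-continuity of the revenue curve on $[0, C r_G]$ is more careful than the paper's one-line remark, but the overall argument is the same.
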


\subsubsection{Non-I.I.D. Smooth and Revenue-Concave Valuations}

For non-i.i.d.\ distributions, a similar result can be obtained when all item distributions satisfy the following extra conditions. %\mbc{revise this:} Note again (as discussed in \Cref{sec:model}) that the item seller prices are supported on $\mathcal{H} = \{0, \eps, 2\eps, \ldots, 1\}$ for some $\eps >0$.Here, we borrow definition from finite difference approximation to define concavity for the revenue curve with finite domain on multiples of $\varepsilon$. 
The following notion captures a function with negative second derivative bounded away from $0$. 

\begin{definition}[$\delta$-revenue-concave] \label{def:rev-concave}
    For $\delta>0$, we say that a 
    distribution $G$ is \emph{$\delta$-revenue-concave} if $G$ is twice differentiable, and for revenue function $h(x) := \rev_G(x) = x \cdot (1 - G(x))$, it holds that $h''(x) \leq -\delta/r_G$ for all $x \in (0, r_{G})$, where $r_G$ is the highest Myerson price of $G$.\footnote{Note that $\delta$-revenue-concavity implies that the Myerson price is unique.} 
\end{definition}

The following notion captures a smooth distribution, where the probability density function is bounded away from $0$ from below.

\begin{definition}[$\delta$-smooth] \label{def:dist-smooth}
    For $\delta>0$, we say that a distribution $G$ is \emph{$\delta$-smooth} if {$G$ is differentiable and $G'(x) \geq \delta/r_G$ for every $x \in (0, r_G)$, where $r_G$ is the highest Myerson price of $G$. }
\end{definition}

We now show that when a distribution is $\delta$-revenue-concave and $\delta$-smooth, we can lower bound the price sensitivity of the distribution and the variance of truncated welfare, using $\delta$ as a parameter. 

\begin{restatable}[Revenue Concave Distributions, price sensitivity]{lemma}{lemSmoothPriceSense}\label{lem:smooth-price-sense}
   Consider some value distribution $G$ that for some $\delta>0$ is
    $\delta$-revenue-concave. Let $r_G$ be the unique Myerson price of $G$.  %MB: changed to emphesize it is derived from the assumption. %OLD: , with a unique Myerson price $r_G$. 
   Then $G$ is $(\lambda, C)$-price sensitive for {$\lambda \leq \frac{\delta}{2} \cdot (1 - C)$.}%\linda{$\lambda \leq \frac{\delta}{2} \cdot (1 - C)^2 \cdot (r_G - \frac{\eps}{1-C})^2$}. \mbc{as we want $\lambda$ to be as large as possible, I think we better write it as equality "$\lambda=...$"} \lindac{We will need to use a bound on this in main theorem, because $C \geq 8/9$. } \mbc{I am not following. I think we need $(r_G - \frac{\eps}{1-C})\geq \eps$} \lindac{Yes, which is why given $r_{min} > 2\eps$ and $C > 8/9$, $(r_G - \frac{\eps}{1-C})\geq \eps$ would be true. }\mbc{I think the inequality goes the other way. As $C$ approaches 1 you need $r_G$ to grow much larger than $2\eps$. But I think this is resolved by first fixing the distribution and then taking $\eps$ to $0$, so for any fixed $C<1$ the term $(r_G - \frac{\eps}{1-C})$ will approach $r_G$ as $\eps$ goes to 0.}
\end{restatable}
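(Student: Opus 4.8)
The plan is to argue directly on the revenue curve $h(x) := \rev_G(x) = x(1 - G(x))$, which by $\delta$-revenue-concavity (\Cref{def:rev-concave}) is twice differentiable on $(0, r_G)$ with $h''(x) \le -\delta/r_G$ there, and whose maximizer is $r_G$. The proof is essentially a two-fold integration of this second-derivative bound, pinned down by the maximality of $r_G$.

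First I would show that $h' \ge 0$ on $(0, r_G)$. Since $h'' < 0$ there, $h'$ is strictly decreasing on $(0, r_G)$; if $h'(t_0) < 0$ for some $t_0 < r_G$, then $h' < 0$ on all of $[t_0, r_G)$, so $h$ is strictly decreasing on that interval and, by continuity of $h$, $h(t_0) > h(r_G)$, contradicting that $r_G$ maximizes $h$. Next I would turn this into a pointwise lower bound on the derivative: for $t < t' < r_G$, integrating $-h'' \ge \delta/r_G$ over $[t, t']$ gives $h'(t) \ge h'(t') + \frac{\delta}{r_G}(t' - t) \ge \frac{\delta}{r_G}(t' - t)$, and letting $t' \uparrow r_G$ yields $h'(t) \ge \frac{\delta}{r_G}(r_G - t)$ for every $t \in (0, r_G)$. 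Taking this limit lets me avoid having to evaluate $h'$ at the endpoint $r_G$.

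Finally I would integrate once more. For any $\alpha \in [0, C]$ we have $\alpha r_G \in [0, r_G)$, and by the fundamental theorem of calculus together with the previous bound,
\[
\rev_G(r_G) - \rev_G(\alpha r_G) = \int_{\alpha r_G}^{r_G} h'(t)\, dt \ \ge\ \int_{\alpha r_G}^{r_G} \frac{\delta}{r_G}(r_G - t)\, dt \ =\ \frac{\delta}{2 r_G}\big((1-\alpha) r_G\big)^2 \ =\ \frac{\delta}{2}(1-\alpha)^2 r_G .
\]
Dividing by $(1-\alpha) r_G > 0$ and using $\alpha \le C$ gives $\frac{\rev_G(r_G) - \rev_G(\alpha r_G)}{(1-\alpha) r_G} \ge \frac{\delta}{2}(1-\alpha) \ge \frac{\delta}{2}(1-C)$. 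Since this holds for all $\alpha \in [0,C]$, $G$ is $\left(\frac{\delta}{2}(1-C),\, C\right)$-price sensitive, and hence $(\lambda, C)$-price sensitive for every $\lambda \le \frac{\delta}{2}(1-C)$, as claimed.

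I do not expect a genuinely hard step here: the only points requiring care are (i) the boundary behavior at $r_G$, which I handle by taking a limit $t' \uparrow r_G$ rather than differentiating at $r_G$ itself, and (ii) the nonnegativity of $h'$ on $(0, r_G)$, which follows from concavity plus maximality of $r_G$. (Note that $\delta$-revenue-concavity forces $h$ to be strictly concave on all of $(0, r_G)$, so implicitly the density of $G$ is positive there; this is harmless for the argument.)
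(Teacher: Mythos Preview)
Your proof is correct and follows essentially the same approach as the paper: integrate the bound $h'' \le -\delta/r_G$ once to get $h'(t) \ge \frac{\delta}{r_G}(r_G - t)$, then integrate again to obtain $\rev_G(r_G) - \rev_G(\alpha r_G) \ge \frac{\delta}{2}(1-\alpha)^2 r_G$ and divide through. The only differences are cosmetic: the paper uses $h'(r_G)=0$ directly and then argues (via \Cref{lem:price-sense} and concavity) that the minimum of the ratio over $\alpha \le C$ is attained at $\alpha = C$, whereas you handle the endpoint by a limit and verify the price-sensitivity inequality for every $\alpha \in [0,C]$ directly.
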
 

\begin{restatable}[Smooth Distributions, variance]{lemma}{lemSmoothVariance}\label{lem:smooth-variance}
    Consider some value distribution $G$ supported on $[0,1]$ % MB: We use it very strongly, so I am emphesizing it. 
    that is $\delta$-smooth with a highest Myerson price $r_G$.
    Then the variance of $\min(v, r_G)$, where $v \sim G$, is   
    $\Omega(\delta) \cdot (r_G)^2$.     %\mbc{If we move to $G$ being twice differentiable, we will not need this $\eps$ term}
    Moreover, $r_G - \E_{v \sim G}[\min(v, r_G)] = \Omega(\delta) \cdot r_G$. 
\end{restatable}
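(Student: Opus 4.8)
The plan is to derive both statements from a single elementary consequence of $\delta$-smoothness. Since $G'(x)\ge \delta/r_G$ for every $x\in(0,r_G)$ and $G(0)=\Prob_{v\sim G}[v<0]=0$ (as $G$ is supported in $[0,1]$), integrating the density bound gives
\[
G(x_2)-G(x_1)\;\ge\;\frac{\delta\,(x_2-x_1)}{r_G}\qquad\text{for all }0\le x_1\le x_2\le r_G,
\]
and in particular $G(x)\ge \delta x/r_G$ on $[0,r_G]$. Every estimate below uses only this lower bound on the mass $G$ assigns to subintervals of $[0,r_G]$; a possible atom of $\min(v,r_G)$ at $r_G$ will only ever help. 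Throughout write $Y:=\min(v,r_G)$.

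\emph{The ``moreover'' claim.} Using the standard identity $r_G-\E[Y]=\E[(r_G-v)^+]=\int_0^{r_G}G(x)\,dx$ together with the bound above,
\[
r_G-\E_{v\sim G}\!\big[\min(v,r_G)\big]\;=\;\int_0^{r_G}G(x)\,dx\;\ge\;\int_0^{r_G}\frac{\delta x}{r_G}\,dx\;=\;\frac{\delta r_G}{2}\;=\;\Omega(\delta)\cdot r_G,
\]
which also passes the sanity check $r_G-\E[Y]\le r_G$, since $\delta\le G(r_G)\le 1$.

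\emph{The variance claim.} Because $\delta$-smoothness gives no \emph{upper} bound on the density, I would not try to compute $\mathrm{Var}(Y)$ directly but instead run a short anticoncentration/two-point argument. Consider the buckets $L=\{Y\le r_G/3\}$ and $R=\{Y\ge 2r_G/3\}$. Since $r_G/3<r_G$, $\Prob[L]=G(r_G/3)\ge \delta/3$; since $2r_G/3<r_G$ and $G(r_G)\le 1$, $\Prob[R]\ge G(r_G)-G(2r_G/3)\ge \delta/3$. Let $\mu=\E[Y]$. If $\mu\le r_G/2$, then $(Y-\mu)^2\ge(2r_G/3-r_G/2)^2=(r_G/6)^2$ on the event $R$, so $\mathrm{Var}(Y)\ge \Prob[R]\,(r_G/6)^2\ge \tfrac{\delta}{108}r_G^2$; if instead $\mu\ge r_G/2$, then $(Y-\mu)^2\ge(r_G/6)^2$ on $L$, so $\mathrm{Var}(Y)\ge \Prob[L]\,(r_G/6)^2\ge \tfrac{\delta}{108}r_G^2$. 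Either way $\mathrm{Var}(\min(v,r_G))=\Omega(\delta)\cdot r_G^2$.

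The argument is essentially routine; the one subtlety to flag is that the claimed dependence on $\delta$ in the variance bound is \emph{linear}, so one cannot simply combine the two bucket probabilities multiplicatively (for instance, the identity $\mathrm{Var}(Y)=\tfrac12\,\E_{Y,Y'\ \mathrm{i.i.d.}}\!\big[(Y-Y')^2\big]\ge \Prob[L]\,\Prob[R]\,(r_G/3)^2$ would yield only $\Omega(\delta^2)\cdot r_G^2$). Comparing the two buckets to the mean $\mu$, as above, is precisely what recovers the linear dependence. Any fixed choice of bucket endpoints works; optimizing them would only improve the constant.
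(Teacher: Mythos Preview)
Your proof is correct. The ``moreover'' claim is handled exactly as in the paper (integrating the lower bound $G(x)\ge \delta x/r_G$ over $[0,r_G]$), so there is nothing to compare there.

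For the variance bound the two arguments diverge. The paper works directly with the density: it lower-bounds
\[
\mathrm{Var}(Y)\;\ge\;\int_0^{r_G} G'(x)\,(x-\mu)^2\,dx\;\ge\;\frac{\delta}{r_G}\int_0^{r_G}(x-\mu)^2\,dx\;=\;\frac{\delta}{3r_G}\bigl((r_G-\mu)^3+\mu^3\bigr),
\]
and then observes that $(r_G-\mu)^3+\mu^3$ is minimized at $\mu=r_G/2$ with value $r_G^3/4$, yielding $\mathrm{Var}(Y)\ge \delta r_G^2/12$. Your two-bucket argument instead uses only the \emph{mass} lower bound $G(x_2)-G(x_1)\ge \delta(x_2-x_1)/r_G$ to guarantee probability $\ge \delta/3$ on each of $\{Y\le r_G/3\}$ and $\{Y\ge 2r_G/3\}$, and then compares whichever bucket is farther from $\mu$. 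Both routes give the claimed linear dependence $\Omega(\delta)\,r_G^2$; the paper's integral yields a tighter constant ($1/12$ versus your $1/108$), while your approach is more elementary in that it never evaluates an integral against the squared deviation or optimizes a cubic. Your closing remark about why the multiplicative combination $\Prob[L]\Prob[R]$ would only give $\Omega(\delta^2)$ is a nice diagnostic and correctly identifies the reason the case split on $\mu$ is needed.
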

\Cref{lem:smooth-variance} shows that the standard deviation of individual items whose distribution is $\delta$-smooth %\mbc{the lemma does not mention $\delta$-revenue-concave. so this should  be removed here (or added to the lemma if needed there). %on the other hand, we need to need this:}%\mbe{such that the highest Myerson price $r_G$ satisfies $r_G = \Theta(1)$,} 
is bounded by $\Omega(\delta)$. Meanwhile, a more detailed reasoning about \Cref{lem:smooth-price-sense} (see a full discussion in \Cref{app:subsec:approx-implication}) shows that in order to apply \Cref{thm:generalPrincipalBundleRev}, it is sufficient for the standard deviation for the buyer's value of all items to be $\Theta\left(\mathsf{poly}\left(\frac{1}{\delta}, \frac{r_{max}}{r_{min}}\right)\right)$, %\mbc{did we define $r_{max}$? How about $r_{min}$? I cannot find. Even if we did, maybe remind the reader the definition of $r_{min}$ as well}\blcomment{Just took a shot at defining it here:}
where $0 < r_{min} \leq r_{max}$ are values such that the interval $[r_{min}, r_{max}]$ contains each item's highest Myerson price. %\mbc{I am little surprised this is not a function of $\eps$} \lindac{This is because $\eps$ is only for discretization, and we assume $r_{min} > 2 \eps$ and is not overly affected by $\eps$}\mbc{the assumption $r_{min} > 2 \eps$ will not be needed once we pick $\eps$ to be small (after given the distributions)}, \mbe{where $r_{min}$ lower bounds the Myerson price of every item distribution ($r_{min}=\min_i r_i$)}. 
Taking the number of items $m$ to be sufficiently large ($m=\mathsf{poly}\left(\frac{1}{\delta}, \frac{r_{max}}{r_{min}}\right)$) ensures the standard deviation constraint in \Cref{thm:generalPrincipalBundleRev} is satisfied, and yields the following corollary. 
\begin{restatable}[Revenue Concave Smooth Distributions, revenue]{corollary}{corSmoothRev} \label{cor:smooth-rev}
Fix any $r_{max}$, $r_{min}> {0}$. %\mbc{maybe just $r_{min}>0$ as we will take $\eps$ to 0 after fixing the distributions.}. 
Consider a market with $m$ items, where each item $i$ has a value distribution $\dist_i$ that is $\delta$-smooth and $\delta$-revenue-concave, with highest Myerson price $r_i \in [r_{min}, r_{max}]$. %\mbc{I think we want to avoid $r_i = \Theta(1)$ (see comment above). I would write this as:"Fix any $r_{min}>0$.Consider a market with $m$ items, where each item $i$ has a value distribution $\dist_i$ that is $\delta$-smooth and $\delta$-revenue-concave, with highest Myerson price $r_i>r_{min}$."}
When {$m = \Omega(\mathsf{poly}(\frac{1}{\delta}, \frac{r_{max}}{r_{min}}))$}, %is sufficiently large (parametrized by $\delta$), 
the principal can sell the grand bundle and obtain revenue at least a $\frac{1}{3}$ fraction of the expected truncated social welfare $\sum_{i}\E[\min(v_i, r_i)]$ in every equilibrium of $G_{p,\dist}$.
\end{restatable}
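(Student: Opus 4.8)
The plan is to derive \Cref{cor:smooth-rev} directly from \Cref{thm:generalPrincipalBundleRev}, using \Cref{lem:smooth-price-sense} and \Cref{lem:smooth-variance} essentially as black boxes to verify the theorem's two hypotheses. Concretely, I would fix the constant $C = 1 - \frac{\min_{i}F_i(r_i)^4}{8K+1}$ with $K = K(\dist_1,\ldots,\dist_m)$ as in the theorem statement, set $\lambda = \tfrac{\delta}{2}(1-C)$, and split the task into two parts: (i) show that with these choices every $\dist_i$ is $(\lambda,C)$-price sensitive and that $C$ and $\lambda$ are bounded away from $1$ and $0$ by quantities depending only on $\delta, r_{min}, r_{max}$ --- and crucially \emph{not} on $m$; and (ii) show that for $m$ polynomially large the accumulated variance $\sigma(V(\vec{r}))$ exceeds the threshold $\tfrac{12}{(\lambda(1-C))^{3/2}}\max_j r_j$. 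Then \Cref{thm:generalPrincipalBundleRev} applies with the grand-bundle price $p = \sum_i \E[\min(v_i,C\cdot r_i)] + \sigma(V(\vec{r}))/4$ and yields the claimed $\tfrac13$-approximation at every equilibrium.

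For part (i), I would first invoke \Cref{lem:smooth-price-sense}: $\delta$-revenue-concavity of $\dist_i$ gives that $\dist_i$ is $(\lambda,C)$-price sensitive for the chosen $\lambda = \tfrac{\delta}{2}(1-C)$. It then remains to lower-bound $1-C$ independently of $m$. Since $\dist_i$ is $\delta$-smooth, integrating $\dist_i'(x)\geq \delta/r_i$ over $(0,r_i)$ gives $F_i(r_i)\geq \delta$, so $\min_i F_i(r_i)^4 \geq \delta^4$ (here I use, as is implicit in the definitions, that $\delta\leq 1$). For $K$: by \Cref{lem:smooth-variance}, $r_i - \mu_i(r_i) = \Omega(\delta)\cdot r_i \geq \Omega(\delta)\cdot r_{min}$, while trivially $r_i - \mu_i(r_i) \leq r_i \leq r_{max}$; hence $K = \frac{\max_i\{r_i-\mu_i(r_i)\}}{\min_i\{r_i-\mu_i(r_i)\}} = O(\tfrac{1}{\delta}\cdot\tfrac{r_{max}}{r_{min}})$, a bound holding for every $m$. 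Combining, $1-C = \frac{\min_i F_i(r_i)^4}{8K+1} = \Omega(\mathsf{poly}(\delta,\,r_{min}/r_{max})) > 0$, and therefore also $\lambda = \Omega(\mathsf{poly}(\delta,\,r_{min}/r_{max})) > 0$, both independent of $m$.

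For part (ii), the threshold is $\frac{12}{(\lambda(1-C))^{3/2}}\max_j r_j \leq \mathsf{poly}(\tfrac{1}{\delta},\tfrac{r_{max}}{r_{min}})\cdot r_{max}$, which by part (i) does not grow with $m$. On the other hand, independence of the item values gives $\sigma(V(\vec{r}))^2 = \sum_i \sigma_i(r_i)^2$, and \Cref{lem:smooth-variance} gives $\sigma_i(r_i)^2 = \Omega(\delta)\cdot r_i^2 \geq \Omega(\delta)\cdot r_{min}^2$, so $\sigma(V(\vec{r})) \geq \sqrt{\Omega(\delta)\,m}\cdot r_{min}$, which grows like $\sqrt{m}$. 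Hence choosing $m = \Omega(\mathsf{poly}(\tfrac{1}{\delta},\tfrac{r_{max}}{r_{min}}))$ large enough makes $\sqrt{\Omega(\delta)\,m}\cdot r_{min}$ exceed the threshold; both hypotheses of \Cref{thm:generalPrincipalBundleRev} then hold, and applying it finishes the proof.

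I expect the only genuinely delicate point to be the $m$-independence in part (i): if either $C \to 1$ or $\lambda \to 0$ as $m$ grew, the variance threshold $\frac{12}{(\lambda(1-C))^{3/2}}\max_j r_j$ could outrun the $\Theta(\sqrt{m})$ growth of $\sigma(V(\vec{r}))$ and no finite $m$ would suffice. The bounds $F_i(r_i)\geq\delta$ and $r_i - \mu_i(r_i) \in [\Omega(\delta)\,r_{min},\,r_{max}]$ (the latter from $\delta$-smoothness together with the trivial upper bound) are exactly what keep $K$ finite and $C,\lambda$ bounded away from their degenerate values uniformly over $m$; once that is established, the remainder is routine tracking of polynomial exponents, which I would not carry out in detail.
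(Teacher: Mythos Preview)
Your proposal is correct and follows essentially the same approach as the paper's proof: bound $K$ via \Cref{lem:smooth-variance} and the trivial upper bound $r_i-\mu_i(r_i)\leq r_{max}$, use $\delta$-smoothness to get $F_i(r_i)\geq\delta$ and hence $1-C=\Omega(\mathsf{poly}(\delta,r_{min}/r_{max}))$, invoke \Cref{lem:smooth-price-sense} for price sensitivity with $\lambda=\tfrac{\delta}{2}(1-C)$, and then accumulate per-item variance from \Cref{lem:smooth-variance} to beat the fixed threshold for large enough $m$. The paper tracks the polynomial exponents explicitly (arriving at $m=\Omega(\delta^{-24}(r_{max}/r_{min})^6)$), but since the corollary only asserts a polynomial bound, your choice to omit that bookkeeping is fine.
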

\subsection{{Outline of the Proof of Theorem \ref{thm:generalPrincipalBundleRev}}} \label{subsec:approxopt:proof} % MB: I do not think we have any discussion.
\paragraph{Additional Notations. } {Expanding on the notation of $V_i(r_i)$ that we defined earlier in the section, we will use $V_i(s_i)$ to denote the random variable $\min(v_i, q_i)$, where {$v_i\sim \dist_i $ and} $q_i$ is a random pricing strategy that is drawn from the item seller $i$'s mixed strategy $s_i$. We will denote $\mu_i(s_i) = \mu(V_i(s_i))$ and $\sigma_i(s_i) = \sigma(V_i(s_i))$ as the mean and standard deviation of $V_i(s_i)$, respectively. Similarly, we will use $\mu(\vec{s})$ and $\sigma(\vec{s})$. to represent the mean and standard deviation of $\sum_{j \in \items} V_j(s_j)$, respectively. We also use $\mu_{-i}(s_{-i})$ and $\sigma_{-i}(s_{-i})$ to represent the mean and standard deviation of $\sum_{j \neq i} V_j(s_j)$, respectively. For the sake of simplicity in notation, when item seller $i$'s strategy $s_i$ is pure and pricing at $q_i$, we would just use $q_i$ to represent $s_i$. For example, $V_i(C \cdot r_i)$ denote the random variable of $\min(v_i, C \cdot r_i)$ when $v_i\sim F_i$. }

\paragraph{{Preliminary: The Berry-Esseen Theorem}}In our proof of the main theorem, we will need to obtain a tight bound on the following probability term in \Cref{eq:item-seller-util-bundle}, which can be viewed as the c.d.f of the sum of $m-1$ independent random variables $V_j(s_j)$. 
\begin{align}
    \Prob_{v_{-i} \sim \dist_{-i}, q_{-i} \sim s_{-i}}\sq*{p(\items) - q_i \geq \sum_{j \in \items, j \neq i} \min(v_j, q_j)} = \Prob_{v_{-i} \sim \dist_{-i}}\sq*{p(\items) - q_i \geq \sum_{j \in \items, j \neq i} V_j(s_j)}. \label{eq:item-seller-prob-term2}
\end{align}
To bound the difference between the distribution of $\sum_{j \in \items, j \neq i} V_j(s_j)$ and that of a Gaussian, we will make extensive use of the Berry-Esseen Theorem, which is stated below. 
\begin{theorem}[Berry-Esseen Theorem~\cite{berry1941accuracy,esseen1942liapunov}]\label{claim:BerryEsseenGeneral}
    Given independent random variables $Y_1, \cdots, Y_m$ with mean $0$, for any index $j \in [m]$, let $\sigma_i$ be the standard deviation of $Y_i$, and let $\rho_{i} = \E[|Y_i|^3]$. Let $\sigma$ be the standard deviation of $\sum_{i=1}^m Y_i$ ($\sigma^2 = \sum_{i=1}^m \sigma_i^2$).  Then the Kolmogorov–Smirnov distance between $\frac{\sum_{i} Y_i}{\sigma}$ and the standard normal distribution $\N(0, 1)$ is at most $0.5606 \cdot \frac{1}{\sigma} \cdot \max_{i=1}^m \frac{\rho_i}{\sigma^2_i}$. %\mbc{if we indeed never use on of the claims, then I think we should remove it from the statement. }
\end{theorem}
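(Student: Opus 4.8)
The plan is to obtain this statement as a direct corollary of the classical Berry--Esseen theorem for sums of independent (not necessarily identically distributed) mean-zero random variables, which we treat as a known result. In its standard form that theorem states: if $Y_1,\dots,Y_m$ are independent with $\E[Y_i]=0$, finite $\rho_i=\E[|Y_i|^3]$, and $\sigma^2=\sum_i\sigma_i^2>0$, then the Kolmogorov--Smirnov distance between the law of $(\sum_i Y_i)/\sigma$ and $\N(0,1)$ is at most $C_0\cdot \sigma^{-3}\sum_{i=1}^m\rho_i$, where $C_0$ is an absolute constant; successive refinements of the original bounds of Berry and Esseen show one may take $C_0\le 0.5606$ in the non-identically-distributed case. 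So, granting this, it suffices to bound the ``sum'' form $\sigma^{-3}\sum_i\rho_i$ by the ``max'' form $\sigma^{-1}\max_i(\rho_i/\sigma_i^2)$ appearing in the statement.

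\textbf{Key step.} The only substantive computation is the elementary inequality
\[
\sum_{i=1}^m \rho_i \;=\; \sum_{i=1}^m \frac{\rho_i}{\sigma_i^2}\,\sigma_i^2 \;\le\; \Bigl(\max_{i}\frac{\rho_i}{\sigma_i^2}\Bigr)\sum_{i=1}^m \sigma_i^2 \;=\; \Bigl(\max_{i}\frac{\rho_i}{\sigma_i^2}\Bigr)\sigma^2,
\]
which uses only $\sigma^2=\sum_i\sigma_i^2$ and $\sigma_i^2\ge 0$. (Terms with $\sigma_i=0$ have $Y_i=0$ almost surely, hence $\rho_i=0$, and may be dropped before taking the maximum; if all $\sigma_i=0$ the statement is vacuous.) Dividing through by $\sigma^3$ gives $\sigma^{-3}\sum_i\rho_i\le \sigma^{-1}\max_i(\rho_i/\sigma_i^2)$, and combining with the classical bound yields KS distance at most $0.5606\cdot\sigma^{-1}\max_i(\rho_i/\sigma_i^2)$, as claimed. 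Note the ``max'' form is strictly weaker than the ``sum'' form but is exactly what is convenient in Section~\ref{sec:approxopt}, where a single worst-case ratio $\rho_i/\sigma_i^2$ (arising from one truncated value $V_i(s_i)$) is controlled uniformly over $i$.

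\textbf{Main obstacle.} Since we are invoking Berry--Esseen as a black box, the only point requiring care is quoting a correct explicit value of the absolute constant in the non-i.i.d.\ setting; no genuine difficulty arises in the reduction itself. If instead a self-contained proof were desired, the standard route is via characteristic functions together with Esseen's smoothing inequality: one bounds $|\widehat{F}(t)-\widehat{\Phi}(t)|$ for $|t|\le c\,\sigma^3/\sum_i\rho_i$ by Taylor-expanding each $\log\E[e^{itY_i/\sigma}]$ around $0$ and summing the cubic remainders, then integrates against $1/t$; this is where all the work (and the numerical constant) lives. We do not reproduce that argument, as only the inequality above is used downstream.
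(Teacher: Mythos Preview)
The paper does not prove this statement at all: it is quoted as a known result with citations to \cite{berry1941accuracy,esseen1942liapunov} and used as a black box. Your proposal goes slightly further by observing that the ``max'' form stated here follows from the standard ``sum'' form $C_0\,\sigma^{-3}\sum_i\rho_i$ via the elementary inequality $\sum_i\rho_i\le(\max_i\rho_i/\sigma_i^2)\sigma^2$; this reduction is correct and is a useful clarification, since the max form is not the most commonly stated variant.
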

Note that since the Berry-Esseen Theorem requires the random variables of mean $0$, the theorem will be applied on random variables $Y_j = V_j(s_j) - \mu_{j}(s_{j})$. The probability term in \Cref{eq:item-seller-prob-term2} remain unchanged when $\sum_{j \neq i} \mu_{j}(s_{j})$ is subtracted from both sides of the inequality ``$p(\items) - q_i \geq \sum_{j \in \items, j \neq i} V_j(s_j)$''. %For notational convenience, we will also use $p$ to denote the principal's price of the grand bundle, instead of $p(\items)$.  

\paragraph{Preserving Variance Between $\min(v_i, r_i)$ and $V_i(s_i)$. }
When proving our main theorem, we will assume that the buyer's value for the bundle has a much higher variance than the highest Myerson price for any item seller $i$. 
However, this does not immediately translate into an understanding of the variance of $\sum_{j \in \items, j \neq i} V_j(s_j)$, which is the quantity that we actually need to study. 
{Here} we present sufficient conditions for the mean and variance of  $\sum_{j \in \items} V_j(s_j)$ to be within a constant factor of the mean and variance of the random variable $\sum_{j \in \items} \min(v_j, r_j)$. 
{This will enable us to lower bound the mean and variance of $\sum_{j \in \items, j \neq i} V_j(s_j)$ (which is sufficiently close to that of $\sum_{j \in \items} V_j(s_j)$). 
As a result we obtain tight bounds on probability distribution of  the random variable} 
$\sum_{j \in \items, j \neq i} V_j(s_j)$ using the Berry-Esseen Theorem. 

Firstly, when $s_j$ is lower bounded by $C \cdot r_i$ for sufficiently large $C$ (as specified below), Lemma~\ref{coro:constVarianceBound} implies that the variance of $V_j(s_j)$ will be a constant fraction of the variance of $\min(v_j, r_j)$.  

\begin{restatable}{lemma}{coroConstVarianceBound}  \label{coro:constVarianceBound}
     For any $K \geq 1$ and $C$ such that $C \geq 1 - \frac{\dist_i(r_i)^4}{2K+1}$, 
     it holds that $\sigma_i^2(C \cdot r_i) \geq \paren*{1 - \frac{1}{K}} \cdot\sigma_i^2(r_i)$. 
\end{restatable}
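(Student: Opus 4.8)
The plan is to compare the truncated variances $\sigma_i^2(r_i) = \mathrm{Var}(\min(v_i, r_i))$ and $\sigma_i^2(Cr_i) = \mathrm{Var}(\min(v_i, Cr_i))$ and then reduce the claim to a self-contained lower bound on $\sigma_i^2(r_i)$. Since $\min(v_i, r_i) \ge \min(v_i, Cr_i) \ge 0$ pointwise, their means are ordered, so $\sigma_i^2(r_i) - \sigma_i^2(Cr_i) \le \E[\min(v_i,r_i)^2] - \E[\min(v_i, Cr_i)^2]$. Writing each second moment of a nonnegative variable through its tail, this difference equals $\int_{Cr_i}^{r_i} 2z\,\Prob[v_i > z]\,dz \le \int_{Cr_i}^{r_i} 2\,\rev_{\dist_i}(z)\,dz$, and since $r_i$ is revenue-maximizing every $\rev_{\dist_i}(z) \le \rev_{\dist_i}(r_i) = r_i(1 - \dist_i(r_i))$; hence $\sigma_i^2(r_i) - \sigma_i^2(Cr_i) \le 2 r_i^2(1 - \dist_i(r_i))(1 - C)$. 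Plugging in the hypothesis $1 - C \le \dist_i(r_i)^4/(2K+1)$ and using $\tfrac{2K}{2K+1} < 1$, the right side is at most $\tfrac1K\,r_i^2 \dist_i(r_i)^4(1 - \dist_i(r_i))$, so it suffices to prove the distribution-only inequality $\mathrm{Var}(\min(v_i, r_i)) \ge r_i^2 \dist_i(r_i)^4 (1 - \dist_i(r_i))$, which then rearranges to $\sigma_i^2(Cr_i) \ge (1 - \tfrac1K)\sigma_i^2(r_i)$.

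For that variance lower bound I would again exploit revenue-optimality, now to prevent $\min(v_i, r_i)$ from being too concentrated. Write $q = \dist_i(r_i)$ and $p = 1-q$ and condition on whether $v_i \ge r_i$; the law of total variance gives $\mathrm{Var}(\min(v_i, r_i)) = q\,\mathrm{Var}(v_i \mid v_i < r_i) + pq\,(r_i - \E[v_i \mid v_i < r_i])^2$. Revenue-optimality forces $\Prob[v_i \ge t] \le \rev_{\dist_i}(r_i)/t = r_i p / t$ for every $t > 0$, which keeps the conditional law of $v_i$ given $v_i < r_i$ from piling up just below $r_i$: quantitatively it bounds $\E[v_i \mid v_i < r_i]$ away from $r_i$ (controlling the second term) and, where that is not enough, bounds $\mathrm{Var}(v_i \mid v_i < r_i)$ from below (controlling the first term). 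Combining the two contributions should give the target $r_i^2 q^4 p$; and the bound is essentially tight — the two-point distribution with mass $q$ at $r_i p$ and mass $p$ at $r_i$ is a legitimate instance with highest Myerson price $r_i$ and has $\mathrm{Var}(\min(v_i, r_i)) = r_i^2 q^3 p$.

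I expect this variance lower bound to be the main obstacle, since neither piece of the total-variance decomposition suffices on its own across the whole range of $q = \dist_i(r_i)$: when $q$ is bounded away from $1$ the mean-gap term dominates and the tail bound on $\E[v_i \mid v_i < r_i]$ is all one needs, whereas when $q$ is very close to $1$ (so $p$, and the target $r_i^2 q^4 p$, are tiny) the crude estimate of $\E[v_i \mid v_i < r_i]$ becomes too weak and one instead argues that the atom of mass $p$ that $\min(v_i, r_i)$ carries at $r_i$ already contributes variance of order $p r_i^2$ because $\E[\min(v_i, r_i)]$ is itself pushed well below $r_i$ by the tail bound. Fitting these regime-dependent estimates together cleanly — or, equivalently, directly proving the sharp extremal bound $\mathrm{Var}(\min(v_i, r_i)) \ge r_i^2 \dist_i(r_i)^3 (1 - \dist_i(r_i))$ via an extreme-point analysis over revenue curves $g = \rev_{\dist_i}$ subject to $g \le g(r_i)$ and $g(t)/t$ nonincreasing — is where the real work sits; the rest of the argument above is routine.
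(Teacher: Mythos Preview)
Your overall strategy coincides with the paper's: upper-bound $\sigma_i^2(r_i) - \sigma_i^2(Cr_i)$ via revenue-optimality of $r_i$, lower-bound $\sigma_i^2(r_i)$ via the same constraint, and take the ratio. Your integral form for the second-moment difference is slightly cleaner than the paper's elementary estimate and avoids a stray $1/C$ factor, but this is cosmetic.

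The gap is exactly where you flag it: the variance lower bound is stated as a target and sketched, not proved. The paper's route here is much more direct than the regime-split or extreme-point analysis you propose. With $p = 1 - \dist_i(r_i)$ and $q = \dist_i(r_i)$, conditioning on $\{v_i \ge r_i\}$ (where $\min(v_i,r_i) = r_i$ deterministically) already gives $\sigma_i^2(r_i) \ge p\,(r_i - \mu_i(r_i))^2$; this is just one piece of your total-variance decomposition after using $r_i - \mu_i(r_i) = q\bigl(r_i - \E[v_i \mid v_i < r_i]\bigr)$. Then the mean is bounded through the tail constraint $\Prob[v_i \ge t] \le \min(1, r_i p/t)$, yielding $\mu_i(r_i) \le r_i p\,(1 + \log(1/p))$, and a one-line convexity check (the function $1 - p(1+\log(1/p)) - (1-p)^2/2$ vanishes together with its derivative at $p=1$ and has second derivative $(1-p)/p \ge 0$) gives $r_i - \mu_i(r_i) \ge r_i q^2/2$, hence $\sigma_i^2(r_i) \ge p\,r_i^2 q^4/4$. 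No case split on $q$ and no extremal analysis over revenue curves is needed. The resulting bound is a constant factor weaker than the target you set, but of the same shape, and is what the paper actually establishes. As a side remark, your two-point law is \emph{not} the extremizer for the variance: the equal-revenue-type law $\Prob[v_i \ge t] = r_i p/t$ on $[r_ip, r_i]$ has variance of order $r_i^2 q^3/3$ when $q$ is small, below the $r_i^2 p q^3$ you conjectured as sharp.
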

Even if \Cref{coro:constVarianceBound} is not satisfied for some items, as long as the mean of $\sum_{j \in \items } V_j(s_j)$ is high enough, the variance of $\sum_{j \in \items } V_j(s_j)$ will also constant approximate the variance of the buyer's truncated welfare.

\begin{restatable}{lemma}{lemHighMeanVar} \label{lem:highMeanVar}
    For any $C \geq 1 - \frac{\min_{i \in \items} F_i(r_i)^4}{8K+1}$, where $K = K(\dist_1,\ldots, \dist_m)$ and any item seller strategy profile such that $\mu(\vec{s}) = \sum_{i} \mu_i(s_i) \geq \sum_{i} \mu_i(C \cdot r_i)$, it holds that  $\sigma^2(\vec{s}) = \sum_i \sigma_i^2(s_i) \geq \frac{1}{2} \sum_{i} \sigma^2_{i}(r_i) = \sigma^2(\vec{r})$.
\end{restatable}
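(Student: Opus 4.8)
\textbf{Proof plan for \Cref{lem:highMeanVar}.}
Since the $V_j(s_j)$ are independent, $\sigma^2(\vec{s})=\sum_i\sigma_i^2(s_i)$ and $\sigma^2(\vec{r})=\sum_i\sigma_i^2(r_i)$, so the plan is to prove a per-item mean--variance trade-off and then aggregate it using the hypothesis on the means. (We may assume each $s_i$ is supported on $[0,r_i]$, since pricing above $r_i$ is weakly dominated; cf.\ \Cref{lem:sellerUtility2}.) The first step is the elementary bound that for every $i$ and every $q\in[0,r_i]$,
\[
 \sigma_i^2(q)\ \ge\ \sigma_i^2(r_i)\ -\ 3\,r_i\,\bigl(\mu_i(r_i)-\mu_i(q)\bigr).
\]
Indeed, writing $\min(v_i,r_i)=\min(v_i,q)+W$ with $W=(\min(v_i,r_i)-q)^+\in[0,r_i]$, both summands are nondecreasing in $v_i$ and hence positively correlated, so $\sigma_i^2(r_i)-\sigma_i^2(q)=2\operatorname{Cov}(\min(v_i,q),W)+\operatorname{Var}(W)\le 3\,r_i\,\E[W]=3\,r_i\,(\mu_i(r_i)-\mu_i(q))$, using $W\le r_i$, $\min(v_i,q)\le r_i$, and $\E[W]=\mu_i(r_i)-\mu_i(q)$. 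Averaging over $q_i\sim s_i$ and using the law of total variance ($\sigma_i^2(s_i)\ge\E_{q_i\sim s_i}[\sigma_i^2(q_i)]$) gives $\sigma_i^2(s_i)\ge\sigma_i^2(r_i)-3r_i(\mu_i(r_i)-\mu_i(s_i))$; summing over $i$ and invoking $\sum_i\mu_i(s_i)\ge\sum_i\mu_i(Cr_i)$,
\[
 \sigma^2(\vec{s})\ \ge\ \sigma^2(\vec{r})\ -\ 3\,(\max_j r_j)\cdot D,\qquad D:=\sum_i\bigl(\mu_i(r_i)-\mu_i(Cr_i)\bigr).
\]
It therefore suffices to show $D\le\sigma^2(\vec{r})\,/\,(6\max_j r_j)$.

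This bound on $D$ is the crux, and is where the parameters $C$ and $K$ enter. Writing $f_i:=F_i(r_i)\in(0,1)$ and $d_i:=r_i-\mu_i(r_i)=\int_0^{r_i}F_i(t)\,dt$, one establishes three elementary facts about each marginal: (i) $\mu_i(r_i)-\mu_i(Cr_i)\le(1-C)\,r_i\,(1-F_i(Cr_i))\le\tfrac{1-C}{C}\,r_i(1-f_i)$, the last step by Myerson optimality $\rev_i(Cr_i)\le\rev_i(r_i)$; (ii) $\sigma_i^2(r_i)\ge d_i^2\cdot\tfrac{1-f_i}{f_i}$, by the law of total variance applied to $(r_i-v_i)^+$ conditioned on whether it vanishes; and (iii) $d_i\ge\tfrac12 r_i f_i^2$, from $\rev_i(x)\le\rev_i(r_i)$ (which forces $F_i(x)\ge 1-r_i(1-f_i)/x$) integrated over $x\in(0,r_i)$. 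Facts (ii)--(iii) give $\sigma_i^2(r_i)\ge\tfrac14 r_i^2 f_i^3(1-f_i)$, and since $\tfrac12 r_i f_i^2\le d_i\le r_i f_i$ the ratio $K=\max_j d_j/\min_j d_j$ controls how dispersed the $r_i$'s and $f_i$'s can be across items. Plugging $1-C=\tfrac{\min_j F_j(r_j)^4}{8K+1}$ into (i) and comparing $D$ with $\sigma^2(\vec{r})=\sum_i\sigma_i^2(r_i)$ via (ii)--(iii) then yields $D\le\sigma^2(\vec{r})/(6\max_j r_j)$. \Cref{coro:constVarianceBound} can be invoked here as well, to directly handle items whose price remains above $Cr_i$ via the cleaner estimate $\sigma_i^2(Cr_i)\ge(1-\tfrac1{4K})\sigma_i^2(r_i)$.

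The main obstacle is exactly this comparison. No individual item need satisfy $\mu_i(r_i)-\mu_i(Cr_i)\lesssim\sigma_i^2(r_i)/\max_j r_j$: an item with a sharply peaked revenue curve has tiny $f_i$ and $d_i$, so its variance is tiny while its deficit need not be. What saves the argument is that such an item simultaneously has the smallest $d_j$, hence forces $K$ to be large, hence forces $1-C$ to be tiny --- and a tiny $1-C$ shrinks \emph{every} other item's deficit. The fourth power on $\min_j F_j(r_j)$ and the $8K+1$ denominator in $1-C$ are calibrated precisely so that these two effects cancel with enough room to reach the factor $\tfrac12$; carrying the constants through (i)--(iii) is routine but delicate. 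Conceptually the content is a single-crossing phenomenon: lowering a truncation threshold from $r_i$ to $Cr_i$ costs little in expectation exactly when item $i$ is near-degenerate, and near-degeneracy of any one item is enough to pin $C$ close to $1$ for all of them.
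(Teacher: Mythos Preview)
Your overall architecture matches the paper's: a per-item mean--variance inequality, aggregated over items and combined with the hypothesis $\mu(\vec{s})\ge\sum_i\mu_i(Cr_i)$, reduces everything to bounding the deficit $D=\sum_i(\mu_i(r_i)-\mu_i(Cr_i))$ against $\sigma^2(\vec{r})$. Your facts (i)--(iii) are each correct. The gap is that the chain (i)--(iii) does \emph{not} close to $D\le\sigma^2(\vec{r})/(6\max_j r_j)$ with the given constants, contrary to your claim that ``carrying the constants through is routine.'' Concretely, in the i.i.d.\ case with $K=1$ your chain gives
\[
 3r\cdot D \ \le\ \frac{3mr^2(1-C)(1-f)}{C},\qquad \sigma^2(\vec{r})\ \ge\ \frac{mr^2 f^3(1-f)}{4},
\]
so the desired $3rD\le\tfrac12\sigma^2(\vec{r})$ becomes $(1-C)\le Cf^3/24$, and plugging $1-C=f^4/9$ forces $f\le 3C/8<3/8$. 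Thus for any i.i.d.\ distribution with $F(r)\ge 3/8$ --- e.g.\ the two-point law $\Pr[v{=}0]=\Pr[v{=}r]=\tfrac12$ --- your route fails outright.

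The loss originates in your first step: you prove $\sigma_i^2(r_i)-\sigma_i^2(q)\le 3r_i\bigl(\mu_i(r_i)-\mu_i(q)\bigr)$, whereas the paper's \Cref{lem:meanToVar} gives the sharper $\sigma_i^2(r_i)-\sigma_i^2(q)\le 2(r_i-\mu_i(r_i))\bigl(\mu_i(r_i)-\mu_i(q)\bigr)$ together with the reverse direction $\mu_i(r_i)-\mu_i(q)\le(\sigma_i^2(r_i)-\sigma_i^2(q))/(2(q-\mu_i(q)))$. The paper then runs a ``there-and-back'' argument: convert variance-deficit to mean-deficit picking up $2\max_i d_i$ (with $d_i=r_i-\mu_i(r_i)$), use the hypothesis, convert back at $q=Cr_i$ picking up $\tfrac{1}{2(Cr_i-\mu_i(Cr_i))}\approx\tfrac{5}{8\min_i d_i}$ (via Claim~\ref{claim:remBound}), and finally bound $\sigma^2(\vec{r})-\sigma^2(\vec{Cr})$ by $\tfrac{1}{4K}\sigma^2(\vec{r})$ using \Cref{coro:constVarianceBound}. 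The product of the two conversion factors is $\tfrac{5}{4}\cdot\tfrac{\max_i d_i}{\min_i d_i}=\tfrac{5K}{4}$, which the $4K$ exactly cancels. Your route replaces $\max_i d_i$ by $\max_j r_j$, and since $r_j/d_j$ can be as large as $2/f_j^2$ there is no corresponding cancellation --- the $\min_j f_j^4$ and $8K+1$ in $1-C$ are calibrated for the $d_i$-based trade-off, not the $r_i$-based one.
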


Now, we are ready to discuss the proof outline of our main theorem. 
\paragraph{Proof Outline of \Cref{thm:generalPrincipalBundleRev}. }
When the principal prices the grand bundle at price $p= \sum_{i} \E[\min(v_i, C \cdot r_i)] + \frac{\sigma(V(\vec{r}))}{4}$,  consider any mixed Nash equilibrium $(s_1, \cdots, s_n)$ between the item sellers in $G_{p,\dist}$. Note that the principal's revenue is just equal to $p$ times the probability that the buyer purchases the grand bundle from the principal. Fix any vector of values $\vec{v}$. The buyer purchases the grand bundle from the principal if and only if 1) their value for the bundle $\sum_{i} v_i$ is at least $p$ and 2) the buyer gains more utility from purchasing all items from the principal seller rather than buying any subset of items from the item sellers. Letting $q_i \sim s_i$ be the realized price set by each item seller $i$, {we show that} 2) is satisfied when $p < \sum_{i} \min(v_i, q_i)$. {(Essentially, this inequality implies that the buyer gains higher utility from purchasing the principal's grand bundle than purchasing any subset of items from the item sellers.)} %\mbc{it is still not clear why we can claim this (that this inequality implies 2. that is, it is a claim about every subset being inferior to buying from the principal)}. 
This automatically implies that condition 1) is satisfied as well.  %which automatically implies \mbc{why? explain why this is true or point to the relevant claim} that 1) is satisfied as well. 
Hence the principal's revenue is just equal to $p \cdot \Prob_{\vec{v}\sim \dist, \vec{q}\sim \vec{s}}\left[p < \sum_{i} \min(v_i, q_i)\right]$.  %\mbc{as mentioned previously, I suspect all the claims about $\Prob[p \leq \sum_{i} \min(v_i, q_i)]$ should be about $\Prob[p < \sum_{i} \min(v_i, q_i)]$ (due to the tie breaking).} \lindac{Search and replace added to todo, now still focusing on finalizing section 4 flow and content. }

To show that this revenue is high, note that since
$K\geq 1$ (by definition), it holds that $C = 1 - \frac{\min_{i \in \items} F_i(r_i)^4}{8K+1}\geq 1-\frac{1}{8+1} = \frac{8}{9}$. Hence $p \geq C \cdot \sum_{i} \E[\min(v_i, r_i)] \geq \frac{8}{9} \cdot \sum_{i} \E[\min(v_i, r_i)]$.
So if it were the case that $\Prob_{\vec{v}\sim \dist, \vec{q}\sim \vec{s}}[p < \sum_{i} \min(v_i, q_i)] \geq 1/2$, 
then the principal seller achieves a revenue of $\frac{1}{2} \cdot \frac{8}{9} \cdot \sum_{i} \E[\min(v_i, r_i)]$, which is at least $\frac{4}{9}$ of  
our revenue benchmark, as required.
It remains to consider the case $\Prob[p < \sum_{i} \min(v_i, q_i)] < 1/2$.  Our goal is to show we can still achieve a high approximation ratio to our revenue benchmark in this case. We show this by proving the following lemma, which shows that each item seller $i$ would never want to price below $C \cdot r_i$. 
\begin{restatable}{lemma}{lemContainedResponse} \label{lem:containedResponse}
Consider a market with value distribution $\dist=\times_i \dist_i$. 
Assume that there exists constant $\lambda > 0$ %\mbc{I think this should be $\lambda>0$} 
such that for $C = 1 - \frac{\min_{i \in \items} F_i(r_i)^4}{8K+1}$ where $K =  K(\dist_1,\ldots, \dist_m)$, all items $i \in \items$ have value distributions that are $(\lambda, C)$-price sensitive. In addition, assume the variance of the buyer's value for the bundle $\sigma(V(\vec{r})) \geq \frac{12}{(\lambda(1-C))^{3/2}} \cdot \max_{j \in \items} \{r_j\}$. Given principal grand bundle price $p = \sum_{i}\E[\min(v_i, C \cdot r_i)] + \frac{\sigma(V(\vec{r}))}{4}$ and item seller equilibrium $(s_1, \cdots, s_n)$ in $G_{p,\dist}$, if the buyer not purchases the grand bundle with probability $\geq 1/2$, then the support of each $s_i$ is contained in $[C \cdot r_i, r_i]$. 
\end{restatable}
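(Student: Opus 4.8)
The plan is a per-item-seller best-response argument by contradiction. By the discussion preceding Lemma~\ref{lem:revenue.zero} (pricing above $r_i$ is weakly dominated via Lemma~\ref{lem:sellerUtility2}), every NE strategy $s_i$ is supported on $[0,r_i]$, so it remains only to rule out mass on $[0,C r_i)$. Suppose toward a contradiction that some item seller $i$ plays a price $q_i < C r_i$ with positive probability. Since the principal prices only the grand bundle, by \eqref{eq:item-seller-util-bundle} the utility of seller $i$ from a deterministic price $x$ against $s_{-i}$ is $u_i(x) = \rev_{F_i}(x)\cdot g(x)$, where $g(x) := \Prob_{v_{-i}\sim\dist_{-i},\,q_{-i}\sim s_{-i}}\bigl[\sum_{j\neq i} V_j(s_j) \le p - x\bigr]$ is the CDF of $W_{-i} := \sum_{j\neq i} V_j(s_j)$ evaluated at $p-x$; note $g$ is non-increasing (consistent with Proposition~\ref{claim:probMonotone}). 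In a mixed NE every price in $\mathrm{supp}(s_i)$ is a best response, so it suffices to derive the contradiction $u_i(r_i) > u_i(q_i)$.

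\textbf{Key steps.} Writing $u_i(r_i) - u_i(q_i) = \rev_{F_i}(r_i)\bigl(g(r_i) - g(q_i)\bigr) + \bigl(\rev_{F_i}(r_i) - \rev_{F_i}(q_i)\bigr) g(q_i)$, I would control the two pieces separately. For the second piece, $(\lambda,C)$-price sensitivity applied with $\alpha = q_i/r_i < C$ gives $\rev_{F_i}(r_i) - \rev_{F_i}(q_i) \ge \lambda(r_i - q_i) > \lambda(1-C) r_i$; combined with $\rev_{F_i}(r_i) \le r_i$, it therefore suffices to show $g(q_i) - g(r_i) = \Prob[\,W_{-i} \in (p-r_i,\,p-q_i]\,] < \lambda(1-C)\, g(q_i)$. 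This requires (a) a lower bound $g(q_i) \ge \Omega(1)$ and (b) an upper bound on the mass $W_{-i}$ places on an interval of length $\le \max_j r_j =: r_{\max}$. For (a), since $q_i \le r_{\max}$ and $W^{\mathrm{eq}} := \sum_j V_j(s_j) \ge W_{-i}$, we get $g(q_i) \ge \Prob[W_{-i} \le p - r_{\max}] \ge \Prob[W^{\mathrm{eq}} \le p - r_{\max}] \ge \tfrac12 - \Prob[W^{\mathrm{eq}} \in (p-r_{\max},\,p]]$, using the hypothesis that the buyer fails to purchase the grand bundle (equivalently $W^{\mathrm{eq}} \le p$) with probability at least $\tfrac12$. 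For (b) and the leftover interval-mass term in (a), I would invoke the Berry--Esseen theorem (Theorem~\ref{claim:BerryEsseenGeneral}) applied to $W_{-i}$ and to $W^{\mathrm{eq}}$ (centered): a sum of independent terms with standard deviation $\sigma$ places at most $r_{\max}/(\sigma\sqrt{2\pi}) + 2\cdot(\text{Berry--Esseen error})$ on any interval of length $r_{\max}$. Feeding in the quantitative variance hypothesis $\sigma(V(\vec r)) \ge \tfrac{12}{(\lambda(1-C))^{3/2}}\, r_{\max}$ --- which is calibrated precisely so that both the Gaussian term and the error term are small multiples of $\lambda(1-C)$ --- yields $g(q_i) > \tfrac13$ and $g(q_i) - g(r_i) < \lambda(1-C)\,g(q_i)$, completing the contradiction.

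\textbf{Main obstacle.} The delicate point is making the Berry--Esseen step rigorous: it needs $\mathrm{Var}(W_{-i})$ and $\mathrm{Var}(W^{\mathrm{eq}}) = \sigma^2(\vec s)$ to be within a constant factor of $\sigma^2(\vec r)$, and it needs each summand $V_j(s_j)$ to have variance bounded below (so the error term $\tfrac{1}{\sigma}\max_j \rho_j/\sigma_j^2$ is small), which fails for item sellers playing near-degenerate low-price strategies. Lemmas~\ref{coro:constVarianceBound} and~\ref{lem:highMeanVar} are exactly the tools here, but Lemma~\ref{lem:highMeanVar} requires $\mu(\vec s) \ge \sum_i \mu_i(C r_i)$, which is not given. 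I expect to resolve this with a case split on $\sigma(\vec s)$: if $\sigma(\vec s)$ is small, then $W^{\mathrm{eq}}$ is tightly concentrated around $\mu(\vec s)$, which by the bad-case hypothesis together with the choice $p = \sum_i \E[\min(v_i, C r_i)] + \tfrac{\sigma(V(\vec r))}{4}$ must lie comfortably below $p$; hence $g(x)$ is within $o(1)$ of $1$ uniformly over $x \in [0, r_{\max}]$ and the deviation to $r_i$ is profitable outright (and this sub-case then turns out to be vacuous, since it would force $\mu(\vec s) \ge \sum_i \mu_i(C r_i)$). If instead $\sigma(\vec s)$ is large, Lemma~\ref{lem:highMeanVar} applies (or, alternatively, a bootstrap that progressively raises the left endpoints of the supports, using Lemma~\ref{coro:constVarianceBound} and the fact that $t \mapsto \mathrm{Var}(\min(v_j,t))$ is non-decreasing). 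Getting this case analysis and the degenerate-seller exclusion exactly right is where the real work lies; the rest is the arithmetic of plugging the stated constants into Berry--Esseen.
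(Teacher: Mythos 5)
Your high-level architecture matches the paper's: you decompose the utility gap $u_i(r_i)-u_i(q_i)$, reduce via price sensitivity to a ratio bound on $g(x) := \Prob[W_{-i}\le p-x]$, and apply Berry--Esseen once a lower bound on $\sigma_{-i}(s_{-i})$ is in hand, with Lemmas~\ref{coro:constVarianceBound} and~\ref{lem:highMeanVar} supplying that bound. You also correctly identify the crux: Lemma~\ref{lem:highMeanVar} needs $\mu(\vec s)\ge\sum_i\mu_i(Cr_i)$, which is not a hypothesis of the lemma, so something must bridge that gap.

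The bridge you propose --- a case split on $\sigma(\vec s)$ --- does not close it. In the small-$\sigma(\vec s)$ branch you claim that tight concentration of $W^{\mathrm{eq}}$ around $\mu(\vec s)$, combined with $\Prob[W^{\mathrm{eq}}\le p]\ge 1/2$, forces $\mu(\vec s)$ ``comfortably below $p$,'' from which $g\approx 1$ uniformly. But $\Prob[W^{\mathrm{eq}}\le p]\ge 1/2$ is a one-sided statement: with small $\sigma(\vec s)$ it gives $\mu(\vec s)\lesssim p$, not $\mu(\vec s)\ll p$. In particular $\mu_{-i}(s_{-i})$ could sit inside $(p-r_i,\,p-q_i]$, in which case concentration makes $g(r_i)\approx 0$ while $g(q_i)\approx 1$, i.e.\ exactly the ratio failure you are trying to rule out, and nothing in the branch hypothesis excludes it. The ``this sub-case is vacuous'' claim is then circular, since it relies on the deviation argument you have not yet established. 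The paper's resolution is different and worth noting: it case-splits not on $\sigma(\vec s)$ but on whether $g(r_i)\ge 1-\lambda(1-\alpha)$ already holds (in which case the ratio bound is immediate). In the complementary case one has $\Prob[W_{-i}>p-r_i]\ge\lambda(1-\alpha)$, and this is the hypothesis that, via Chebyshev, converts an \emph{assumed} upper bound on $\sigma_{-i}(s_{-i})$ into a \emph{lower} bound on $\mu_{-i}(s_{-i})$ --- hence on $\mu(\vec s)$ --- large enough to trigger Lemma~\ref{lem:highMeanVar} and contradict the assumed small variance. In other words, the tail-probability lower bound is what pins down the location of the mean; your branch hypothesis (small variance alone) cannot.

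Two smaller points. First, your worry that Berry--Esseen requires each individual $\sigma_j(s_j)$ to be bounded below is unfounded: since $V_j(s_j)\in[0,r_j]$, one has $\rho_j/\sigma_j^2\le r_j$ unconditionally, so the error term is controlled solely by $\sigma_{-i}(s_{-i})\ge A\cdot\max_j r_j$; the paper uses exactly this bound, and the ``degenerate-seller exclusion'' you anticipate is not needed. Second, applying $(\lambda,C)$-sensitivity with the uniform slack $\lambda(1-C)$ (rather than $\lambda(1-\alpha)$) loses a factor that the paper retains; it likely still goes through with a re-tuned $A$, but it is not what the stated constants are calibrated for.
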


\Cref{lem:containedResponse} is quite technical, so here we summarize the flow of the argument. First, we show that when the probability of the buyer not purchasing the grand bundle from the principal is at least $1/2$, 
then this probability would still remain relatively large  even if one of the sellers (say seller $i$) changes their pricing strategy arbitrarily. Moreover, we show that unless the probability of the buyer not purchasing the grand bundle is close to $1$, the mean of $\sum_{j \neq i} V_j(s_j)$ is high enough to apply \Cref{lem:highMeanVar}, and hence the variance of $\sum_{j \neq i} V_j(s_j)$ is large enough {for the error term to be sufficiently small when applying the Berry-Esseen Theorem.}
Given this limited competitive pressure from the principal and sufficiently large variance, we apply Berry-Esseen to show that it is strictly suboptimal for an item seller $i$ to price below $C \cdot r_i$. Hence, in the equilibrium $(s_1, \cdots, s_n)$, the support of each item seller $i$'s strategy $s_i$ is contained in $[C \cdot r_i, r_i]$. 

Finally, we apply
\Cref{coro:constVarianceBound} to argue that $\sum_{j \neq i} V_j(s_j)$ has high variance, then use the Berry-Esseen Theorem again (plus the fact that item sellers' equilibrium prices lie in $[C\cdot r_i, r_i]$ from Lemma~\ref{lem:containedResponse}) to prove that the value of $\Prob[p < \sum_{i} \min(v_i, q_i)]$ % \mbc{as mentioned previously, I suspect all the claims about $\Prob[p \leq \sum_{i} \min(v_i, q_i)]$ should be about $\Prob[p < \sum_{i} \min(v_i, q_i)]$ (due to the tie breaking). Please search this term throughout the paper and replace as needed.} 
must be at least 0.38.  Since $C\geq 8/9$, we conclude that the principal's revenue is at least $(0.38)C \geq 1/3$ of the expected truncated welfare. %\mbc{from my reading of the proof it seems like we show this probability is at least $0.38$, and combining with $C\geq 8/9$ we get at least $1/3$ of the expected truncated welfare.}% \mbc{seems like the $1/3$ here is not about the same probability as the $1/3$ above. Is this a coincidence they are both the same? }

\section{Bundling is Not Always Revenue Constant Competitive.} \label{sec:brev-prev-lb}

{
Theorem~\ref{thm:generalPrincipalBundleRev} provides conditions under which selling the grand bundle approximates not only the optimal principal's revenue, but our relaxed benchmark of the expected truncated social welfare.  Are such these conditions necessary?  Of course, we cannot hope to approximate the expected truncated social welfare in all settings: for example, if there is only a single good, the logic of Bertrand competition shows that the principal cannot generate positive revenue with any menu.  But even in this case, selling only the grand bundle trivially achieves the principal's optimal revenue (which is $0$).  This leads us to ask: does selling the grand bundle always yield a constant approximation to the principal's optimal revenue?
}
We show in this section that the answer is \emph{no}: there are markets in which the principal seller cannot extract high revenue by pricing the grand bundle, 
yet some other deterministic menu is able to extract much higher revenue. Thus, it is not always the case that by simply pricing the grand bundle the principal can approximate the optimal revenue. 
Specifically, the market constructed does not have high enough variance to match the condition on the variance of Theorem \ref{thm:generalPrincipalBundleRev}, and indeed we show that pricing the grand bundle does no provide high revenue in any NE. On the other hand, the fact that in that market there is no single item that dominates all other items is shown to be enough for some other menu to obtain much higher revenue in every equilibrium (by bundling  similar items and pricing those bundles).

\begin{theorem}\label{thm:prev}
    There exist $m$-item markets with the following properties: 
    For value distribution $\dist=\times_{j=1}^m \dist_j$ of that market, there is a principal menu $p$ such that in any equilibrium strategies $\vec{s}$ for the item sellers in $G_{p,\dist}$, the principal revenue is $\Theta(m)$, while for any grand pricing menu $p'$ and any equilibrium strategies for the item sellers in $G_{p', \dist}$, the revenue of the principal seller is $O(1)$.
\end{theorem}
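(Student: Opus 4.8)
The plan is to produce an explicit family of markets, upper‑bound the revenue of every grand‑bundle pricing by exhibiting a bad equilibrium at \emph{every} price, and lower‑bound the revenue of a carefully chosen partition menu by invoking \Cref{thm:generalPrincipalBundleRev} cluster‑by‑cluster. For the construction I would organize the $m$ items into "clusters" of statistically identical items placed at well‑separated value scales, and calibrate the within‑cluster distributions so that simultaneously: (i) each cluster is $(\lambda,C)$‑price‑sensitive and, bundled by itself, meets the variance hypothesis of \Cref{thm:generalPrincipalBundleRev} \emph{relative to that cluster's own scale} (this is achievable even when a cluster's per‑item truncated values are nearly pinned, provided the cluster has enough items); (ii) the total expected truncated welfare is $\Theta(m)$ and no single item's value — indeed no single cluster's welfare — dominates the rest; yet (iii) the variance of the \emph{global} truncated welfare $V(\vec r)$ is too small, measured against $\max_j r_j$, for \Cref{thm:generalPrincipalBundleRev} to apply to the grand bundle. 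The tension between (i) and (iii) is what forces the scale separation (a large $K(\dist_1,\dots,\dist_m)$ blows up the variance threshold for the full item set while leaving each homogeneous cluster's threshold small).

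For the partition‑menu lower bound I would let the menu price each cluster's grand bundle at the price prescribed by \Cref{thm:generalPrincipalBundleRev}. Under a partition menu the downstream game among the item sellers decomposes into independent per‑cluster subgames: by \Cref{claim:value-change} (and the structure of $\ub_{P,\vec v}$, $\ub_{I,\vec v}$) the buyer's purchasing decision restricted to a cluster depends only on that cluster's offered price and the prices of its own item sellers, so every Nash equilibrium of the full game restricts to a Nash equilibrium of each cluster's subgame. Applying \Cref{thm:generalPrincipalBundleRev} cluster by cluster gives revenue at least $\tfrac13$ of each cluster's expected truncated welfare in every equilibrium; summing over clusters yields $\Theta(m)$ (and the "ignored items act as monopolists" observation handles any leftover cluster that does not meet the hypotheses).

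For the grand‑bundle upper bound I would fix an arbitrary grand‑bundle price $p$ and produce an equilibrium with principal revenue $O(1)$, splitting on the size of $p$. Recall (via \Cref{lem:sellerUtility2} and \Cref{claim:sameMaxMenu}) that the buyer buys the grand bundle iff $\sum_i\min(v_i,q_i)>p$, and that no best response has $q_i>r_i$. (a) If $p\ge\sum_i r_i$, the profile where every seller prices at $r_i$ is an equilibrium and the bundle is never purchased, so revenue is $0$. (b) If $p\le\sum_i\ell_i$ where $\ell_i$ is the bottom of the support of $\dist_i$, the "coordinated undercutting" profile — each seller at some $q_i\le\ell_i$ with $\sum_i q_i=p$ — is an equilibrium: then $\min(v_i,q_i)=q_i$ surely, the bundle is (by the tie‑break) not purchased, raising any $q_i$ triggers the purchase and kills that seller's sale, and lowering it only loses revenue; again revenue $0$. (c) If $\sum_i\ell_i<p<\sum_i r_i$, which is the crux and where property (iii) is used, I would build a bad equilibrium in which the clusters carrying almost all of the truncated welfare (including the near‑pinned high‑scale cluster) price so as to keep $\sum_i\min(v_i,q_i)$ concentrated just below $p$, with each seller still at a constant fraction of its reserve — hence earning a constant fraction of its monopoly revenue — and, because the aggregate is concentrated, any unilateral upward deviation pushes the total above $p$ with probability bounded away from $0$, making it unprofitable, while downward deviations only lose revenue. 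Then the bundle is essentially never bought and revenue is $O(1)$. Combining (a)–(c) with the partition bound gives the claimed $\Theta(m)$ separation of \Cref{thm:prev}.

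The hard part will be case (c) of the grand‑bundle upper bound: rigorously specifying the intermediate‑price bad equilibrium and checking that no item seller — in the near‑pinned cluster especially — has a profitable deviation. This is exactly where the low global variance of $V(\vec r)$ is essential, and it requires running Berry–Esseen‑type estimates (\Cref{claim:BerryEsseenGeneral}) "in reverse" — to certify that the total truncated value is \emph{too} concentrated for any single seller to dislodge it below/above $p$ — in a sense the converse of the argument in \Cref{lem:containedResponse}. A secondary, more routine obstacle is the precise justification that the induced game genuinely splits across clusters under a partition menu, so that \Cref{thm:generalPrincipalBundleRev} can be applied independently on each cluster and the equilibrium guarantees summed.
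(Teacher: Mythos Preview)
The central gap is in your grand-bundle upper bound. The theorem asserts that for \emph{every} grand-bundle price $p'$ and \emph{every} Nash equilibrium of $G_{p',\dist}$ the principal's revenue is $O(1)$. Your plan, however, is to ``fix an arbitrary grand-bundle price $p$ and produce an equilibrium with principal revenue $O(1)$'' --- this only establishes $\forall p'\ \exists$ a bad NE, which is strictly weaker. The paper's argument is structurally different: for each price range it does not exhibit an equilibrium at all; instead it shows that in \emph{any} NE, each item seller $j$ in a high-scale set has a deviation (pricing at roughly $p - 2\sum_{l\le i}H_l$) that guarantees a large fraction of monopoly revenue, and therefore every best-response $q_j$ must leave the buyer purchasing the bundle only with tiny probability (\Cref{claim:copySellerProbabilitySale}). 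A union bound over high-scale items then caps the principal's revenue in every NE (\Cref{lem:PrincipalRevPricingAtSI}). Your case (c), by contrast, tries to pin down a specific ``coordinated'' profile and verify it is an equilibrium; even if that verification succeeds, it says nothing about other equilibria, and in particular nothing rules out a \emph{good} NE where sellers price near $r_i$ and the bundle sells often.

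A secondary point: the paper's construction is far simpler than what you propose --- just $n=m/2$ pairs of i.i.d.\ binary items at super-exponentially separated scales $H_i=K^{2\cdot 3^{i-1}}$, $x_i=K^{-3^{i-1}}$ --- and the partition lower bound is proved by a direct two-item equilibrium analysis (\Cref{prop:perPartitionRev}), not by \Cref{thm:generalPrincipalBundleRev}. Binary point-mass distributions do not satisfy the variance hypothesis of \Cref{thm:generalPrincipalBundleRev} (the truncated value has zero variance), so your plan to invoke that theorem per cluster forces you into more elaborate ``nearly pinned'' distributions and large clusters, and then you must simultaneously keep enough per-cluster variance for \Cref{thm:generalPrincipalBundleRev} while keeping the global truncated value concentrated enough for your case (c) equilibrium construction. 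That balancing act is delicate and, as argued above, would not in any case deliver the required ``all equilibria'' bound for the grand bundle.
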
 

{
To provide some intuition behind Theorem~\ref{thm:prev}, recall that in the monopolist setting the revenue-optimal menu can always be constant-approximated by either selling the grand bundle only or selling all items separately~\cite{BabaioffILW14}.  
In that result, the ability to sell all items separately is important to capture revenue in cases where a single item has unexpectedly high realized value.
In contrast, the principal in our setting cannot capture this revenue in the same way: selling items separately generates no revenue, since this results in Bertrand competition for each item.
This doesn't immediately imply Theorem~\ref{thm:prev}, however, because our upper bound on revenue truncates very high values, so the scenario with just a single high value is not problematic for approximating the optimal revenue (i.e., because it is anyway hopeless for the principal to extract substantial revenue from any one item).
However, as it turns out, there can still be situations where the optimal revenue is driven by a few high-valued items (more than one, but at most a constant) that dominate the total welfare.  These cases are necessarily rare, and would be handled by selling items separately in the monopolist setting.  But in our setting, these rare events might be the most important to revenue.  By designing the high-item-value events to occur at different orders of magnitude for different items, one can construct scenarios where the principal cannot obtain high revenue by bundling all items together, but could generate high revenue by bundling together items whose values are at similar scales, essentially simulating the sell-items-separately solution that would be employed by a monopolist.
} % MB: this is excellent, thanks!

{We now describe
our construction for Theorem~\ref{thm:prev} that implements the intuition above.}
Fix any constant $K>2$; we will prove
the claim for $m\gg K^2$ that is even. 
Let $n=m/2$.
In the market we construct there are $n$ pairs of items: $T_1, \cdots, T_{n}$. For each $i\in [n]$,  set $T_i$ contains two items, sampled i.i.d from the following binary distribution $\dist_i$.  
The value is $H_i = K^{2 \cdot 3^{(i-1)}}$ with probability $x_i = K^{-3^{(i-1)}}$, and zero otherwise. There are $m=2n$ items: for $i\in [n]$, both item ${2i-1}$ and item ${2i}$ are distributed according to $\dist_i$ (so $\dist= \times_{i=1}^m (F_i\times F_i)$).  
{Observe that for clarity of exposition the construction uses integer values instead of values in $[0,1]$ (the construction can of course be rescaled to $[0,1]$).}

In Section~\ref{sec:prev} we will construct a menu for the principal with revenue $\Omega(m)$; this will be a partition menu that sets a price on each pair of items $T_i$.  Then in Section~\ref{sec:brev-lb} we will show that every grand bundle menu for the principal generates revenue $O(1)$, competing the proof of Theorem~\ref{thm:prev}.

\subsection{Principal Revenue from Partition Bundles.} 
\label{sec:prev}

Consider a principal pricing strategy that bundle items in each set $T_i$ together at price $H_i + \varepsilon = K^{2 \cdot 3^{(i-1)}} + \varepsilon$. We will prove that in any item seller equilibrium, the revenue of the principal is $\Theta(m)$. Intuitively, for each set $T_i$, the principal seller is selling the bundle with probability at most $x_i^2$. Hence the item sellers cannot improve their probability of sale by too much by reducing their price from $H_i$ (which is the unique Myerson price for item $i$). Consequently, each item seller's price in equilibrium will {price at exactly} %certainly stay near \mbc{why "near", not exactly at it?} \lindac{This is probably true as a corollary to ``both items must be pricing near $H_i$. } \mbc{I added a proof for uniqueness, so it should be exact} 
$H_i$. This ensures that when the principal is pricing at $H_i + \varepsilon$, the buyer would purchase the bundle $T_i$ from the principal if %\mbe{and only if} 
they have a high value for both items in $T_i$. By our construction of the example, the principal gets revenue at least $H_i \cdot x_i^2 = 1$ for each set $T_i$. Adding the revenue over all  $n$  sets, the principal gets revenue at least $n = m/2$. We formalize this in the next claims.  % \mbc{I do not see why the next two should be "propositions" (especially the second). I rather have them as "claim" or "lemma". For me, a propostion is a result that might be of independent interest (not just a technical setp in a larger proof)}
\begin{restatable}{claim}{propPerPartitionRev} \label{prop:perPartitionRev}
    Fix $x < \frac{1}{2}$. Let $D$ be the following binary distribution   supported on $\{0, H = \frac{1}{x^2}\}$: the value is $H$ with probability $x$, and $0$ otherwise.
    Consider a market with buyer value sampled from $\dist= D\times D$. % for $\dist_1=\dist_2$, each with value supported on $\{0, H = \frac{1}{x^2}\}$, where the value is $H$ with probability $x$ and $x < \frac{1}{2}$. 
    Then when the principal sells the bundle of the two items at $p = H + \varepsilon$ for some sufficiently small $\varepsilon$, %\mbc{due to our tie breaking, $\varepsilon$ can be 0 and all will work the same. But maybe we do not want to make this change so late (Maybe add a footnote that even $\varepsilon=0$ is ok due to the tie breaking?)} 
    there is a unique NE (in which each item seller prices at $H$), and the revenue of the principal in this NE is at least $1$.
\end{restatable}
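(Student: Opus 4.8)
\textbf{Proof proposal for Claim~\ref{prop:perPartitionRev}.}
The plan is to first pin down the item sellers' equilibrium behavior and then read off the principal's revenue. Since the principal only prices the (grand) bundle of the two items in the market at $p = H + \varepsilon$, free disposal means the buyer's only meaningful purchase options from the principal are ``buy both items for $H + \varepsilon$'' or ``buy nothing from the principal.'' So from the buyer's perspective, given item-seller prices $(q_1, q_2)$ and values $(v_1, v_2)$, she buys the bundle from the principal exactly when $v_1 + v_2 - (H+\varepsilon) > \max\{0,\ v_1 - q_1,\ v_2 - q_2,\ (v_1 - q_1) + (v_2 - q_2)\}$, with the tie-breaking rules from the model resolving the boundary cases against the principal. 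I would first simplify this: because each $v_j \in \{0, H\}$ and each $q_j \le 1 < H$, the only realization in which the bundle is even weakly attractive is $v_1 = v_2 = H$ (value $2H$ for the bundle versus price $H + \varepsilon$), and in that realization the buyer strictly prefers the principal's bundle over buying from the item sellers precisely when $2H - (H + \varepsilon) > (H - q_1) + (H - q_2)$, i.e. when $q_1 + q_2 > H + \varepsilon$; for small $\varepsilon$ this cannot happen since $q_1 + q_2 \le 2 < H$. Wait — that would give the principal zero revenue, so the comparison must instead be with the \emph{best single-item} purchase or with the correct accounting. Let me redo it: if the buyer buys the bundle she pays $H + \varepsilon$ once and gets both items, utility $2H - (H+\varepsilon)$; if she instead buys both items from the item sellers she pays $q_1 + q_2$, utility $2H - q_1 - q_2$. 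So the principal is chosen only if $q_1 + q_2 \ge H + \varepsilon$, which fails. The resolution is that in equilibrium the item sellers will \emph{not} both price low; the point of the argument (as the surrounding text says) is that each item seller prices at exactly $H$ because undercutting is not worthwhile, and at $(q_1,q_2) = (H,H)$ the buyer is indifferent between the principal's bundle and buying both separately, and the tie-break against the principal would again give $0$. So I expect the actual claim relies on $\varepsilon$ being added on the \emph{item sellers'} side or on a more careful reading of the tie-break; I would therefore re-examine the model's tie-breaking: ties are broken \emph{against} the principal, and item sellers set prices \emph{after} the principal and can shade by $\varepsilon$, so at a would-be tie the item sellers win. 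This means for the principal to sell with positive probability the item sellers must price strictly above the level that ties — but then they'd earn less, contradiction. The way out used in the paper's sketch must be that the item sellers, pricing at $H$, jointly sell with probability $x^2$ (both values high) and earn $H x^2 = 1$ each that way and $0$ otherwise, and \emph{cannot} do better by undercutting because the probability of sale rises too slowly; crucially, whenever exactly one value is high, \emph{no} item seller sells anything (since that buyer has value $H$ for one item but the item seller's price $q_j \le 1$ is paid only if bought, and a single item is never bought from the principal at price $H+\varepsilon$, so the buyer with one high value buys that single item from its item seller for $q_j$). So I must be more careful: the single-high-value events \emph{are} where item-seller revenue could come from, and the crux is comparing $H \cdot x^2$ (selling only on double-high events at price $H$) against $q \cdot \Pr[\text{sale at } q]$ for $q < H$.

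With that understood, here is the clean plan. \textbf{Step 1 (buyer behavior as a function of prices).} Fix $(q_1, q_2)$ with $q_j \le 1$. Enumerate the four value realizations. On $(0,0)$ nobody sells anything. On $(H, 0)$: the buyer buys item $1$ from item seller $1$ for $q_1$ (strictly positive utility $H - q_1 > 0$, and the principal's bundle gives utility $\le H - (H+\varepsilon) < 0$); item seller $2$ sells nothing. Symmetrically on $(0,H)$. On $(H,H)$: the buyer compares bundle-from-principal (utility $2H - H - \varepsilon$) against both-from-item-sellers (utility $2H - q_1 - q_2$); she buys from the principal iff $q_1 + q_2 > H + \varepsilon$, and by the tie-break she buys from the item sellers when $q_1 + q_2 \le H + \varepsilon$. \textbf{Step 2 (item seller $i$'s revenue given $q_{-i}$).} Using Step 1, item seller $i$'s probability of sale at price $q_i$ is: $x(1-x)$ (the ``other value low'' event, where she always sells) plus $x^2 \cdot \ind[q_i + q_{-i} \le H + \varepsilon]$ (the ``both high'' event, where she sells iff the pair is cheap enough to beat the principal). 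So her revenue is $q_i \big( x(1-x) + x^2 \ind[q_i \le H + \varepsilon - q_{-i}] \big)$. \textbf{Step 3 (best response).} Since $q_i \le 1$ and $H + \varepsilon - q_{-i} \ge H - 1 > 1$ for $x < 1/2$ (so $H = 1/x^2 > 4$), the indicator is identically $1$ over the entire feasible price range $q_i \in [0,1]$; hence revenue $= q_i(x(1-x) + x^2) = q_i x$, which is strictly increasing in $q_i$, so the unique best response is $q_i = 1$... but that contradicts the claimed equilibrium price $H$.

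So the genuine subtlety — and the main obstacle I anticipate — is reconciling the constraint $q_i \le 1$ imposed in the general model (Section~\ref{sec:model}) with the statement here that the equilibrium price is $H \gg 1$: evidently for this construction the pricing grid is taken to be multiples of $\varepsilon$ up to at least $H$ (the ``$q_i \le 1$'' normalization was for the $[0,1]$-rescaled setting, and the remark after the construction explicitly says integer values are used for clarity and can be rescaled). Under that reading, redo Step 3 with $q_i \in [0, H]$: revenue is $q_i \, x(1-x)$ for $q_i \in (H + \varepsilon - q_{-i},\, H]$ and $q_i(x(1-x) + x^2) = q_i x$ for $q_i \le H + \varepsilon - q_{-i}$. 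Comparing the two regimes, the supremum of the first is $H\,x(1-x)$ and the supremum of the second is about $(H + \varepsilon - q_{-i}) x$; since $x(1-x) < x$, whether it is better to ``price high and sell only on the cheap-pair event'' or ``price just under $H+\varepsilon - q_{-i}$'' depends on $q_{-i}$. Setting $q_{-i} = H$ gives second-regime supremum $\approx (\varepsilon) x \approx 0$, so pricing at $H$ (first regime, revenue $H x (1-x) = (1-x)/x$, wait I need $Hx^2$ not $Hx(1-x)$ — on $(H,H)$ with $q_1 = q_2 = H$ we have $q_1 + q_2 = 2H > H + \varepsilon$ so the principal \emph{does} sell, meaning item seller $i$ does \emph{not} sell on the double-high event, consistent with Step 2). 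So at $q_{-i} = H$, item seller $i$'s revenue at $q_i = H$ is $H \cdot x(1-x) = H x - H x^2 = Hx - 1$, and any undercut to $q_i \le H - H + \dots$ — I need $q_i \le H + \varepsilon - H = \varepsilon$ to flip the indicator, giving revenue $\approx \varepsilon x \approx 0$; and any $q_i \in (\varepsilon, H]$ gives $q_i x (1-x)$, maximized at $q_i = H$. Hence $q_i = H$ is the unique best response to $q_{-i} = H$, and by symmetry $(H,H)$ is the unique NE (one still must rule out other NE: for any candidate $(q_1,q_2)$, if either coordinate is $< H$ the owner strictly gains by raising toward $H$ as just computed, using that the indicator is ``mostly'' $1$ only in a range bounded well below $H$; I would verify this case split carefully). \textbf{Step 4 (principal revenue).} At the unique NE $(H,H)$, on the event $(H,H)$ — probability $x^2$ — the buyer buys the bundle from the principal for $H + \varepsilon \ge H$, and on all other events the principal sells nothing. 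Hence the principal's revenue is $x^2(H + \varepsilon) \ge x^2 H = 1$, which is the claim. I expect Step 3 — carefully handling the comparison between the ``sell-high'' and ``undercut'' regimes for \emph{all} opponent prices, not just $q_{-i} = H$, to establish \emph{uniqueness} of the NE — to be the main obstacle; everything else is bookkeeping over the four-point support.
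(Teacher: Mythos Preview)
Your plan is on the right track once you settle on the correct framework: the revenue formula $\us_i(q_i, q_{-i}) = q_i\bigl(x(1-x) + x^2\,\ind[q_i + q_{-i} \le H+\varepsilon]\bigr)$ is correct, your verification that $(H,H)$ is a Nash equilibrium is fine, and Step~4 correctly computes the principal's revenue there. The gap you flag is real, though, and your suggested fix for uniqueness (``if either coordinate is $<H$ the owner strictly gains by raising toward $H$'') does not work as stated: if $q_{-i}$ is small, the ``undercut'' regime $q_i \le H+\varepsilon - q_{-i}$ covers essentially all of $[0,H]$, and within that regime revenue is $q_i x$, also maximized at $q_i=H$ --- but then $q_i + q_{-i} \approx H \le H+\varepsilon$, so the principal would \emph{not} sell on the double-high event and your Step~4 conclusion would fail. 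You cannot argue uniqueness by unilaterally raising one coordinate; you need to rule out low-$(q_1,q_2)$ profiles jointly.

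The paper's argument is organized differently and avoids this trap. Rather than characterizing best responses for every $q_{-i}$, it proves directly that in \emph{any} NE the event $q_1 + q_2 > p$ has probability $1$. The key observation: if some $(q_1,q_2)$ in the supports has $q_1+q_2\le p$, then $\min\{q_1,q_2\}\le p/2 \approx H/2$; the seller holding that low price earns at most $(p/2)\cdot x \approx 1/(2x)$ (her sale probability is at most $x$ regardless of the opponent), whereas deviating to price $H$ earns at least $H\cdot x(1-x)=(1-x)/x$ from the single-high event alone (again independent of the opponent). Since $x<1/2$ gives $(1-x)/x > 1/(2x)$, this single opponent-independent comparison yields a contradiction without any best-response case split. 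Once $q_1+q_2>p$ holds in every NE, the principal's revenue bound is immediate (she sells on the double-high event whichever NE is played), and uniqueness drops out because each item seller's equilibrium payoff is then $q_i\, x(1-x)$, strictly increasing over the support, forcing $s_i$ to be the point mass at $H$. Your approach can be completed, but the paper's route --- targeting the inequality $q_1+q_2>p$ rather than the full best-response map --- is shorter and, importantly, delivers the revenue bound for \emph{all} NE before uniqueness is even addressed.
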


\begin{restatable}{claim}{propPartitionEquilibrium} \label{prop:PartitionEquilibrium}
    Consider a market with buyer value distribution $\dist= \times_{i=1}^m (F_i\times F_i)$. 
    Assume the principal seller is using a partition bundle pricing $p$ over item set $\items$, where the partitioned disjoint sets are $T_1$, $\cdots$, $T_n$. Then $(s_1, \cdots, s_{m})$ is an equilibrium in $G_{p,\dist}$ if and only if for each $i \in [n]$, $\{s_{j}\}_{j \in T_i}$ is an equilibrium for the sub-game $G_{p(T_i), F_i\times F_i}$ over item set $T_i$, where the principal is posting a bundle price $p(T_i)$. %\mbc{I am surprised that the proof does not use the fast growing scales. Why can the NE in the big game "combine" different pairs? what if all items where of the same distribution? }
\end{restatable}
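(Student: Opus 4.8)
\textbf{Proof proposal for Claim~\ref{prop:PartitionEquilibrium}.}
The plan is to exploit the fact that a partition menu, combined with the buyer's additive values, makes the buyer's utility-maximization problem separate completely into $n$ independent subproblems, one per block $T_i$. Fix realized values $\vec v$, item-seller prices $\vec q$, and let $p$ be the partition menu induced by block prices $p(T_1),\dots,p(T_n)$ (so that for any $T\subseteq\items$, free disposal gives $p(T)=\sum_{i:\,T_i\cap T\neq\emptyset}p(T_i)$). The first step is to establish the identity
\[
\max_{T,U\subseteq\items}\Big(\textstyle\sum_{j\in T\cup U}v_j-p(T)-\sum_{j\in U}q_j\Big)
=\sum_{i=1}^{n}\max_{T',U'\subseteq T_i}\Big(\textstyle\sum_{j\in T'\cup U'}v_j-p^{(i)}(T')-\sum_{j\in U'}q_j\Big),
\]
where $p^{(i)}$ is the menu over $T_i$ that only prices the block $T_i$ at $p(T_i)$ (with free disposal). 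This holds because the value, the item-seller payments, and the partition-menu price are all additive across blocks, so choosing $(T,U)$ is equivalent to independently choosing $(T\cap T_i,\,U\cap T_i)$ within each block; the same reasoning shows the set of buyer-optimal $(T,U)$ pairs is the Cartesian product of the per-block buyer-optimal pairs.

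The second step is to check that the lexicographic tie-breaking rule respects this decomposition: the principal's revenue $\sum_i p(T_i)\cdot\ind[\,T_i \text{ purchased}\,]$ and the number of items bought from item sellers are both sums of per-block quantities, and the per-block option sets do not interact, so minimizing principal revenue and then maximizing item-seller purchases can be done block by block, and the globally tie-broken buyer choice is the product of the per-block tie-broken choices. Consequently, for any item seller $j$ lying in block $T_i$, the sale indicator $\ind_{j,\vec v}(p,\vec q)$ is a function only of $\vec v_{T_i}$, $\vec q_{T_i}$, and $p(T_i)$. Using independence of the distributions $\dist_k$ across items, this means the expected utility $\us_j(p,\vec s)$ of item seller $j$ equals exactly its expected utility in the subgame $G_{p(T_i),\,F_i\times F_i}$, and in particular depends only on the strategies $\{s_k\}_{k\in T_i}$.

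The final step reads off the equilibrium equivalence. A profile $\vec s=(s_1,\dots,s_m)$ is a Nash equilibrium of $G_{p,\dist}$ iff each $s_j$ is a best response to $s_{-j}$; by the previous step, for $j\in T_i$ the quantity $\us_j(p,(s_j,s_{-j}))$ is a function of $s_j$ and $\{s_k\}_{k\in T_i\setminus\{j\}}$ alone and coincides with the corresponding subgame utility, so $s_j$ is a best response in $G_{p,\dist}$ iff it is a best response within $G_{p(T_i),\,F_i\times F_i}$. Quantifying over all $j\in\items$ yields precisely the claimed statement: $\vec s$ is a Nash equilibrium of $G_{p,\dist}$ iff for every $i\in[n]$ the sub-profile $\{s_j\}_{j\in T_i}$ is a Nash equilibrium of $G_{p(T_i),\,F_i\times F_i}$.

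The only delicate point I expect is the first step: one must confirm that the partition structure genuinely forbids the buyer from profitably ``coupling'' purchases across blocks — i.e., that no single menu entry is cheaper than the additive sum of the relevant block prices, and that the single-subset-from-the-principal restriction together with free disposal is consistent with the block-wise decomposition of the buyer's optimal choice (including ties). Once that separation is pinned down, the rest is bookkeeping, and independence across items is what lets the per-block utilities be taken in expectation separately.
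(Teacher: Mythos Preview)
Your proposal is correct and follows essentially the same approach as the paper: both arguments establish that, under a partition menu, item seller $j\in T_i$'s expected utility in the full game coincides with her utility in the subgame $G_{p(T_i),F_i\times F_i}$, from which the equilibrium equivalence is immediate. The paper's version is terser---it invokes the item-seller utility formula derived earlier (essentially \Cref{lem:sellerUtility2}) and observes that the only relevant menu set containing $j$ is $T_i$---whereas you argue the buyer-side decomposition and tie-breaking more explicitly from first principles; the content is the same.
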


We note that the claim above holds more generally: every partition menu essentially breaks the game between the items sellers to sub-games, one for each part of the partition, with correspondence between NE in the large game and in the sub-games. Combining the two claims we conclude:

\begin{restatable}{lemma}{lemPrevBound} \label{lem:prevBound}
There exists a principal pricing strategy $p$ that sells partitioned bundles 
$T_1$, $\cdots$, $T_n$, such that when each pair $T_j$ is priced at $H_j+\varepsilon$ there is a unique NE in the game  $G_{p,\dist}$ and the principal's revenue of this NE is $\Theta(m)$. %\mbc{to replace the next:} in any item sellers equilibrium of $G_{p,\dist}$ the principal's revenue is $\Theta(m)$.
\end{restatable}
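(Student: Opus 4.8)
The plan is to obtain Lemma~\ref{lem:prevBound} as an immediate consequence of Claim~\ref{prop:perPartitionRev} and Claim~\ref{prop:PartitionEquilibrium}. I would take $p$ to be the partition menu whose blocks are $T_1,\dots,T_n$, pricing block $T_i$ at $H_i+\varepsilon$, with $\varepsilon>0$ chosen small enough that the conclusion of Claim~\ref{prop:perPartitionRev} holds simultaneously for every one of the $n$ two-item subgames (each subgame imposes a single upper bound on $\varepsilon$, so a common choice exists). The key observation for applying Claim~\ref{prop:perPartitionRev} to block $T_i$ is that $x_i=K^{-3^{(i-1)}}\le K^{-1}<\frac{1}{2}$ (using $K>2$) and $H_i=x_i^{-2}$, so the market on $T_i$ with the pair priced at $H_i+\varepsilon$ is precisely an instance of that claim.

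First I would determine the equilibria. By Claim~\ref{prop:PartitionEquilibrium}, a profile for the $m$ item sellers is a NE of $G_{p,\dist}$ if and only if its restriction to each block $T_i$ is a NE of the subgame $G_{p(T_i),\,F_i\times F_i}$; by Claim~\ref{prop:perPartitionRev} each of these subgames has a \emph{unique} NE, in which both sellers in $T_i$ price at $H_i$. Hence $G_{p,\dist}$ has a unique NE, the profile in which every item seller prices at the $H_i$ of its block. Next I would evaluate the principal's revenue at this NE. Because a partition menu splits the buyer's --- and therefore the principal's --- problem into independent per-block problems (the statement underlying Claim~\ref{prop:PartitionEquilibrium} and the remark after it), the principal's revenue is $\sum_{i=1}^n R_i$, where $R_i$ is the revenue from block $T_i$ in its subgame at the unique NE. Claim~\ref{prop:perPartitionRev} gives $R_i\ge 1$, so the total is at least $n=m/2=\Omega(m)$. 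For the matching $O(m)$ upper bound I would note that, since both sellers of $T_i$ price at $H_i$ and the buyer's two $T_i$-values lie in $\{0,H_i\}^2$, the bundle $T_i$ (costing $H_i+\varepsilon$) is purchased from the principal only when both values equal $H_i$ --- otherwise its price exceeds its value while buying from the item sellers is weakly better --- so $R_i\le (H_i+\varepsilon)x_i^2=1+\varepsilon/H_i\le 1+\varepsilon$, giving total revenue at most $n(1+\varepsilon)=O(m)$. Combining the two bounds gives revenue $\Theta(m)$, and uniqueness was shown above, which proves the lemma.

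The content here is almost entirely bookkeeping on top of the two claims; the only place I would be careful is in justifying that a partition menu genuinely decomposes \emph{both} the equilibrium structure \emph{and} the principal's realized revenue into a sum of independent per-block contributions --- this is exactly the general form of Claim~\ref{prop:PartitionEquilibrium} referenced in the paragraph following its statement, and I would cite it explicitly rather than re-derive it. I do not expect any real obstacle beyond this.
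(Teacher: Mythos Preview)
Your proposal is correct and follows essentially the same approach as the paper: decompose via Claim~\ref{prop:PartitionEquilibrium}, then apply Claim~\ref{prop:perPartitionRev} to each block to obtain per-block revenue at least $1$ and sum. You are slightly more thorough than the paper's own proof, which only spells out the $\Omega(m)$ lower bound and leaves the $O(m)$ upper bound and the explicit uniqueness argument implicit; your additions on those points are correct and welcome.
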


\subsection{Principal Revenue from Selling the Grand Bundle.} 
\label{sec:brev-lb}

We saw that the principal can obtain revenue of $\Theta(m)$ by a partition menu. We next show that the revenue by selling the grand bundle is much lower. Specifically, for any menu of the principal in which the only set offered is the grand bundle at price $p$, it holds that the revenue of the principal is $O(1)$ in every equilibrium of $G_{p,\dist}$.
{We separate the price range of the principal seller into three categories: when $p < 3 \cdot H_1$, when $p \geq 3 \cdot H_{n}$, and when there exists some $i \in [n-1]$, where $3 \cdot H_i \leq p < 3 \cdot H_{i+1}.$ We will show that in the first case, the principal is selling at a constant price, resulting in $O(1)$ revenue. In the second case, the principal's price is too high and will never sell. The third case (where the grand bundle price $p$ falls between the high support of set $T_i$ and $T_{i+1}$ for index $i$) is the most interesting. 
Observe that simple arguing that the revenue is bounded by $p$ times the probability that the value is at least $p$ is not enough (e.g., for $p=H_j$ that product is at least $H_j\cdot V_j = K^{2 \cdot 3^{(j-1)}}\cdot K^{-3^{(j-1)}} = K^{3^{(j-1)}}$, which might be much larger than $m$ for large $j$). 
We will show that in this case, the higher value item sellers are incentivized to reduce their price to a point where the buyer purchases the grand bundle only if they have positive value for at least two items in $\bigcup_{j=i+1}^n T_j$. Thus the principal is in a similar situation as if they are merely selling the set $T_{i+1}$, and obtains only $O(1)$ revenue. }

We will start by considering the expected revenue of a particular item seller $j \in \bigcup_{k=i+1}^n T_k$. 

\begin{restatable}{proposition} {claimCopySellerProbabilitySale}\label{claim:copySellerProbabilitySale}
    Assume that $3 \cdot H_i \leq p < 3 \cdot H_{i+1}$ for some $i\in [n-1]$. For any item $j$ in $\bigcup_{k=i+1}^n T_k$ and any $q_j$ in the support of $s_j$, it holds that $\Prob[p - q_j \geq \sum_{k \neq i} V_k(s_k)] > \paren*{1 - 3 \paren*{x_{i+1} + \frac{1}{p \cdot x_i^2}}}$. 
\end{restatable}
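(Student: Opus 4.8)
By Lemma~\ref{lem:sellerUtility2} applied to the grand-bundle menu (cf.\ Equation~\eqref{eq:item-seller-util-bundle}), the quantity to be bounded is exactly item seller $j$'s probability of sale conditional on $v_j\ge q_j$: writing $T_m$ for the pair containing $j$ (so $m>i$), it equals $\Prob[\sum_{\ell\ne j}\min(v_\ell,q_\ell)\le p-q_j]$. The plan is to split $\sum_{\ell\ne j}\min(v_\ell,q_\ell)$ into the contribution of the \emph{small} pairs $T_1,\dots,T_{i-1}$, the contribution of the \emph{middle} pair $T_i$, and the contribution of the \emph{large} items (those in $\bigcup_{k>i}T_k$, other than $j$), and to show that the probability that their sum exceeds $p-q_j$ is at most $3(x_{i+1}+\tfrac1{p\,x_i^2})$ via a union bound over the events ``some large item realizes positive value'' and ``the small-plus-middle part is too big''.

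The two easy ingredients come from weak domination and the doubly-exponential spacing. By Lemma~\ref{lem:sellerUtility2} no item seller in $T_k$ prices above its unique Myerson reserve $H_k$, so the small pairs contribute at most $z^\star:=\sum_{k<i}2H_k$ deterministically, and since $H_k=K^{2\cdot3^{k-1}}$ this is $z^\star<2.2\,H_{i-1}\le\tfrac14 H_i$; the middle pair contributes at most $2H_i$ and, being two i.i.d.\ copies of $F_i$ truncated at prices $\le H_i$, takes values only in $\{0\}\cup[\approx\!H_i]\cup[\approx\!2H_i]$ with $\E[\cdot]\le 2H_ix_i=2/x_i$. Finally, the probability that any of the at most $2(n-i)$ items in $\bigcup_{k>i}T_k$ realizes a positive value is at most $\sum_{k>i}2x_k$, which is $<3x_{i+1}$ by the rapid decay of the $x_k$'s; on the complement the large part (in particular $j$'s partner) contributes nothing, so $\sum_{\ell\ne j}\min(v_\ell,q_\ell)$ reduces to the small-plus-middle part. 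It therefore suffices to prove $\Prob[\,z^\star+[T_i]>p-q_j\,]\le \tfrac1{p\,x_i^2}=H_i/p$, i.e.\ $\Prob[[T_i]>p-q_j-z^\star]\le H_i/p$; because $[T_i]$ is supported on $\{0,\approx\!H_i,\approx\!2H_i\}$, this holds once $p-q_j-z^\star$ is at least (roughly) $2p\,x_i$ when $p\lesssim H_i^2$, and is automatic ($[T_i]\le 2H_i\le p-q_j-z^\star$, so the probability is $0$) once $p-q_j-z^\star\ge 2H_i$.

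So the proposition reduces to the equilibrium fact that a best-responding seller of a large item never prices within $z^\star+O(p\,x_i)$ of $p$ when $p$ is small, nor within $H_i+z^\star$ of $p$ when $p\gtrsim H_i^{3/2}$, nor within $2H_i+z^\star$ of $p$ when $p\gtrsim H_i^2$ --- and I expect supplying this to be the main obstacle. The intuition is that, using Lemma~\ref{lem:sellerUtility2} again, seller $j$'s revenue at a price $q$ is $x_m\,q$ times the probability the buyer does not buy the grand bundle, and that probability jumps upward precisely at the atoms of the competing sum: the small-pair atoms lie below $z^\star$ and together carry mass $\Theta(1/K)$, while the middle-pair atoms near $H_i$ and $2H_i$ carry masses $\Theta(x_i)$ and $\Theta(x_i^2)=\Theta(1/H_i)$; since these masses, multiplied by the price scale $\approx p$, far exceed the corresponding price drops $z^\star$, $H_i$, $2H_i$ exactly in the stated regimes (this is where the construction's calibration is used), a best response must undercut past them. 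Patching together the regimes via a case analysis on $p/H_i$ (the endpoints $p\approx 3H_i$ and $p\approx 3H_{i+1}$ being where $H_i/p=\Theta(1)$ makes the middle-pair bound trivial, and where $x_{i+1}$ dominates, respectively) is where the bulk of the computation lies.
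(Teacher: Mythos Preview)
Your plan heads in a workable direction, but the paper's argument is far simpler and sidesteps the case analysis you flag as the main obstacle. You try to estimate $\Pr\bigl[\sum_{\ell\ne j}V_\ell(s_\ell) > p-q_j\bigr]$ by controlling the small/middle/large contributions separately and then, as a second step, pinning down $q_j$ via equilibrium reasoning in three $p$-regimes. The paper instead exploits the equilibrium inequality \emph{directly}: it exhibits a single deviation $q'=p-2\sum_{\ell\le i}H_\ell$, observes that at this price the sale probability (given $v_j>0$) is at least $\prod_{\ell>i}(1-x_\ell)^2$ (since on the event ``no large item other than $j$ is positive'' the remaining contribution is bounded deterministically by $2\sum_{\ell\le i}H_\ell$ via $V_k\le v_k$---no appeal to the other sellers' prices is needed), and then the equilibrium condition
\[
q_j\cdot\Pr[\text{sale at }q_j]\ \ge\ q'\cdot\Pr[\text{sale at }q']
\]
combined with the trivial bound $q_j\le p$ yields $\Pr[\text{sale at }q_j]\ge \tfrac{q'}{p}\cdot\prod_{\ell>i}(1-x_\ell)^2$, which simplifies to the claimed $1-3\bigl(x_{i+1}+\tfrac{1}{p\,x_i^2}\bigr)$. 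No case split on $p/H_i$, no analysis of middle-pair atoms, no need to locate $q_j$ more precisely than ``below $p$.''

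The conceptual step you are missing is that the equilibrium inequality already converts ``revenue at a deviation'' into ``probability of sale at $q_j$'' \emph{without} first converting it into an upper bound on $q_j$ itself. Your three-regime analysis would essentially be rediscovering, piecewise, that every best response does at least as well as the single deviation $p-2\sum_{\ell\le i}H_\ell$; once you see that, the regimes are unnecessary. (A minor side issue: your description of the middle pair as taking values ``only in $\{0\}\cup[\approx H_i]\cup[\approx 2H_i]$'' presumes the $T_i$ sellers price near $H_i$, which you have not established; the paper avoids this entirely by using $V_k\le v_k\le H_k$ deterministically for all $k\le i$.)
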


The above claim can be viewed as saying conditioned on $s_j = q_j$ and $v_j \geq q_j$, the probability of the buyer not purchasing from the principal must be at least $1 - 3 (x_{i+1} + \frac{3 x_i^2}{p})$. Notice that given $p \geq 3 \cdot H_i$, in order for the buyer to be interested in the bundle, their value for \textit{some} item in $\bigcup_{k=i+1}^n T_k$ must be high. We can now apply a union bound to bound the probability of the buyer purchasing the grand bundle, and as a result, the principal seller's revenue. 

\begin{restatable}{lemma}{lemPrincipalRevPricingAtSI} \label{lem:PrincipalRevPricingAtSI}
    For any grand-bundle menu $p$ such that $3 \cdot H_n> p\geq 3\cdot K^2$, % \mbc{I have tried to adjust below. please check if the proof needs to be changed more} % OLD:$3 \cdot H_n\geq p>3\cdot K^2$  
    the revenue of the principal in any NE of the game $G_{p,\dist}$ is at most $36$.
\end{restatable}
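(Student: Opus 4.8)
The plan is to fix a grand-bundle price $p$ with $3K^2 \le p < 3H_n$, locate the index $i \in [n-1]$ with $3H_i \le p < 3H_{i+1}$ (such an $i$ exists because $K^2 = H_1$ and the $H_j$ are increasing), fix an arbitrary NE $\vec{s}$ of $G_{p,\dist}$, and bound the principal's revenue by $p$ times the probability the buyer purchases the grand bundle. The key structural input is Proposition~\ref{claim:copySellerProbabilitySale}: for every item $j \in \bigcup_{k>i} T_k$ and every $q_j$ in the support of $s_j$, conditioned on $v_j \ge q_j$ the buyer declines the bundle with probability at least $1 - 3(x_{i+1} + \tfrac{1}{p x_i^2})$.

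First I would argue that, since $p \ge 3H_i$, for the buyer to be willing to buy the grand bundle it is \emph{necessary} that her realized value on the items in $\bigcup_{k>i} T_k$ be ``large'' — more precisely, that at least one item $j$ in that union has $v_j \ge q_j$ (where $q_j \sim s_j$), since otherwise $\sum_k \min(v_k,q_k)$ is dominated by the contribution of $\bigcup_{k\le i}T_k$, which is at most $2\sum_{k\le i} H_k < 3H_i \le p$ (using that the $H_k$ grow fast enough that $\sum_{k\le i}H_k < \tfrac32 H_i$). So the event ``buyer buys the bundle'' is contained in $\bigcup_{j \in \bigcup_{k>i}T_k}\{v_j \ge q_j,\ \text{buyer buys bundle}\}$. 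By a union bound over the (at most $2(n-i) \le 2n = m$, but I expect a sharper count suffices) relevant items $j$, and using Proposition~\ref{claim:copySellerProbabilitySale} to say each term is at most $\Pr[v_j \ge q_j]\cdot 3(x_{i+1} + \tfrac1{p x_i^2})$, the probability of sale is at most $3(x_{i+1} + \tfrac1{p x_i^2})\sum_{j \in \bigcup_{k>i}T_k}\Pr[v_j \ge q_j]$. The inner sum is $2\sum_{k>i} x_k$, which by the super-exponential decay of the $x_k$ is at most $\approx 4 x_{i+1}$ (crudely, $\le 4x_{i+1}$ since $x_{k+1} = x_k^3 \le x_k/2$).

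Putting this together, the revenue is at most $p \cdot 3(x_{i+1} + \tfrac1{p x_i^2}) \cdot 4x_{i+1} = 12 p x_{i+1}^2 + \tfrac{12 x_{i+1}^2}{x_i^2}$. Now $p < 3H_{i+1} = 3/x_{i+1}^2$, so the first term is at most $36$; and $x_{i+1}^2/x_i^2 = x_i^6/x_i^2 = x_i^4 \le 1$, so the second term is a lower-order constant — I'd absorb constants to land at the stated bound $36$ (tightening the union-bound count if needed to make the arithmetic come out at exactly $36$ rather than, say, $48$). The main obstacle I anticipate is the careful bookkeeping in the ``necessary condition'' step: correctly showing that the contribution of the low-index pairs $\bigcup_{k\le i}T_k$ to $\sum_k\min(v_k,q_k)$ is strictly below $p$ with probability $1$ (so that a high value on some high-index item is genuinely forced), and making sure the union bound is taken over exactly the right index set and combined with Proposition~\ref{claim:copySellerProbabilitySale} with the conditioning handled correctly (the proposition conditions on $q_j$ being in the support and on $v_j \ge q_j$, so one must integrate against $s_j$ and the event $\{v_j\ge q_j\}$ properly). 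The geometric-series estimates for $\sum_{k>i} x_k$ and the comparison $\sum_{k\le i}H_k < \tfrac32 H_i$ are routine given $H_k = K^{2\cdot 3^{k-1}}$, $x_k = K^{-3^{k-1}}$ and $K>2$.
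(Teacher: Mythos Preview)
Your approach is essentially the same as the paper's: locate $i$ with $3H_i\le p<3H_{i+1}$, observe that the low-index pairs contribute at most $2\sum_{l\le i}H_l<3H_i\le p$ so a purchase forces some high-index item to have positive value, union-bound over those items, and apply Proposition~\ref{claim:copySellerProbabilitySale} term by term. Two small arithmetic points to fix: when you expand $p\cdot 3(x_{i+1}+\tfrac{1}{px_i^2})\cdot 4x_{i+1}$ the second term is $\tfrac{12x_{i+1}}{x_i^2}=12x_i$, not $\tfrac{12x_{i+1}^2}{x_i^2}$ (so it is order $x_i$, not $x_i^4$, and not quite negligible); and to land at $36$ you should bound $2\sum_{k>i}x_k\le 3x_{i+1}$ (valid since $x_{k+1}=x_k^3$ and $x_{i+1}<1/2$), which gives revenue $\le 9px_{i+1}^2+9x_i\le 27+9\le 36$, exactly as in the paper.
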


{Using the above two claims we bound the revenue of any grand-bundle pricing in any equilibrium by a constant.}
\begin{restatable}{lemma}{lemBrevBound}\label{lem:brevBound}
    For any grand-bundle menu $p$, the revenue of the principal in any Nash equilibrium of the game $G_{p,\dist}$ is $O(1)$. %" maybe even say it is $\min \{36, 3\cdot K^2\}$. %?}
\end{restatable}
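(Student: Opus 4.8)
The plan is to bound the principal's revenue for \emph{every} grand-bundle price $p \ge 0$ by splitting into three ranges and reducing the only interesting one to \Cref{lem:PrincipalRevPricingAtSI}. Throughout I will use that when only the grand bundle is offered then, by free disposal, the buyer either buys the whole bundle at price $p$ or buys nothing from the principal; hence the principal's revenue equals $p\cdot\Prob[\text{buyer buys the grand bundle}]$, and a sale requires the buyer's total value $\sum_i v_i$ to be at least $p$ (otherwise buying the bundle gives negative utility).

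The three cases are as follows. \textbf{Small prices, $p < 3K^2 = 3H_1$:} here I simply use $\text{revenue} = p\cdot\Prob[\text{sale}] \le p < 3K^2$, which is $O(1)$ since $K$ is a fixed constant. \textbf{Intermediate prices, $3K^2 \le p < 3H_n$:} since $H_1 = K^{2\cdot 3^{0}} = K^2$, this is exactly the hypothesis of \Cref{lem:PrincipalRevPricingAtSI}, which already gives that in every NE of $G_{p,\dist}$ the principal's revenue is at most $36 = O(1)$, so nothing new is needed. \textbf{Large prices, $p \ge 3H_n$:} I show no sale can ever occur, because the buyer's value for the grand bundle never reaches $p$. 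The value for the bundle is at most the full welfare $\sum_{i=1}^n 2H_i$, and since the $H_i$ grow doubly-exponentially ($H_k/H_{k-1} = K^{4\cdot 3^{k-2}} \ge K^4 \ge 16$ for $k\ge 2$, using $K>2$ and $n\ge 2$, which holds as $m \gg K^2$), the tail is geometrically dominated: $H_{n-j}/H_n \le 16^{-j}$, so $\sum_{i=1}^n H_i < \tfrac{16}{15}H_n$ and the full welfare is strictly below $\tfrac{32}{15}H_n < 3H_n \le p$. Thus the buyer's value for the bundle is always strictly below $p$, the buyer never buys, and the principal's revenue is $0$.

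Since these three ranges exhaust all $p\ge 0$, the principal's revenue in every NE of $G_{p,\dist}$ is at most $\max\{3K^2,36\}=O(1)$, which is the claim. I do not expect a genuine obstacle: \Cref{lem:brevBound} is essentially a bookkeeping step that stitches together \Cref{claim:copySellerProbabilitySale} and \Cref{lem:PrincipalRevPricingAtSI} (where the substantive work lies) with two easy boundary cases. The only computation is the geometric-tail estimate in the large-price case, which is immediate from the rapidly growing value scales $H_i$ in the construction; the mild care needed is just to verify that the case boundaries ($p<3H_1$, $3H_1\le p<3H_n$, $p\ge 3H_n$) line up precisely with the hypothesis of \Cref{lem:PrincipalRevPricingAtSI}.
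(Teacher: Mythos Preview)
Your proposal is correct and follows essentially the same three-case split as the paper: bound revenue trivially by $p$ when $p<3H_1=3K^2$, invoke \Cref{lem:PrincipalRevPricingAtSI} when $3H_1\le p<3H_n$, and show the grand bundle is unaffordable when $p\ge 3H_n$ via a geometric-tail bound on $\sum_i H_i$. The only cosmetic difference is that you use the sharper ratio $H_k/H_{k-1}\ge K^4>16$ where the paper uses the looser factor $4$; both suffice.
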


\section{Conclusion and Future Work}

Our work explores revenue guarantee of the principal seller in a oligopoly setting, particularly when facing agile item competitors who set the prices of their items in response to the principal seller's pricing strategy. We study the ability of the principal to extract revenue by using her unique power to price bundles of items.

Our work is by no means comprehensive in this space and we hope we can inspire future work on auction pricing with competitors.  First, on a technical level, we expect there is ample room to 
generalize or extend 
our approximation result.  This could involve establishing other approximation results under different sufficient conditions on the value distributions, developing additional upper bounds on the principal seller's revenue, and/or exploring the power of more complex bundling strategies.

Another direction for follow-up work is to expand the set of allowable buyer and/or seller preferences.  For example, one could investigate the impact of non-additive buyer valuations, such as submodular or subadditive valuations.  One could also introduce production costs for the sellers, which might impact pricing and/or entry decisions by the potential competitors.

Yet another direction is to relax our assumptions on the nature of the competition faced by the principal seller.  For example, in our model the item sellers set prices independently at equilibrium.  However, it is also natural to consider item sellers who  collude with each other in the way they set prices, either explicitly or via algorithmic collusion due to using similar off-the-shelf pricing algorithms.  One could consider separately the cases where colluding sellers can transfer utility between each other or not.  Taking this one step further, one could also consider competitors who own multiple items and can offer bundles for the items available to them; this can be thought of as a form of advanced collusion with both transfers and the ability to bundle goods.

Finally, our work focuses on revenue maximization from the principal seller's perspective. Another related angle of inspection is the buyer's surplus: namely, how much does the buyer's surplus improve, relative to the monopolist case, if the principal seller has item competitors for some or all items? This could have implications for policy considerations such as antitrust concerns, where one might seek to understand the extent to which item sellers serve as effective competition for a dominant principal seller.

\bibliography{ref}
\bibliographystyle{alpha}
\appendix
\section{Illustration for Price Sensitivity.} \label{app:sec:price-sensitivity}
\begin{figure}[htbp]
\centering
\includegraphics[width=0.8\textwidth]{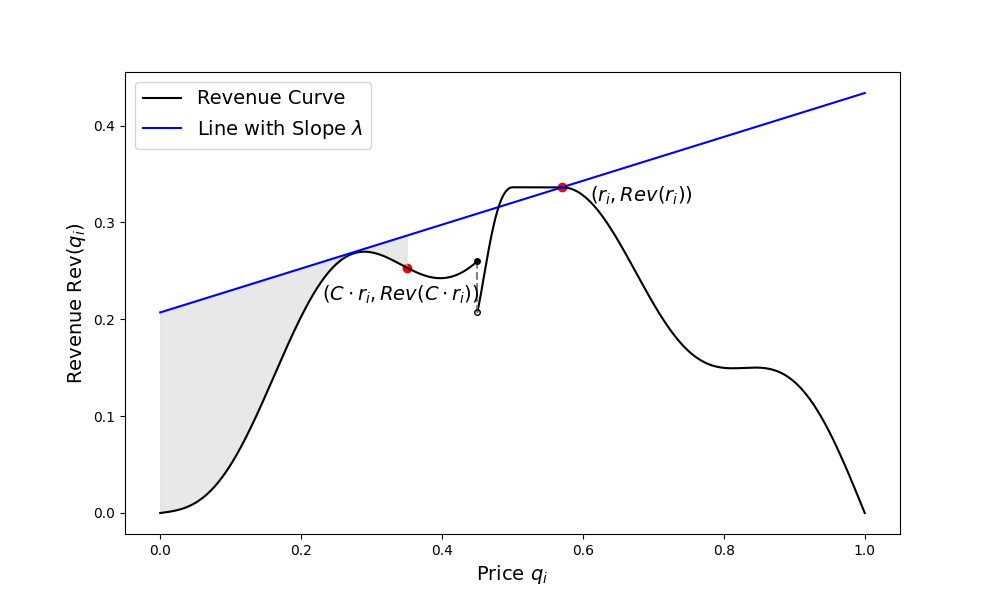}
\caption{$(\lambda, C)$-Price Sensitive Revenue Curve. For distribution $\dist_i$, the revenue function $\rev(q_i)= q_i\cdot (1-\dist_i(q_i))$, must be below the line with slope $\lambda$ that passes through $(r_i, \rev(r_i))$ for every price below $C\cdot r_i$, where $r_i$ is the maximal Myerson price for $\dist_i$.} % MB: I like the new figure, thanks Linda! I plotted two versions see which one you like, I included both
\label{fig:price-sensitivity} 
\end{figure}
\section{Omitted Proofs in Section \ref{sec:upperbound}} \label{app:sec:upperbound}

\claimValueChange*
\begin{proof}
    The set $T$ maximizes $\diff_{\vec{v}}(T,p,\vec{q})$ and the set $T'$ maximizes $\diff_{\vec{v}'}(T',p,\vec{q})$.  If $T \neq T'$ then we must have $\diff_{\vec{v}}(T,p,\vec{q}) \geq \diff_{\vec{v}}(T',p,\vec{q})$ and $\diff_{\vec{v}'}(T',p,\vec{q}) \geq \diff_{\vec{v}'}(T,p,\vec{q})$, with at least one of them strict (since we break ties consistently).  So in particular we must have
    \begin{align}
    \label{eq:diff.ineq} 
    \diff_{\vec{v}}(T,p,\vec{q}) + \diff_{\vec{v}'}(T',p,\vec{q}) > \diff_{\vec{v}}(T',p,\vec{q}) + \diff_{\vec{v}'}(T,p,\vec{q}).
    \end{align}
    
    On the other hand, for any $T$ and $T'$ with $T \cap S = T' \cap S$, 
    we must have $\diff_{\vec{v}}(T,p,\vec{q}) - \diff_{\vec{v}'}(T,p,\vec{q}) = \sum_{i \in T \cap S}(v_i - v'_i) = \sum_{i \in T' \cap S}(v_i - v'_i) = \diff_{\vec{v}}(T',p,\vec{q}) - \diff_{\vec{v}'}(T',p,\vec{q})$.  As this is inconsistent with \eqref{eq:diff.ineq}, we conclude that $T = T'$ as required.
\end{proof}

\claimSameMaxMenu*
\begin{proof}
   Fix $q_{-i}$, given $q_i < q'_i$, then for any $T_i$ such that $i \in T_i$, the utility difference for the buyer for purchasing from the item seller side items in $T_i$ is just the difference of utility for the particular item.  That is,\footnote{Here and elsewhere, we will use $(x)_+$ to mean $\max\{x,0\}$.}
    $$\ub_{I, \vec{v}}(T_i, (q_i, q_{-i})) - \ub_{I, \vec{v}}(T_i, (q'_i, q_{-i})) = (v_i - q_i)_+ - (v_i - q'_i)_+ = (\min(v_i, q'_i) - q_i)_+.$$ 
   On the other hand, {for any $T_{\neg i}$ such that $i \not \in {{T_{\neg i}}}$,} there is no utility difference, namely  $$\ub_{I, \vec{v}}({{T_{\neg i}}}, (q_i, q_{-i})) - \ub_{I, \vec{v}}({{T_{\neg i}}}, (q'_i, q_{-i})) = 0.$$ 

    {As $\diff_{\vec{v}}(T, p, \vec{q}) = \ub_{P, \vec{v}}(T, p) - \ub_{I, \vec{v}}(T, \vec{q})$
    and $\diff_{\vec{v}}(T, p, (q'_i, q_{-i})) = \ub_{P, \vec{v}}(T, p) - \ub_{I, \vec{v}}(T, (q'_i, q_{-i}))$, we conclude that} 
    \begin{align}
    \label{eq:diff.i.in.T}
    \diff_{\vec{v}}(T, p, (q'_i, q_{-i})) = \diff_{\vec{v}}(T, p, (q_i, q_{-i})) + (\min(v_i, q'_i) - q_i)_+ %\mbc{isn't this a typo and $q'_i$ and $q_i$ should be switched here?}\blcomment{I think it's correct as written; since diff is $\ub_P - \ub_I$, the $\ub_I$ terms are entering in negatively.}\mbc{I suggest adding another step to make this clear} % MB: added a line before to clarify 
    \quad\text{ for any } T \ni i
    \end{align}
    and 
    \begin{align}
    \label{eq:diff.i.not.in.T}
    \diff_{\vec{v}}({{T_{\neg i}}}, p, (q_i, q_{-i})) = \diff_{\vec{v}}({{T_{\neg i}}}, p, (q'_i, q_{-i})) \quad\text{ for any } {{T_{\neg i}}} \not \ni i.
    \end{align}  

{
    Write $T(q_i)$ for the set that the buyer purchases from the principal given $q_i$.
    Note that $T(q_i) \in \argmax_T\{ \diff_{\vec{v}}(T, p, (q_i, q_{-i})) \}$, with ties broken as per our tie-breaking rule for the buyer.  Note also that, due to our tie-breaking in favor of purchasing more items from item sellers, 
    we must have $i \in T(0)$ (since if $q_i = 0$ it is never utility-decreasing for the buyer to acquire item $i$ from item seller $i$, regardless of what was purchased from the principal).  Define $T_i = T(0)$, so that $\diff_{\vec{v}}(S, p, (0, q_{-i})) \leq \diff_{\vec{v}}(T_i, p, (0, q_{-i}))$ for all sets $S$.  Then for any $S \ni i$ and any $q_i$, equation \eqref{eq:diff.i.in.T} implies
    \begin{align*}
    \diff_{\vec{v}}(S, p, (q_i, q_{-i})) &= \diff_{\vec{v}}(S, p, (0, q_{-i})) + v_i - q_i\\ 
    &\leq \diff_{\vec{v}}(T_i, p, (0, q_{-i})) + v_i - q_i\\ 
    &= \diff_{\vec{v}}(T_i, p, (q_i, q_{-i}))
    \end{align*} 
    and hence (by consistent tie-breaking) if $S = T(q_i)$ then we must have $S = T_i$.  By the same argument, if there exists some $q_i > q'_i > 0$ and sets $S,T \not\ni i$ such that $T = T(q_i)$ and $S = T(q'_i)$, then we must have $S = T$.  We can therefore write $T_{\neg i}$ for the unique set $T_{\neg i} \not\ni i$ purchased at any choice of $q_i$.  Finally, since $\diff_{\vec{v}}(T_{\neg i}, p, (q_i, q_{-i}))$ is independent of $q_i$ whereas $\diff_{\vec{v}}(T_i, p, (q_i, q_{-i}))$ is decreasing, there exists a threshold $\theta_i$ such that $T_i$ is chosen for $q_i \leq \theta_i$ and $T_{\neg i}$ is chosen for $q_i > \theta_i$.
    }
\end{proof}

\claimProbMonotone*
\begin{proof}
    By \Cref{claim:sameMaxMenu} 
    there exist sets $T_i \ni i$ and $T_{\neg i} \not\ni i$ and a threshold $\theta_i \geq 0$ such that the set that the buyer purchases from the principal is $T_i$ if $q_i \leq \theta_i$, and is $T_{\neg i}$ if $q_i > \theta_i$. 
    For any $q_i > 0$, it is utility-maximizing for the buyer to purchase item $i$ from item seller $i$ precisely if (a) item $i$ is not purchased from the principal and (b) $v_i \geq q_i$.  We conclude that $\ind_{i,\vec{v}}(p, \vec{q}) = 0$ for all $q_i > \max\{v_i, \theta_i\}$, and $\ind_{i,\vec{v}}(p, \vec{q}) = 1$ for all $0 < q_i < \max\{v_i, \theta_i\}$.  For the remaining case that $q_i = 0$, we note that the buyer will always purchase from the item seller regardless of what was purchased from the principal.  We conclude that  $\ind_{i,\vec{v}}(p, (q_i, q_{-i}))$ is monotone non-decreasing in $q_i$, as claimed.
\end{proof}

\lemSubsetBound*
\begin{proof}
    Write $Rev(p)$ for the expected revenue under $(p,\vec{s})$.  What we will show is that $Rev(Y) \geq Rev(p) - Wel(Z)$.  Indeed, consider the following (direct revelation) mechanism for the principal to sell the items in $Y$ only.  The principal solicits reported values $v_i \sim F_i$ for $i \in Y$, then draws $\tilde{v}_i \sim \dist'_i$ for $i \in Z$, then executes the mechanism for $M$ described by price menu $p$.  If any item $i \in Z$ is to be allocated under this mechanism, we instead transfer the simulated value $\tilde{v}_i$ to the buyer.  This randomized mechanism has expected revenue at least $Rev(p) - Wel(Z)$, as $Wel(Z)$ is the maximum expected transfer to the buyer.  Furthermore, as the only randomization is in the simulation of $\tilde{v}_i$, we can derandomize by choosing the revenue-maximizing choice of $\tilde{v}$ (since the mechanism is dominant strategy incentive compatible, as there is only a single buyer).  The resulting deterministic mechanism can be described as a deterministic menu, so its expected revenue must be at most $Rev(Y)$.  We conclude that $Rev(Y) \geq Rev(p) - Wel(Z)$, as claimed.
\end{proof}
\section{Omitted Proofs in Section \ref{sec:approxopt}} \label{app:sec:approxopt}

\subsection{Omitted Proofs in Section \ref{subsec:approxopt:implications} } \label{app:subsec:approx-implication}

\lemPriceSense*
\begin{proof}

By the definition of $(\lambda, C)$-price sensitivity, $G$ is $(\lambda, C)$-price sensitive if and only if for $C$ it holds that $\forall C' \in [0, C]$, $\lambda \leq \frac{\rev_{G}(r_G) - \rev_{G}(C' \cdot r_G)}{(1-C') \cdot r_G}$.  Clearly $\lambda = \min_{C' \leq C} \frac{\rev_{G}(r_G) - \rev_{G}(C' \cdot r_G)}{(1-C') \cdot r_G}$ satisfies this condition. 
\end{proof}

\lemIIDVar*
\begin{proof}
    Since the items are i.i.d, we know that the standard deviation of the buyer's truncated value for $m$ items is $\sqrt{m}$ times the standard deviation of the buyer's truncated value for one item. Hence $\sigma(V(\vec{r})) = \sqrt{m} \cdot \sigma_{G}(r_G)$. When $m \geq \frac{144}{\lambda^3(1-C)^3} \cdot \paren*{\frac{r_G}{\sigma_G(r_G)}}^2$, $\sqrt{m} \geq \frac{12}{(\lambda(1-C))^{3/2}} \cdot \frac{r_G}{\sigma_G(r_G)}$, and hence $\sigma(V(\vec{r})) = \sqrt{m} \cdot \sigma_{G}(r_G) \geq \frac{12}{(\lambda(1-C))^{3/2}} \cdot r_G$. 
\end{proof}

\corIIDRev*
\begin{proof}
 Since $r_G$ is a unique Myerson price, by  \Cref{lem:price-sense}, $G$ is $(\lambda, C)$-price sensitive for $C = 1 - \frac{G(r_G)^4}{9}$ and some $\lambda(G) > 0$ (otherwise by the definition of $\lambda(G)$, there must exists a point on the revenue curve which has the same value as $\rev_{G}(r_G)$, violating the uniqueness assumption). Since $r_G > \inf(G)$, then $1- C = \frac{G(r_G)^4}{9} > 0$. Additionally, $r_G > \inf(G)$  ensures that $\sigma_{G}(r_G) > 0$, since there is some variation in the values. By \Cref{lem:iid-var}, for $m$ that is polynomial in  $1/G(r_G)$, $1/\lambda(G)$ and $\sigma_{G}(r_G)$, the standard deviation requirement in \Cref{thm:generalPrincipalBundleRev} is satisfied, and hence the theorem result applies.  
\end{proof}

\lemSmoothPriceSense*
\begin{proof}
    Since $G$ is $\delta$-revenue-concave,
    we know that for the revenue function $h(x) = x \cdot (1 - G(x))$, $h''(x) \leq - \delta/r_G$ by definition.  % \mbc{Is this right? Maybe search and make sure.} \blcomment{I think this is precisely the definition of $\delta$-revenue-concavity, right?.  Added a few extra words to the text to hopefully make this more clear.}\linda{I think Moshe was refering to the fact that I forgot to change $\delta$ to $\delta/r_G$ here previously, this is now fixed.}
    By the definition of $r_G$, $h'(r_G) = 0
    $.  
    Hence for any $q < r_G$, 
    \begin{align*}
        h'(q) \geq h'(q) - h'(r_G) = - \int_{q}^{r_G}  h''(x)\, dx \geq  (r_G - q) \cdot \frac{\delta}{r_{G}},
    \end{align*}
    and thus
    \begin{align*}
         h(r_G) - h(C\cdot r_G) &= \int_{C \cdot r_G}^{r_G} h'(x) \geq \int_{C \cdot r_G}^{r_G} (r_G - x) \cdot \frac{\delta}{r_{G}} \, dx = \frac{\delta}{r_{G}} \cdot \left( r_G \cdot x - \frac{1}{2} x^2 \right)\biggr|^{r_G}_{C \cdot r_G} = \frac{1}{2} \delta (1-C)^2 r_G.
    \end{align*} 

    Now, by \Cref{lem:price-sense}$, \lambda = \min_{C' \leq C} \frac{\rev_{G}(r_G) - \rev_{G}(C' \cdot r_G)}{(1-C') \cdot r_G}$. Since $G$ is $\delta$-revenue-concave, $\frac{\rev_{G}(r_G) - \rev_{G}(C' \cdot r_G)}{(1-C') \cdot r_G}$ is minimized when $C' = C$. Hence 
    \begin{align*}
        \lambda = \frac{\rev_{G}(r_G) - \rev_{G}(C \cdot r_G)}{(1-C) \cdot r_G} = \frac{\frac{1}{2} \delta (1-C)^2 r_G}{(1-C) \cdot r_G} = \frac{\delta}{2} \cdot (1-C). 
    \end{align*}
\end{proof}

\lemSmoothVariance*
\begin{proof}
Since $G$ is $\delta$-smooth, for any $x \in (0, 1)$, $G'(x) \geq \frac{\delta}{r_{G}}$. Hence for any $x \in (0, 1)$, $G(x) = \int_{0}^{x} G'(x)\, dx \geq x \cdot \frac{\delta}{r_G}$, hence 
\begin{align*}
   r_G - \E_{v \in G}[\min(v, r_G)] = r_G - \mu_{G}(r_G) = \int_{0}^{r_G} G(x)\, dx \geq\int_{0}^{r_G}  x \cdot \frac{\delta}{r_G}\, dx \geq \frac{\delta \cdot r_G}{2}. 
\end{align*}

Now, we will bound the variance of $V_{G}(r_G) = \min(v, r_G)$.  
\begin{align*}
   var(V_{G}(r_G)) &= \E[(V_{G}(r_G) - \mu_G(r_G))^2] \geq \int_{0}^{r_G} G'(x) \cdot (x - \mu_G(r_G))^2\, dx\\
   &\geq \int_{0}^{r_G} \frac{\delta}{r_G} \cdot (x - \mu_G(r_G))^2\, dx = \frac{\delta}{r_G} \cdot \frac{1}{3} \left(x - \mu_G(r_G)\right)^3 \biggr|_{0}^{r_G}\\
   &= \frac{\delta}{3r_G} \left(\cdot (r_G - \mu_G(r_G))^3 + \mu_G(r_G)^3 \right) \\
   &\geq \frac{\delta}{3r_G} \cdot \frac{(r_G)^3}{4} = \frac{\delta \cdot (r_G)^2}{12}. 
\end{align*}
The first to second line is by the bound on $G'(x)$, the second to third line is just by evaluating the integral, and the third to forth line is by minimizing the convex function $(r_G - x)^3 + x^3$. 
\end{proof}

\corSmoothRev*
\begin{proof}
   Observe that $r_G - \E_{v \sim G}[\min(v, r_G)] = \Omega(\delta) \cdot r_G$ and that %the support of $G$ is bounded by $1$ and that 
   $r_{max} \geq r_G \geq r_{min}$. This imply that for any $\delta$-smooth distributions $\dist_1,\dist_2,\ldots,\dist_m$,
   for all $i$, it holds that   
$K = K(\dist_1,\dist_2,\ldots,\dist_m) \leq \frac{r_{max}}{\Omega(\delta) \cdot r_{min}}$. Note that by $\delta$-smoothness, $F_i(r_i) \geq \delta$. Thus for the choice of $C$ in \Cref{thm:generalPrincipalBundleRev}, $1- C = \min_{i \in \items} \frac{F_i^4(r_i)}{8K+1}= \Omega(\delta^5 \cdot \frac{r_{min}}{r_{max}})$. 
Hence by \Cref{lem:smooth-price-sense}, every distribution $\dist_i$ is $(\lambda, C)$-price sensitive for $\lambda = \frac{\delta}{2} (1-C) = \Omega(\delta^{6} \cdot \frac{r_{min}}{r_{max}})$. 

By \Cref{lem:smooth-variance}, the variance of each distribution $\dist_i$ is $\Omega(\delta \cdot r_{min}^2)$. In order to satisfy the standard deviation requirement in \Cref{thm:generalPrincipalBundleRev}, the variance for $m$ items must be at least 
\begin{align*}
    \frac{144 \cdot r^2_{max}}{\lambda^3(1-C)^3} = \frac{144 \cdot r^2_{max}}{\Omega(\delta^{6} \cdot \frac{r_{min}}{r_{max}})^3 \cdot \Omega(\delta^5 \cdot \frac{r_{min}}{r_{max}})} = O\left(
    \left(\frac{1}{\delta}\right)^{23} \cdot \left(\frac{r_{max}}{r_{min}}\right)^{4} \cdot r^2_{max}
    \right).
\end{align*}
Hence for some $m = \Omega\left(\left(\frac{1}{\delta}\right)^{24} \cdot \left(\frac{r_{max}}{r_{min}}\right)^{6}\right) = \mathsf{poly}\left(\frac{1}{\delta}, \frac{r_{max}}{r_{min}}\right)$, the variance of $m$ items satisfy the requirements in \Cref{thm:generalPrincipalBundleRev}. We can now apply \Cref{thm:generalPrincipalBundleRev} to get the bound on principal revenue. 
\end{proof}

\subsection{Omitted Proofs in Section \ref{subsec:approxopt:proof}}
For the reader's convenience, we reiterate the notations used in this section. 

For $v_i\sim \dist_i$ we use $\truncV{i}{r_i}$ to denote the random variable $\min(v_i, r_i)$: the value of item $i$ truncated at $r_i$. Note that the truncated social welfare is the random variable $V(\vec{r}) = \sum_i \truncV{i}{r_i}= \sum_i \min(v_i, r_i)$ where $v_i\sim \dist_i$ for every $i$ and $\vec{r}=(r_1,...,r_m)$. 
    The expected truncated social welfare is simple the mean on $V(\vec{r})$, and we write $\sigma(V(\vec{r}))$ to denote the standard deviation of the truncated social welfare $V(\vec{r})$. 
    We use $\mu_i(r_i) = \mu(\truncV{i}{r_i})$ to denote the mean of $\truncV{i}{r_i}$, and $\sigma_i(r_i) = \sigma(\truncV{i}{r_i})$ to denote the standard deviation of  $\truncV{i}{r_i}$.

    Expanding on the notation of $\truncV{i}{r_i}$ that we defined earlier in the section, we will use $V_i(s_i)$ to denote the random variable $\min(v_i, q_i)$, where $q_i$ is a random pricing strategy that is drawn from the item seller $i$'s mixed strategy $s_i$. We will denote $\mu_i(s_i) = \mu(V_i(s_i))$ and $\sigma_i(s_i) = \sigma(V_i(s_i))$ as the mean and standard deviation of $V_i(s_i)$, respectively. Similarly, we will use $\mu(\vec{s})$ and $\sigma(\vec{s})$. to represent the mean and standard deviation of $\sum_{j \in \items} V_j(s_j)$, respectively. We also use $\mu_{-i}(s_{-i})$ and $\sigma_{-i}(s_{-i})$ to represent the mean and standard deviation of $\sum_{j \neq i} V_j(s_j)$, respectively. For the sake of simplicity in notation, when item seller $i$'s strategy $s_i$ is pure and pricing at $q_i$, we would just use $q_i$ to represent $s_i$. For example, $V_i(C \cdot r_i)$ denote the random variable of $\min(v_i, C \cdot r_i)$ when $v_i\sim F_i$.

    For distributions $\dist_1,\ldots, \dist_m$ we define $K(\dist_1,\ldots, \dist_m) = \frac{\max_{i \in \items} \set*{r_i - \mu_i(r_i)}}{\min_{i \in \items} \set*{r_i - \mu_i(r_i)}}$ to be the ratio between the largest and smallest value of $\set*{r_i - \mu_i(r_i)}$. Also, for notational convenience, we will sometimes use $p$ to denote the principal's price of the grand bundle, instead of $p(\items)$. 
\begin{claim} \label{claim:boundVariance}
    For any $1 \geq  C > 0$ and item seller $i$'s strategy $s_i$ such that $\inf(s_i) \geq C \cdot r_i$, it holds that $\sigma_i^2(s_i) \geq \paren*{1 - \frac{2(1- C)}{C \cdot \dist_i(r_i)^4}}\cdot\sigma_i^2(r_i)$. % where $\Prange{i}{r_i}{\infty}$ is the probability that $v_i' = r$. %Furthermore $\E[V_i^3] \geq \paren*{1 - \frac{2\gamma}{(1-\gamma)(1-\Prange{i}{r_i}{\infty})^6}} \cdot \E[v_i'^3]$.
\end{claim}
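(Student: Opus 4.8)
\emph{The plan.} The plan is to reduce the claim to the case of a pure price $q_i = C\cdot r_i$, bound how much variance is lost by truncating at $C\cdot r_i$ rather than $r_i$, and compare that loss to a lower bound on $\sigma_i^2(r_i)$; both estimates will come from the revenue-optimality of $r_i$. First I would reduce to a pure strategy: conditioning on the realized price $q_i\sim s_i$ and using the law of total variance, $\sigma_i^2(s_i) \ge \E_{q_i\sim s_i}[\operatorname{Var}(\min(v_i,q_i)\mid q_i)]$. Since $\inf(s_i)\ge C\cdot r_i$, every $q_i$ in the support satisfies $\min(v_i,C\cdot r_i)=\phi(\min(v_i,q_i))$ for the $1$-Lipschitz nondecreasing map $\phi(y)=\min(y,C\cdot r_i)$; as a $1$-Lipschitz map never increases variance (for an i.i.d.\ copy $X'$ one has $\operatorname{Var}(\phi(X))=\tfrac12\E[(\phi(X)-\phi(X'))^2]\le\tfrac12\E[(X-X')^2]=\operatorname{Var}(X)$), the conditional variance is at least $\sigma_i^2(C\cdot r_i)$, so $\sigma_i^2(s_i)\ge\sigma_i^2(C\cdot r_i)$. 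Hence it suffices to show $\sigma_i^2(C\cdot r_i)\ge(1-\tfrac{2(1-C)}{C\,\dist_i(r_i)^4})\,\sigma_i^2(r_i)$, and I may assume this factor is positive (otherwise the statement is vacuous).

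Next I would estimate the variance gap. Put $W=\min(v_i,r_i)$, $W''=\min(W,C\cdot r_i)$, and $D=W-W''=(W-C\cdot r_i)_+\in[0,(1-C)r_i]$, with $D>0$ only when $v_i>C\cdot r_i$; expanding gives $\operatorname{Var}(W)-\operatorname{Var}(W'')=\E[D^2]+\E[D]^2-2(\E[W]-C\cdot r_i)\,\E[D]$. Revenue-optimality of $r_i$ means $x(1-\dist_i(x))\le r_i(1-\dist_i(r_i))$ for all $x$, i.e.\ $\Prob[v_i\ge x]\le r_i(1-\dist_i(r_i))/x$; therefore $\E[D]=\int_{C r_i}^{r_i}\Prob[v_i>x]\,dx\le r_i(1-\dist_i(r_i))\ln(1/C)\le\tfrac{1-C}{C}\,r_i(1-\dist_i(r_i))$ and, similarly, $\E[D^2]\le\tfrac{(1-C)^2}{C}\,r_i^2(1-\dist_i(r_i))$, while from $\E[W]=r_i-\int_0^{r_i}\dist_i(x)\,dx$ we get $0<C r_i-\E[W]\le\int_0^{r_i}\dist_i(x)\,dx\le r_i\,\dist_i(r_i)$ whenever the cross term is positive (it is nonpositive otherwise). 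Assembling these, $\operatorname{Var}(W)-\operatorname{Var}(W'')\le\tfrac{2(1-C)}{C}\,r_i^2\,\dist_i(r_i)(1-\dist_i(r_i))$ plus error terms of order $(1-C)^2$, which for the relevant values of $C$ (where $1-C\lesssim\dist_i(r_i)^4$) are absorbed into a constant times the leading term.

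The hard part will be the matching lower bound on $\sigma_i^2(r_i)=\operatorname{Var}(\min(v_i,r_i))$: dividing the estimate above by this variance yields a factor of the form $1-\Theta(\tfrac{1-C}{C\,\dist_i(r_i)^4})$ exactly when $\operatorname{Var}(\min(v_i,r_i))\gtrsim r_i^2\,\dist_i(r_i)^{5}(1-\dist_i(r_i))$ with a good enough constant (the $\dist_i(r_i)^4$ in the statement arising after cancelling one power of $\dist_i(r_i)$). To obtain such a bound I would again use revenue-optimality, which forces the revenue curve of $\dist_i$ to lie below the equal-revenue curve through $(r_i,\rev_{\dist_i}(r_i))$ — so $\dist_i$ cannot put almost all of its sub-$r_i$ mass just below $r_i$ — together with the decomposition $\operatorname{Var}(\min(v_i,r_i))=\dist_i(r_i)\operatorname{Var}(v_i\mid v_i<r_i)+\dist_i(r_i)(1-\dist_i(r_i))(r_i-\E[v_i\mid v_i<r_i])^2$ and the identity $r_i-\E[v_i\mid v_i<r_i]=\tfrac1{\dist_i(r_i)}\int_0^{r_i}\dist_i(x)\,dx$. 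The delicate regime is when nearly all of $v_i$'s mass lies at or above $r_i$; there the extremal distribution is an equal-revenue distribution truncated at $r_i$, and a naive use of these estimates loses a constant factor, so one must argue more carefully — either by a short case split on whether $\E[v_i\mid v_i<r_i]$ is small, or by bounding $\operatorname{Var}(\min(v_i,r_i))$ directly against the explicitly computable truncated-equal-revenue family. I expect establishing this variance lower bound (and tracking constants so the final factor is exactly $1-\tfrac{2(1-C)}{C\,\dist_i(r_i)^4}$) to be the main obstacle; the rest is routine manipulation of the Myerson inequality.
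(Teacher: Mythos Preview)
Your overall architecture—bound the variance gap $\sigma_i^2(r_i)-\sigma_i^2(s_i)$ from above, bound $\sigma_i^2(r_i)$ from below, then divide—is exactly the paper's, and your Lipschitz reduction to the pure price $C r_i$ is correct. However, you overcomplicate the gap estimate and leave the variance lower bound unfinished, whereas the paper handles both with short direct arguments.

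For the gap, the paper avoids your $D=(W-Cr_i)_+$ decomposition entirely by the one-line observation that $V_i(s_i)\le V_i(r_i)$ pointwise (after truncating $s_i$ at $r_i$), so $\mu_i(s_i)\le \mu_i(r_i)$ and hence
\[
\sigma_i^2(r_i)-\sigma_i^2(s_i)\;\le\;\E[V_i(r_i)^2]-\E[V_i(s_i)^2]\;\le\;\Prob[v_i\ge q_i]\,(r_i^2-q_i^2)\;\le\;\tfrac{2(1-C)}{C}\,(1-F_i(r_i))\,r_i^2,
\]
where $q_i=\inf(s_i)\ge Cr_i$ and the last step uses $\Prob[v_i\ge q_i]\le \tfrac{1}{C}\Prob[v_i\ge r_i]$ from revenue optimality. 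This gives a clean bound valid for \emph{all} $C\in(0,1]$, with no ``error terms of order $(1-C)^2$'' to absorb.

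For the lower bound—which you flag as the main obstacle—the paper's argument is again short. Using only the mass at $r_i$,
\[
\sigma_i^2(r_i)\;\ge\;\Prob[v_i\ge r_i]\,(r_i-\mu_i(r_i))^2,
\]
and then revenue optimality ($\Prob[v_i\ge x]\le r_i(1-F_i(r_i))/x$) gives $\mu_i(r_i)\le r_i(1-F_i(r_i))\bigl(1+\ln\tfrac{1}{1-F_i(r_i)}\bigr)$, whence $r_i-\mu_i(r_i)\ge \tfrac12 r_i\,F_i(r_i)^2$ via the elementary inequality $1-x(1+\ln(1/x))\ge (1-x)^2/2$. This yields $\sigma_i^2(r_i)\ge \tfrac14(1-F_i(r_i))\,r_i^2\,F_i(r_i)^4$, and dividing the gap bound by this produces the stated factor. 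Your proposed route through the law-of-total-variance decomposition and a comparison with the truncated equal-revenue family would eventually reach the same place, but the paper's two tricks—dropping the mean-squared cross term for the gap, and retaining only the $v_i\ge r_i$ contribution for the lower bound—make the constants fall out without any case analysis.
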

\begin{proof}     
        By the definition of variance,  $\sigma_{i}(r_i)^2 = \E[\truncV{i}{r_i}^2] - \mu_i(r_i)^2$ and $\sigma_{i}(s_i)^2 = \E[\truncV{i}{s_i}^2] - \mu_i(s_i)^2$. 
        We can assume WLOG that $s_i$ is supported on $[0, r_i]$, because if it is support on values that are larger than $r_i$, then truncating these values at $r_i$ will only make the variance the $\truncV{i}{s_i}$ smaller.
        
        Since all support of $s_i$ is at most $r_i$, $\E[\truncV{i}{s_i}] \leq \E[\truncV{i}{r_i}]$ and thus $\E[\truncV{i}{s_i}]^2 \leq \E[\truncV{i}{r_i}]^2$. Therefore 
        \begin{align*}
            \sigma_{i}^2(r_i) - \sigma_i^2(s_i) &= (\E[\truncV{i}{r_i}^2] - \E[\truncV{i}{r_i}]^2) - (\E[\truncV{i}{s_i}^2] - \E[\truncV{i}{s_i}]^2) \\
            &\leq \E[\truncV{i}{r_i}^2] - \E[\truncV{i}{s_i}^2] \leq (\Prange{i}{q_i}{r_i}+ \Prange{i}{r_i}{\infty}) \cdot (r_i^2 - q_i^2).
        \end{align*}
        The rest of our  proof will bound $\E[\truncV{i}{r_i}^2] - \E[\truncV{i}{s_i}^2]$ from above, as well as $\sigma_{i}^2(r_i)$ from below. This will help us establish a relationship between $\sigma_i^2(s_i)$ and $\sigma_{i}^2(r_i)$.

        Let $q_i = \inf(s_i)$. 
        We can write out $\E[\truncV{i}{r_i}^2]$ as the weighted sum over conditional probabilities using the  law of total probability over events $0 \leq \truncV{i}{r_i} \leq q_i$, $q_i \leq \truncV{i}{r_i} \leq r_i$ and $r_i \leq \truncV{i}{r_i}$. Note that when $\truncV{i}{r_i} < r_i$, $\truncV{i}{r_i} = \min(v_i, r_i)$ is really just equal to $v_i$. Meanwhile, when $\truncV{i}{r_i} \geq r_i$, $\truncV{i}{r_i} = r_i$. Hence we have 
        \begin{align}
            \E[\truncV{i}{r_i}^2] &= \Prange{i}{0}{q_i}\cdot \E[v_i^2 | v_i \leq q_i] \nonumber   + \Prange{i}{q_i}{r_i} \cdot \E[v_i^2 | q_i \leq v_i \leq r_i] + \Prange{i}{r_i}{\infty} \cdot r_i^2 \nonumber\\
            &\leq \Prange{i}{0}{q_i}\cdot \E[v_i^2 | v_i \leq q_i] +  \Prange{i}{q_i}{\infty} \cdot r_i^2
        \label{eq:risquare-expect}
        \end{align}
        Since $q_i = \inf(s_i)$, we know that $\E[\truncV{i}{s_i}^2] \geq \E[\truncV{i}{q_i}^2]$. Using the law of total probability again, we get 
        \begin{align}
            \E[\truncV{i}{s_i}^2] \geq \E[\truncV{i}{q_i}^2] &= \Prange{i}{0}{q_i} \cdot \E[v_i^2 | v_i \leq q_i] +\Prange{i}{q_i}{\infty} \cdot q_i^2 \label{eq:xisquare-variance}
        \end{align}
        By subtracting \Cref{eq:xisquare-variance} from \Cref{eq:risquare-expect}, we get: 
        \begin{align*}
            \Rightarrow \E[\truncV{i}{r_i}^2] &- \E[\truncV{i}{s_i}^2] \leq \Prange{i}{q_i}{\infty} \cdot (r_i^2 - q_i^2). 
        \end{align*}
        
        Since $r_i$ is a Myerson price, it maximizes the function $x \cdot \Prob[x \leq v_i]$. Hence for any $w$, $\Prob[v_i \geq w] \leq \frac{r_i \cdot \Prange{i}{r_i}{\infty}}{w}$. Since $q_i \geq C \cdot r_i$, it must be the case that
        \begin{align*}
            \Prange{i}{q_i}{\infty} \leq \frac{r_i \cdot \Prange{i}{r_i}{\infty}}{C \cdot r_i} = \frac{1}{C} \cdot \Prange{i}{r_i}{\infty}.
        \end{align*}
        Therefore 
        \begin{align*}
            \sigma_{i}^2(r_i) - \sigma_{i}^2(s_i) &\leq \frac{1}{C} \cdot \Prange{i}{r_i}{\infty} \cdot (r_i^2 - q_i^2) \leq \frac{1}{C} \cdot \Prange{i}{r_i}{\infty} \cdot (r_i^2 - (C\cdot r_i)^2) \\
            &=  \frac{1}{C} \cdot \Prange{i}{r_i}{\infty} \cdot (1- C) \cdot (1+C) \cdot r_i^2 
            \leq \frac{1}{C} \cdot \Prange{i}{r_i}{\infty} \cdot 2(1-C) \cdot r_i^2. 
        \end{align*}
        The last line is by $C \leq 1$. 
        
        Finally, we also bound $\sigma_i^2(r_i)$. The variance $\sigma_i^2(r_i)$ can be written as the expectation of $(\truncV{i}{r_i} -\mu_i(r_i))^2$. By the law of total expectation, 
        \begin{align*}
            \sigma_i^2(r_i) \geq \Prange{i}{r_i}{\infty} \cdot (r_i - \mu_i(r_i))^2, 
        \end{align*}
        where 
        \begin{align*}
            \mu_i(r_i) &= \int_{0}^{r_i} \Prob[v_i \geq v]\,  dv\\ 
                  &\leq \int_{0}^{r_i} \min\left(\frac{\Prange{i}{r_i}{\infty}\cdot r_i}{v}, 1\right)\,  dv\\
                  &\leq \int_{0}^{\Prange{i}{r_i}{\infty}\cdot r_i} 1\cdot\,   dv + \int_{\Prange{i}{r_i}{\infty}\cdot r_i}^{r_i} \frac{\Prange{i}{r_i}{\infty}\cdot r_i}{v}\,  dv\\
                  &\leq \Prange{i}{r_i}{\infty}\cdot r_i \cdot \left(1 + \log\left(\frac{1}{\Prange{i}{r_i}{\infty}}\right)\right). 
        \end{align*}
        For the above equation, the first to second line is by the fact that $r_i$ is the myerson price, a monopolist pricing at $v$ will get $\Prob[v \leq v_i] \cdot v \leq \Prange{i}{r_i}{\infty} \cdot r_i$, the second to fourth line are by expanding the terms and calculating the integral. Thus
        \begin{align*}
            r_i - \mu_i(r_i) \geq r_i - \Prange{i}{r_i}{\infty}\cdot r_i \cdot \paren*{1 + \log\left(\frac{1}{\Prange{i}{r_i}{\infty}}\right)} \geq 
            \left(
                1 - \Prange{i}{r_i}{\infty} \cdot \paren*{1 + \log\left(\frac{1}{\Prange{i}{r_i}{\infty}}\right)}
            \right) \cdot r_i.
        \end{align*}
        By examining the Taylor expansion of $1 - \Prange{i}{r_i}{\infty}\cdot \paren*{1 + \log\left(\frac{1}{\Prange{i}{r_i}{\infty}}\right)}$ at $\Prange{i}{r_i}{\infty}= 1$, we get that for any $\Prange{i}{r_i}{\infty}\in (0, 1]$, 
        \begin{align}
            1 - \Prange{i}{r_i}{\infty}\cdot \paren*{1 + \log\left(\frac{1}{\Prange{i}{r_i}{\infty}}\right)} \geq (1 - \Prange{i}{r_i}{\infty})^2/2.
        \end{align}
        Thus
        \begin{align*}
            \sigma_i^2(r_i)\geq \Prange{i}{r_i}{\infty}\cdot (r_i - \mu_i(r_i))^2  \geq \Prange{i}{r_i}{\infty}\cdot r_i^2 \cdot (1 - \Prange{i}{r_i}{\infty})^4/4 = \Prange{i}{r_i}{\infty}\cdot r_i^2 \cdot \dist_i(r_i)^4/4. 
        \end{align*}
 
        From both bounds on $\sigma_i^2(r_i)- \sigma_i^2(s_i)$ and $\sigma_i^2$ itself, we conclude that 
        \begin{align*}
            \frac{\sigma_i(r_i)^2 - \sigma_i^2(s_i)}{\sigma_i(r_i)^2} \leq \frac{\Prange{i}{r_i}{\infty}\cdot r_i^2 \cdot \frac{2 (1-C)}{C}}{\Prange{i}{r_i}{\infty}\cdot r_i^2 \cdot \dist_i(r_i)^4} = \frac{2 (1-C)}{C \cdot \dist_i(r_i)^4},
        \end{align*}
and thus $\sigma_i^2(s_i)\geq \paren*{1 - \frac{2(1-C)}{C \cdot \dist_i(r_i)^4}}\cdot\sigma_i^2(r_i)$. 
\end{proof}

\coroConstVarianceBound*
\begin{proof}
  Since $C \geq 1 - \frac{\dist_i(r_i)^4}{2K+1}$, then $2 (1-C) \leq 2 \cdot \frac{\dist_i(r_i)^4}{2K+1}$. Meanwhile, $C \geq 1 - \frac{\dist_i(r_i)^4}{2K+1} \geq 1 - \frac{1}{2K+1} = \frac{2K}{2K+1}$. Hence
  \begin{align*}
      \frac{2(1-C)}{C \cdot (1-\Prange{i}{r_i}{\infty})^4} \leq \frac{2 \cdot \frac{\dist_i(r_i)^4}{2K+1}}{\frac{2K}{2K+1} \cdot \dist_i(r_i)^4} = \frac{1}{K},
  \end{align*}
  and thus $1 - \frac{2(1-C)}{C \cdot \dist_i(r_i)^4} \geq 1 - \frac{1}{K}$. Plugging above inequality into \Cref{claim:boundVariance} yields our corollary.  
\end{proof}

\begin{lemma} \label{lem:meanToVar}
     Let $\inf(s_i)$ be the infinum of the support of $s_i$, then 
     \begin{align*}
         \frac{\E_{q_i \sim s_i}[\sigma_i^2(r_i) - \sigma_i^2(q_i)]}{2 (\inf(s_i) - \mu_i(\inf(s_i))}  \geq \mu_i(r_i) - \mu_i(s_i) \geq \frac{\sigma_i^2(r_i) - \sigma_i^2(s_i)}{2 (r_i - \mu_i(r_i))}
     \end{align*}
\end{lemma}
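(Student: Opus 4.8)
The plan is to reduce the claim to an inequality relating the mean and variance of $\min(v_i,a)$ and $\min(v_i,b)$ for two deterministic truncation levels $0 \le a \le b$, prove that via an exact covariance identity, and then lift back to the mixed strategy $s_i$ by averaging over $q_i \sim s_i$. \textbf{The core step} is the following: writing $X = \min(v_i, a)$, $Y = \min(v_i, b)$, $D = Y - X$, and $\Delta = \E[D] = \mu_i(b) - \mu_i(a) \ge 0$, observe that $D > 0$ holds exactly when $v_i > a$, and on that event $X = a$, so $XD = aD$ pointwise. Hence $\mathrm{Cov}(X, D) = a\,\E[D] - \mu_i(a)\,\E[D] = (a - \mu_i(a))\,\Delta$, and expanding $\sigma_i^2(b) = \mathrm{Var}(X + D) = \sigma_i^2(a) + 2\,\mathrm{Cov}(X, D) + \mathrm{Var}(D)$ yields the exact identity
\[
\sigma_i^2(b) - \sigma_i^2(a) = 2\,(a - \mu_i(a))\,\Delta + \mathrm{Var}(D).
\]
Since $\mathrm{Var}(D) \ge 0$ this gives both $\sigma_i^2(b) - \sigma_i^2(a) \ge 0$ and the upper bound $\Delta \le \tfrac{\sigma_i^2(b) - \sigma_i^2(a)}{2(a - \mu_i(a))}$. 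For the matching lower bound $\Delta \ge \tfrac{\sigma_i^2(b) - \sigma_i^2(a)}{2(b - \mu_i(b))}$, I rewrite it (using the identity) as $2\Delta[(b - \mu_i(b)) - (a - \mu_i(a))] \ge \mathrm{Var}(D)$; since $(b - \mu_i(b)) - (a - \mu_i(a)) = (b - a) - \Delta$ and $\mathrm{Var}(D) = \E[D^2] - \Delta^2$, it suffices to prove $2(b-a)\Delta \ge \E[D^2] + \Delta^2$, which follows from $0 \le D \le b - a$ (so $\E[D^2] \le (b-a)\Delta$ and $\Delta \le b - a$).

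\textbf{Lifting to $s_i$.} First, $g(t) := t - \mu_i(t) = t - \int_0^t(1 - \dist_i(x))\,dx$ has derivative $\dist_i(t) \ge 0$, so $g$ is non-decreasing; hence $q_i - \mu_i(q_i) \ge \inf(s_i) - \mu_i(\inf(s_i))$ for every $q_i$ in the support of $s_i$. We may also assume $s_i$ is supported on $[0, r_i]$ (capping any higher price at $r_i$ leaves $\mu_i(r_i), \sigma_i^2(r_i)$ unchanged and only decreases the $\sigma_i^2(q_i)$, which helps both inequalities). Now apply the core step with $a = q_i$ (in the support of $s_i$) and $b = r_i$. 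The lower-bound half gives $\mu_i(r_i) - \mu_i(q_i) \ge \tfrac{\sigma_i^2(r_i) - \sigma_i^2(q_i)}{2(r_i - \mu_i(r_i))}$; taking $\E_{q_i \sim s_i}$, using $\mu_i(s_i) = \E_{q_i}[\mu_i(q_i)]$ and the law of total variance $\sigma_i^2(s_i) = \E_{q_i}[\sigma_i^2(q_i)] + \mathrm{Var}_{q_i}(\mu_i(q_i)) \ge \E_{q_i}[\sigma_i^2(q_i)]$, yields the right-hand inequality $\mu_i(r_i) - \mu_i(s_i) \ge \tfrac{\sigma_i^2(r_i) - \sigma_i^2(s_i)}{2(r_i - \mu_i(r_i))}$. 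The upper-bound half gives $\mu_i(r_i) - \mu_i(q_i) \le \tfrac{\sigma_i^2(r_i) - \sigma_i^2(q_i)}{2(q_i - \mu_i(q_i))}$; since the numerator is nonnegative and $q_i - \mu_i(q_i) \ge \inf(s_i) - \mu_i(\inf(s_i))$, the right side is at most $\tfrac{\sigma_i^2(r_i) - \sigma_i^2(q_i)}{2(\inf(s_i) - \mu_i(\inf(s_i)))}$, and taking $\E_{q_i \sim s_i}$ gives the left-hand inequality.

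\textbf{Main obstacle.} The only part needing genuine insight is spotting the covariance identity $\mathrm{Cov}(\min(v_i,a),\, \min(v_i,b) - \min(v_i,a)) = (a - \mu_i(a))(\mu_i(b) - \mu_i(a))$ and the resulting exact split of $\sigma_i^2(b) - \sigma_i^2(a)$ into a mean term and $\mathrm{Var}(D)$; after that everything is routine, modulo care in passing from deterministic prices to $s_i$ (linearity of the mean plus the law of total variance) and in noting that any degenerate case where a denominator vanishes — that is, $v_i$ is almost surely pinned at or above the relevant threshold — makes the corresponding inequality hold trivially.
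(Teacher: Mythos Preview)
Your proof is correct, and it takes a genuinely different route from the paper's.

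The paper proceeds by calculus: it computes $\frac{d\mu_i(q)}{dq} = 1 - \dist_i(q)$ and $\frac{d\sigma_i^2(q)}{dq} = 2(q - \mu_i(q))(1 - \dist_i(q))$, so that $\frac{d\sigma_i^2(q)}{dq} = 2\,\mathrm{rem}(q)\,\frac{d\mu_i(q)}{dq}$ with $\mathrm{rem}(q) = q - \mu_i(q)$ non-decreasing. Integrating from $q_i$ to $r_i$ and sandwiching $\mathrm{rem}(x)$ between $\mathrm{rem}(q_i)$ and $\mathrm{rem}(r_i)$ gives the two deterministic inequalities; the passage to mixed $s_i$ is then done exactly as you do it (linearity of the mean, and $\sigma_i^2(s_i) \ge \E_{q_i}[\sigma_i^2(q_i)]$). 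Your approach instead extracts the same content from a single exact identity, $\sigma_i^2(b) - \sigma_i^2(a) = 2(a - \mu_i(a))\,\Delta + \mathrm{Var}(D)$, obtained from the pointwise observation $XD = aD$. This buys you a proof that needs no smoothness of $\dist_i$ and no integration---it works verbatim for discrete distributions---and the covariance identity is arguably a cleaner structural explanation of \emph{why} the variance gap factors through the mean gap. The paper's derivative relation $\frac{d\sigma^2}{dq} = 2\,\mathrm{rem}(q)\,\frac{d\mu}{dq}$ is the infinitesimal version of your identity, and is perhaps more immediately reusable if one wants finer control over intermediate truncation levels. The lifting step to mixed strategies is essentially identical in both arguments.
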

\begin{proof}
    Firstly, consider the case where $s_i$ is a pure strategy at $q_i$. Then, 
    \begin{align*}
        \mu_i(q_i) = \E[V_i(q_i)] = \int_{0}^{q_i} 1 - \dist_i(x)\, dx 
    \end{align*}
    and 
    \begin{align*}
        \sigma_i^2(q_i) = \E[V_i^2(q_i)] - \mu_i^2(q_i) &=\int_{0}^{q_i^2} 1 - \dist_i(\sqrt{x}) dx - \mu_i^2(q_i) \\
        &= \int_{0}^{q_i} (1 - \dist_i(x)) \cdot 2x \, dx - \mu_i^2(q_i). 
    \end{align*}
    Note that although the action space of the item sellers are confined to multiples of $\eps$, for the purpose of this lemma we will consider the derivative of $\mu_i(q_i)$ and $\sigma^2(q_i)$ over the domain $[0, 1]$. Our derivative computation is purely a technical tool for our proof. 
    
    Now, we can compute the derivatives of both $\mu_i(q_i)$ and $\sigma^2(q_i)$ %\mbc{Feb 9: how do you know that these derivatives exist? actually, I suspect they don't (at least the right and left derivatives can be different when there are atoms in the value distribution - which is the case with a finite distribution)} 
    to see the rate of change of these quantities when a price reduction at $q_i$ occurs. 
    \begin{align*}
         \derivative{ \mu_i(q_i)}{ q_i} = 1 - \dist_i(q_i)
    \end{align*} and 
    \begin{align*}
        \derivative{\sigma_i^2(q_i)}{q_i} &= (1 - \dist_i(q_i)) \cdot 2q_i - 2 \mu_i(q_i) \cdot (1 - \dist_i(q_i))\\
        & = (1 - \dist_i(q_i)) \cdot (2 q_i - 2 \mu_i(q_i)).
    \end{align*}

    Let $\rem(q_i) = q_i - u_i(q_i)$, then 
    \begin{align*}
       \rem(q_i) = \int_{0}^{q_i} 1\, dx - \int_{0}^{q_i} 1 - \dist_i(x)\, dx = \int_{0}^{q_i} \dist_i(x)\, dx. 
    \end{align*}
    Clearly, $\rem(q_i)$ is monotonically non-decreasing in $q_i$. Moreover, from our computation, $\derivative{\sigma_i^2(q_i)}{q_i} = \rem(q_i) \cdot \derivative{\mu_i(q_i)}{q_i}$. Now we can use this formula to rewrite the difference between variance of $V_i(q_i)$ and $V_i(r_i)$, and bound this difference using the difference in mean.
    \begin{align*}
        \sigma^2(r_i) - \sigma^2(q_i) = \int_{q_i}^{r_i} \derivative{\sigma_i^2(x)}{q_i} \, dx 
        = \int_{q_i}^{r_i} 2 \cdot \rem(x)\cdot \derivative{\mu_i(x)}{x} \, dx.
    \end{align*}
    By the fact that $\rem(x)$ is monotonically non-decreasing, $\rem(x)$ is at most $\rem(r_i)$ and at least $\rem(q_i)$. Hence 
    \begin{align*}
        \int_{q_i}^{r_i} 2 \cdot \rem(x)\cdot \derivative{\mu_i(x)}{x} \, dx \geq 2 \rem(r_i) \int_{q_i}^{r_i}\derivative{\mu_i(x)}{x} \, dx = 2 \rem(r_i) \cdot (\mu(r_i) - \mu(q_i)). 
    \end{align*}  
    Similarly, 
    \begin{align*}
        \int_{q_i}^{r_i} 2 \cdot \rem(x)\cdot \derivative{\mu_i(x)}{x} \, dx \leq \rem(q_i) \cdot (\mu(r_i) - \mu(q_i))
    \end{align*}
    Hence for a pure strategy $q_i$, our theorem statement holds 
    \begin{align*}
        \frac{\sigma^2(r_i) - \sigma^2(q_i)}{\rem(q_i)} \geq \mu(r_i) - \mu(q_i) \geq \frac{\sigma^2(r_i) - \sigma^2(q_i)}{\rem(r_i)}. 
    \end{align*}
    For randomized strategy $s_i$, the left part of the equation holds by taking expectation over $q_i \sim s_i$: 
    \begin{align*}
       \mu_i(r_i) - \mu_i(s_i) = \E_{q_i \sim s_i}[\mu_i(r_i) - \mu_i(s_i)] \leq         \frac{\E_{q_i \sim s_i}[\sigma_i^2(r_i) - \sigma_i^2(q_i)]}{2 (\inf(s_i) - \mu_i(\inf(s_i))}. 
    \end{align*}
    For the right part of the equation, we simply use the fact that 
    \begin{align*}
        \sigma_i^2(s_i) = \E[V_i^2(s_i)] - \mu_i^2(s_i) &=\E_{q_i \sim s_i}[\E[V_i^2(q_i)]] - (\E_{q_i \sim s_i} [\mu_i(q_i)])^2 \\
        &\geq \E_{q_i \sim s_i}[\E[V_i^2(q_i)]] - \E_{q_i \sim s_i} \sq*{\mu_i(q_i)^2} = \E_{q_i \sim s_i}[\sigma_i^2(q_i)].
    \end{align*}
    Hence 
    \begin{align*}
        \mu_i(r_i) - \mu_i(s_i) \geq \frac{\E_{q_i \sim s_i}[\sigma^2(r_i) - \sigma^2(q_i)]}{\rem(r_i)} \geq \frac{\sigma^2(r_i) - \sigma_i^2(s_i)}{r_i - \mu_i(r_i)}.
    \end{align*}
\end{proof}
\begin{claim} \label{claim:remBound}
    For any $C \geq 1 - \frac{\dist_i(r_i)^4}{8K+1}$ where $K \geq 1$, it holds that $C \cdot r_i - \mu_i(C \cdot r_i) \geq \frac{4}{5} \cdot (r_i - \mu_i(r_i))$. 
\end{claim}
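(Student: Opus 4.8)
The plan is to work directly with the function $\rem(q) := q - \mu_i(q) = \int_0^q F_i(x)\,dx$ (as in \Cref{lem:meanToVar}), which is non-decreasing in $q$; the claim is precisely $\rem(C r_i) \geq \tfrac45 \rem(r_i)$. Since $\rem$ is non-decreasing we may assume $C < 1$, and we may assume $F_i(r_i) > 0$ (otherwise the hypothesis forces $C \geq 1$ and the claim is immediate).

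First I would bound the ``slice'' removed in going from $r_i$ down to $C r_i$. By monotonicity of $F_i$,
\[
\rem(r_i) - \rem(C r_i) \;=\; \int_{C r_i}^{r_i} F_i(x)\,dx \;\leq\; (1-C)\,r_i\,F_i(r_i).
\]
Using the hypothesis $C \geq 1 - \tfrac{F_i(r_i)^4}{8K+1}$ and $K \geq 1$ (so $8K+1 \geq 9$) this gives $\rem(r_i) - \rem(C r_i) \leq \tfrac19 F_i(r_i)^5\, r_i$. It therefore suffices to establish the lower bound $\rem(r_i) \geq \tfrac59 F_i(r_i)^5\, r_i$, since then $\rem(r_i) - \rem(C r_i) \leq \tfrac15 \rem(r_i)$, which rearranges to the claim.

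For the lower bound on $\rem(r_i) = r_i - \mu_i(r_i)$ I would reuse the Myerson-based estimate derived inside the proof of \Cref{claim:boundVariance}: since $x\,(1-F_i(x)) \leq r_i\,(1-F_i(r_i))$ for all $x \geq 0$, one gets $\mu_i(r_i) \leq (1-F_i(r_i))\,r_i\,\bigl(1 + \log\tfrac{1}{1-F_i(r_i)}\bigr)$, and hence, writing $t := F_i(r_i) \in (0,1]$,
\[
\rem(r_i) \;\geq\; r_i\bigl(t + (1-t)\log(1-t)\bigr) \;=\; r_i\sum_{k\geq 2}\frac{t^k}{k(k-1)} \;\geq\; r_i\Bigl(\tfrac{t^2}{2} + \tfrac{t^3}{6}\Bigr),
\]
where the middle equality is an elementary power-series computation and every summand is non-negative for $t \in [0,1]$. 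It then remains to verify $\tfrac{t^2}{2} + \tfrac{t^3}{6} \geq \tfrac59 t^5$ on $(0,1]$, i.e.\ $g(t) := \tfrac12 + \tfrac{t}{6} - \tfrac59 t^3 \geq 0$; since $g''(t) = -\tfrac{10}{3} t \leq 0$, $g$ is concave on $[0,1]$ and attains its minimum at an endpoint, and $g(0) = \tfrac12 > 0$, $g(1) = \tfrac19 > 0$. Combining the upper and lower bounds gives $\rem(C r_i) \geq \tfrac45 \rem(r_i)$.

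The one mildly delicate point is that the lower bound on $\rem(r_i)$ must be strong enough in the regime $F_i(r_i) \to 1$: the cruder estimate $\rem(r_i) \geq \tfrac12 F_i(r_i)^2 r_i$ used elsewhere would only yield the claim when $F_i(r_i)^3 \leq \tfrac9{10}$, which is why it is essential to keep the extra $\tfrac{t^3}{6}$ term (equivalently, to retain the logarithmic form $t + (1-t)\log(1-t)$ rather than bounding it below by $t^2/2$). Everything else is routine algebra and calculus.
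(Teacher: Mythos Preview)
Your proof is correct. Both you and the paper use the same upper bound $\rem(r_i) - \rem(C r_i) \leq (1-C)\,r_i\,F_i(r_i)$, but you then lower-bound $\rem(r_i)$ (via the closed form $t + (1-t)\log(1-t) = \sum_{k\geq 2} t^k/(k(k-1))$, keeping two terms and finishing with a concavity check), whereas the paper instead lower-bounds $\rem(C r_i)$ directly by integrating the Myerson inequality $F_i(x) \geq 1 - \rev_i(r_i)/x$ over a carefully chosen subinterval of $[\rev_i(r_i),\,C r_i]$ and then bounds the ratio $\rem(r_i)/\rem(C r_i)$. Your route is tidier: it reuses the estimate already derived in the proof of \Cref{claim:boundVariance} rather than setting up a new integral, and the power-series/concavity step is elementary and self-contained. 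The paper's route is slightly more hands-on with the integral but has the advantage of tracking explicit dependence on $K$ through the computation (though ultimately both proofs only need $K \geq 1$).
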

\begin{proof}
    Again, let $\rem(C \cdot r_i) = C \cdot r_i - \mu_i(C \cdot r_i)$ and $\rem(r_i) = r_i - \mu_i(r_i)$. Let $\rev_i(r_i) = r_i \cdot (1 - \dist_i(r_i))$. Since $r_i$ is the Myerson price, for any $x \in [0, 1]$, $x (1- \dist_i(x)) \leq \rev_i(r_i)$, hence $\dist_i(x) \geq  1 - \frac{\rev_i(r_i)}{x}$. As we have discussed in \Cref{lem:meanToVar}
    \begin{align*}
        \rem(C \cdot r_i) = \int_{0}^{C \cdot r_i} \dist_i(x)\, dx &\geq \int_{0}^{C \cdot r_i} \max\left(
            0, 1 - \frac{\rev_i(r_i)}{x}
        \right)\, dx \\
        &\geq \int_{\rev_i(r_i)}^{C \cdot r_i} 1 - \frac{\rev_i(r_i)}{x}\, dx
    \end{align*}
    When $x \geq r_i \cdot (1 - \dist_i(r_i)^4)$, 
    \begin{align*}
        1 - \frac{\rev_i(r_i)}{x} \geq 1 - \frac{r_i \cdot (1 - \dist_i(r_i))}{r_i \cdot (1 - \dist_i(r_i)^4)} \geq 1 -  \frac{1}{1 + F_i(r_i)} \geq \frac{F_i(r_i)}{2}.
    \end{align*}
    Given that $C \geq 1 - \frac{\dist_i(r_i)^4}{2K+1}$, 
    \begin{align*}
        \int_{\rev_i(r_i)}^{C \cdot r_i} 1 - \frac{\rev_i(r_i)}{x} \geq \int_{r_i \cdot (1 - \dist_i(r_i)^4)}^{C \cdot r_i} \frac{F_i(r_i)}{2} &\geq \paren*{r_i \cdot \paren*{1 - \frac{\dist_i(r_i)^4}{2K+1}} - r_i \cdot \paren*{1 - \dist_i(r_i)^4}} \cdot \frac{F_i(r_i)}{2}\\
        &= r_i \cdot \paren*{1-\frac{1}{2K+1}}  \cdot \frac{F_i(r_i)^5}{2}.
    \end{align*}
    Meanwhile, $\rem(r_i) - \rem(C \cdot r_i) = \int_{C \cdot r_i}^{r_i} \dist_i(x)\, dx$. Since it's always true that $\dist_i(x) \leq F_i(r_i)$, 
    \begin{align*}
        \rem(r_i) - \rem(C \cdot r_i) \leq (1-C) r_i \cdot F_i(r_i) \leq \frac{\dist_i(r_i)^5}{2K+1}. 
    \end{align*}
    Since $K \geq 4$, we conclude that 
    \begin{align*}
        \frac{\rem(r_i)}{\rem(C \cdot r_i)} = 1 + \frac{\rem(r_i) - \rem(C \cdot r_i)}{\rem(C \cdot r_i)} &\leq 1 + \frac{
            \frac{\dist_i(r_i)^5}{2K+1}
        }{
            \paren*{1-\frac{1}{2K+1}} \cdot \frac{F_i(r_i)^5}{2}
        }
        \\
        &\leq 1 + \frac{1/9}{(1 - 1/9) \cdot 1/2} = 1 + 1/4 = 5/4.  
    \end{align*}
    This yields our claim. 
\end{proof}

\lemHighMeanVar*
\begin{proof}
    By \Cref{lem:meanToVar}, we know that 
    \begin{align*}
        \sigma^2(\vec{r}) - \sigma^2(\vec{s}) = \sum_{i \in \items} \paren*{\sigma_i^2(r_i) - \sigma_i^2(s_i)} &\leq \sum_{i \in \items} 2 (r_i - \mu_i(r_i)) \cdot (\mu_i(r_i) - \mu_i(s_i)) \\
        &\leq 2 \max_{i \in \items}\set*{r_i - \mu_i(r_i)} \cdot \paren*{\sum_{i \in \items}\mu_i(r_i) - \sum_{i \in \items}\mu_i(s_i)}. 
    \end{align*}
    By our assumption in the lemma, $\sum_{i \in \items} \mu_i(s_i) \geq \sum_{i} \mu_i(C \cdot r_i)$. Hence 
    \begin{align*}
        \sum_{i \in \items} \mu_i(r_i) - \sum_{i \in \items}\mu_i(s_i) \leq \sum_{i \in \items} \mu_i(r_i) - \sum_{i \in \items} \mu_i(C \cdot r_i) = \sum_{i \in \items} (\mu_i(r_i) - \mu_i(C \cdot r_i)). 
    \end{align*}
    Now, again by \Cref{lem:meanToVar}, 
    \begin{align*}
        \sum_{i \in \items} (\mu_i(r_i) - \mu_i(C \cdot r_i)) &\leq \sum_{i \in \items} \frac{\sigma_i^2(r_i) - \sigma_i^2(C \cdot r_i)}{2(C \cdot r_i - \mu_i(C \cdot r_i))}\\
        & \leq \frac{1}{2(\min_{i \in \items}\set*{C \cdot r_i - \mu_i(C \cdot r_i)})} \cdot \paren*{\sum_{i \in \items} \sigma_i^2(r_i) - \sum_{i \in \items} \sigma_i^2(C \cdot r_i)}.  
    \end{align*}
    Note that our choice of $C$ satisfies the conditions in \Cref{claim:remBound}, hence for all $i \in \items$, $C \cdot r_i - \mu_i(C \cdot r_i) \geq \frac{4}{5} \cdot (r_i - \mu_i(r_i))$, and hence
    \begin{align*}
        \sum_{i \in \items} (\mu_i(r_i) - \mu_i(C \cdot r_i)) \leq \frac{5}{8(\min_{i \in \items}\set*{r_i - \mu_i(r_i)})} \cdot \paren*{\sum_{i \in \items} \sigma_i^2(r_i) - \sum_{i \in \items} \sigma_i^2(C \cdot r_i)}.
    \end{align*}
    Plugging our choice of $C$ into \Cref{coro:constVarianceBound}, we get 
    \begin{align*}
        \sum_{i \in \items} \sigma_i^2(C \cdot r_i) \geq \left(1 - \frac{1}{K}\right) \cdot \sigma_i^2(r_i), 
    \end{align*}
    hence 
    \begin{align}
        \sum_{i \in \items} \sigma_i^2(r_i) - \sum_{i \in \items} \sigma_i^2(C \cdot r_i) \leq \frac{1}{K} \cdot \sigma_i^2(r_i).
    \end{align}
    We conclude that 
    \begin{align*}
        \sum_{i \in \items} \sigma_i^2(r_i) - \sum_{i \in \items} \sigma_i^2(s_i) &\leq \frac{2 \max_{i \in \items}\set*{r_i - \mu_i(r_i)}}{8/5 \cdot \min_{i \in \items} \set*{r_i - \mu_i(r_i)}} \cdot  \frac{1}{K} \cdot \sigma_i^2(r_i) \\
        &\leq \frac{2 \max_{i \in \items}\set*{r_i - \mu_i(r_i)}}{8/5 \cdot \min_{i \in \items} \set*{r_i - \mu_i(r_i)}} \cdot \frac{1}{4 \cdot \frac{\max_{i \in \items} \set*{r_i - \mu_i(r_i)}}{\min_{i \in \items}\set*{r_i - \mu_i(r_i)}}} \cdot \sum_{i \in \items} \sigma_i^2(r_i) \leq \frac{1}{2} \cdot \sum_{i \in \items} \sigma_i^2(r_i),
    \end{align*}
    and hence 
    \begin{align*}
        \sigma^2(\vec{s}) = \sum_{i \in \items} \sigma_i^2(s_i) \geq \sum_{i \in \items} \sigma_i^2(r_i) - \frac{1}{2} \cdot \sum_{i \in \items} \sigma_i^2(r_i) = \frac{1}{2} \cdot \sum_{i \in \items} \sigma_i^2(r_i) = \sigma^2(\vec{r}). 
    \end{align*}
\end{proof}

\lemContainedResponse*
\begin{proof}
    Recall from our discussion in \Cref{sec:upperbound} (in particular, \Cref{eq:sellerUtility}), given the principal is selling the grand bundle at price $p$ and other item sellers $j$ are using strategies $s_{j}$, item seller $j$ gets the following expected revenue for pricing at $q_i$: 
    \begin{align*}
        \us_{i}(p, (q_i, s_{-i})) = q_i \cdot \Prob[\truncV{i}{r_i} \geq q_i] \cdot \Prob[p - q_i \geq \sum_{j \neq i} V_j(s_j)] = \rev_i(q_i) \cdot \Prob[p - q_i \geq \sum_{j \neq i} V_j(s_j)]. 
    \end{align*}
    Assume the buyer chooses to buy the grand bundle with probability $< 1/2$ when item seller $i$ uses strategy $s_i$. Then 
    \begin{align*}
         \Prob[p \geq \sum_{j \neq i} V_j(s_j)] \geq \Prob[p \geq \sum_{j} V_j(s_j)] \geq 1/2.
    \end{align*}
    We will show that in this case, pricing below $C \cdot r_i$ is a strictly dominated strategy for item seller $i$. In particular, we will prove that for any $\alpha \in [0, C)$, 
    \begin{align}
        \us_{i}(p, (\alpha \cdot r_i, s_{-i})) < \us_{i} (p, (r_i, s_{-i})) \label{eq:revCompare}.
    \end{align}
    Pluggin in the formula for $\us_{i}(\cdot)$, \Cref{eq:revCompare} is equivalent to 
    \begin{align*}
        \frac{\Prob[p - r_i \geq \sum_{j \neq i} V_j(s_j)]} {\Prob[p - \alpha \cdot r_i \geq \sum_{j \neq i} V_j(s_j)]} \geq \frac{\rev_i(\alpha \cdot r_i)}{\rev_i(r_i)}
    \end{align*}
    We know by all value distribution being $(\lambda, C)$-price sensitive that for any $\alpha < C$, $\rev_i(r_i) - \rev_i(\alpha \cdot r_i) \geq \lambda(1-\alpha) \cdot r_i$, and hence $\frac{\rev_i(\alpha \cdot r_i)}{\rev_i(r_i)} \leq 1 - \lambda(1-\alpha)$. Thus for our lemma, it is sufficient to prove that 
    \begin{align}
        \frac{\Prob[p - r_i \geq \sum_{j \neq i} V_j(s_j)]} {\Prob[p - \alpha \cdot r_i \geq \sum_{j \neq i} V_j(s_j)]} \geq 1 - \lambda(1-\alpha). \label{eq:itemProbRatio}
    \end{align}
    When $\Prob[p - r_i \geq \sum_{j \neq i} V_j(s_j)] \geq 1 - \lambda(1-\alpha)$, above equation is automatically satisfied. Now we will mainly focus on the case where  $\Prob[p - r_i \geq \sum_{j \neq i} V_j(s_j)] < 1 - \lambda(1-\alpha)$. In this  case, we will prove that the standard deviation of the random variables $\sum_{j \neq i} V_j(s_j)$s are large, specifically, $\sigma_{-i}(s_{-i}) = \sqrt{\sum_{j \neq i}\sigma_j^2(s_j)} \geq 2 \cdot (\frac{1}{\lambda(1-c)} + 2) \cdot \max_i r_i$. We will soon see that this large standard deviation enables us to argue that item seller $i$'s ability to affect the probability that the buyer purchases the grand bundle from the principal is limited. 
    
    Let $A = 2 \cdot (\frac{1}{\lambda(1-c)} + 2)$. Assume for contradiction that $\sigma_{-i}(s_{-i}) < A \cdot \max_i r_i$. We are considering the case that $\Prob[p - r_i \geq \sum_{j \neq i} V_j(s_j)] < 1 - \lambda(1-\alpha)$, hence 
    $\Prob[p - r_i < \sum_{j \neq i} V_j(s_j)] \geq \lambda(1-\alpha)$. Let $\Delta_i = p - r_i - \E[\sum_{j \neq i} V_j(s_j)]$.  By chebyshev's inequality, when $\Delta_i \geq 0$, 
    \begin{align*}
        \lambda(1-\alpha) &\leq \Prob \sq*{\sum_{j \neq i} V_j(s_j) > p - r_i} \\
        &=  \Prob\sq*{\sum_{j \neq i} V_j(s_j) \geq \E[\sum_{j \neq i} V_j(s_j)] + \frac{(p - r_i - \E[\sum_{j \neq i} V_j(s_j)])}{\sigma_{-i}(s_{-i})} \cdot \sigma_{-i}(s_{-i})}\\
        &\leq \frac{\sigma_{-i}^2(s_{-i})}{\Delta_i^2}.
    \end{align*}
    Hence
    \begin{align*}
       \Delta_i \leq \frac{\sigma_{-i}(s_i)}{\sqrt{\lambda(1-\alpha)}} < \frac{A \cdot \max_{j \in \items} \{r_j\}}{\sqrt{\lambda(1-\alpha)}}. 
    \end{align*}
    This means that the difference between the mean of $\sum_{j \neq i} V_j(s_j)$ and $p - r_i$ is small. Specifically,
    \begin{align*}
        \E\left[\sum_{j \neq i} V_j(s_j)\right] &\geq p - r_i - \frac{A \cdot \max_{j \in \items} \{r_j\}}{\sqrt{\lambda(1-\alpha)}} \geq p - \paren*{1 + \frac{A}{\sqrt{\lambda(1-\alpha)}}}\cdot \max_{j \in \items} \{r_j\}\\
        &\geq \sum_{i \in \items}\E[\min(v_i, C \cdot r_i)] + \frac{\frac{12}{(\lambda(1-C))^{3/2}} \cdot \max_{j \in \items} \{r_j\}}{4} - \paren*{1 + \frac{A}{\sqrt{\lambda(1-\alpha)}}}\cdot \max_{j \in \items} \{r_j\}\\
        &\geq \sum_{i \in \items}\E[\min(v_i, C \cdot r_i)].
    \end{align*}
    Now, this means that $\E\left[\sum_{j \in \items} V_j(s_j)\right] =  \E[\sum_{j \neq i} V_j(s_j)] + V_i(s_i) \geq \sum_{i \in \items}\E[\min(v_i, C \cdot r_i)]$. By \Cref{lem:highMeanVar}, we can bound the variance of  random variable $\sum_{j \neq i} V_j(s_j)$: 
    \begin{align*}
        \sigma_{-i}^2(s_{-i}) = \sum_{j \neq i} \sigma^2_j(s_j) = \sum_{j \in \items} \sigma^2_j(s_j) - \sigma^2_i(s_i) &\geq \frac{1}{2} \sum_{j \in \items} \sigma^2_j(r_j)  - r_i^2 \\
        &\geq \frac{36}{(\lambda(1-C))^3} \cdot (\max_{j \in \items} \{r_j\})^2 - r_i^2 \geq \frac{35}{(\lambda(1-C))^3} \cdot (\max_{j \in \items} \{r_j\})^2
    \end{align*}
    Hence $\sigma_{-i}(s_{-i}) \geq \frac{5}{(\lambda(1-C))^{3/2}} \cdot \max_{j \in \items} \{r_j\}$, which is a contradiction to our assumption that $\sigma_{-i}(s_{-i})< A \cdot  \max_{j \in \items} \{r_j\}$. 

    Now that we have proven $\sigma_{-i}(s_{-i}) > A \cdot \max_{j \in \items} \{r_j\}$, we will proceed to prove that this implies \Cref{eq:itemProbRatio}. 

    Let $\delta_i = 0.5606 \cdot \frac{1}{\sigma_{-i}(s_{-i})} \cdot \max_{j \neq i} \frac{\E[|V_j(s_j) - E[V_j(s_j)]|^3]}{\E[|V_j(s_j) - E[V_j(s_j)]|^2]}$. Since $V_j(s_j)$ is always at most $r_j$, $\frac{\E[|V_j(s_j) - E[V_j(s_j)]|^3]}{\E[|V_j(s_j) - E[V_j(s_j)]|^2]} \leq r_j$. Hence $\delta_i \leq 0.5606 \cdot \frac{\max_{j \in \items}\{r_j\} }{A \cdot \max_{j \in \items} \{r_j\}} \leq \frac{0.5606}{A}$. Recall that $\Phi$ denotes the CDF of the standard normal distribution $\N(0, 1)$. 
    By the two bounds on normal distribution from the Berry-Esseen Theorem (\Cref{claim:BerryEsseenGeneral}), we know that for 
    \begin{align*}
        \Prob\sq*{p - \alpha \cdot r_i \geq \sum_{j \neq i} V_j(s_j)} &= \Prob\sq*{\sum_{j\neq i } V_j(s_j) \leq  \mu_{-i}(s_{-i}) + \paren*{\frac{p - \alpha \cdot r_i - \mu_{-i}(s_{-i})}{\sigma_{-i}(s_{-i})}} \cdot  \sigma_{-i}(s_{-i})} \\
        &\leq \Phi \paren*{\frac{p - \alpha \cdot r_i - \mu_{-i}(s_{-i})}{\sigma_{-i}(s_{-i})}} + \delta_i
    \end{align*} and 
    \begin{align*}
        \Prob\sq*{p  - r_i \geq \sum_{j \neq i} V_j(s_j)} &= \Prob\sq*{\sum_{j\neq i } V_j(s_j) \leq \mu_{-i}(s_{-i}) + \paren*{\frac{p - r_i - \mu_{-i}(s_{-i})}{\sigma_{-i}(s_{-i})}} \cdot  \sigma_{-i}(s_{-i})} \\
    &\geq \Phi \paren*{\frac{p - r_i - \mu_{-i}(s_{-i})}{\sigma_{-i}(s_{-i})}} - \delta_i.
    \end{align*}

    Note that the probability density of $\N(0, 1)$ is at most $\frac{1}{\sqrt{2\pi}}$ at any point. Therefore 
    \begin{align*}
        \Phi \paren*{\frac{p - \alpha \cdot r_i - \mu_{-i}(s_{-i})}{\sigma_{-i}(s_{-i})}} - \Phi \paren*{\frac{p - r_i - \mu_{-i}(s_{-i})}{\sigma_{-i}(s_{-i})}}
        \leq  \frac{1}{\sqrt{2\pi}} \cdot (1 - \alpha) \cdot \frac{r_i}{\sigma_{-i}(s_{-i})} \leq \frac{1-\alpha}{\sqrt{2\pi} \cdot A}. 
\end{align*}

We conclude that 
\begin{align*}
    \Prob[p - \alpha \cdot r_i\geq \sum_{j \neq i} V_j(s_j)] - \Prob[p - r_i \geq \sum_{j \neq i} V_j(s_j)] &\leq \frac{1-\alpha}{\sqrt{2\pi} \cdot A} + 2\delta_i \leq \paren*{\frac{1-\alpha}{\sqrt{2\pi}} + 1.1212}\cdot \frac{1}{A}.
\end{align*}
Similarly, we know that 
\begin{align*}
    \Prob[p \geq \sum_{j \neq i} V_j(s_j)] - \Prob[p - \alpha \cdot r_i \geq \sum_{j \neq i} V_j(s_j)] &\leq \frac{\alpha}{\sqrt{2\pi} \cdot A} + 2\delta_i \leq \paren*{\frac{\alpha}{\sqrt{2\pi}} + 1.1212}\cdot \frac{1}{A}.
\end{align*}
Since we know from our assumption in the lemma that $\Prob[p \geq \sum_{j \neq i} V_j(s_j)] \geq 1/2$, then $\Prob[p - \alpha \cdot r_i \geq \sum_{j \neq i} V_j(s_j)] \geq 1/2 - \paren*{\frac{\alpha}{\sqrt{2\pi}} + 1.1212}\cdot \frac{1}{A}$. Note that proving \Cref{eq:itemProbRatio} is equivalent to proving 
\begin{align*}
    \frac{\Prob[p - \alpha \cdot r_i \geq \sum_{j \neq i} V_j(s_j)] - \Prob[p - r_i \geq \sum_{j \neq i} V_j(s_j)]} {(1-\alpha) \cdot \Prob[p - \alpha \cdot r_i \geq \sum_{j \neq i} V_j(s_j)]} \leq \lambda.
\end{align*}
We can now plug in the quantities in the left hand side of the above equation: 
\begin{align*}
    \frac{\Prob[p - \alpha \cdot r_i \geq \sum_{j \neq i} V_j(s_j)] - \Prob[p - r_i \geq \sum_{j \neq i} V_j(s_j)]} {(1-\alpha) \cdot \Prob[p - \alpha \cdot r_i \geq \sum_{j \neq i} V_j(s_j)]} &\leq \frac{\paren*{\frac{1 - \alpha}{\sqrt{2\pi}} + 1.1212}\cdot \frac{1}{A}}{(1-\alpha) \cdot \paren*{1/2 - \paren*{\frac{\alpha}{\sqrt{2\pi}} + 1.1212}\cdot \frac{1}{A}}}\\
    &\leq \frac{\paren*{\frac{1}{\sqrt{2\pi}} + \frac{1.1212}{1 - \alpha}}}{A/2 - \paren*{\frac{\alpha}{\sqrt{2\pi}} + 1.1212}}
\end{align*}
Since $\alpha \leq C$, and the above ratio is monotonically increasing in $\alpha$, 
\begin{align*}
     \frac{\paren*{\frac{1}{\sqrt{2\pi}} + \frac{1.1212}{1 - \alpha}}}{A/2 - \paren*{\frac{\alpha}{\sqrt{2\pi}} + 1.1212}} 
    \leq 
    \frac{\paren*{\frac{1}{\sqrt{2\pi}} + \frac{1.1212}{1 - C}}}{A/2 - \paren*{\frac{C}{\sqrt{2\pi}} + 1.1212}}.
\end{align*}
For $A \geq 2 \cdot (\frac{1}{\lambda(1-c)} + 2)$, one can verify that the above quantity is at most $\lambda$. This proves our lemma. 
\end{proof}

\thmGeneralPrincipalBundleRev*
\begin{proof}{(of \Cref{thm:generalPrincipalBundleRev})}
    Consider an arbitrary equilibrium $(s_1, \cdots, s_n)$ in $G_{p,\dist}$, where $p = \sum_{j}\E[V_j(C \cdot r_j)] + \frac{\sigma(V(\vec{r}))}{2}$.
    Note that as discussed in \Cref{subsec:buyerDecision}, when the buyer's value for the items is $\vec{v}$, the buyer's utility for the principal seller selling the grand bundle is $\ub_{P, \vec{v}}(T, p) = \sum_{i \in \items}v_i - p.$ Meanwhile, the buyer's maximum utility from purchasing any subset of items from item sellers who are pricing at $\vec{q}$ is $\ub_{I, \vec{v}}(\items, \vec{q}) = \max_{S \subseteq \items} \sum_{i \in S}(v_i - q_i) = \sum_{i \in \items}(v_i - q_i)_+.$ Hence the buyer only purchases from the principal when their utility from the principal is larger than that of item sellers, i.e. $\sum_{i \in \items}v_i - p > \sum_{i \in \items}(v_i - q_i)_+$. Equivalently, $p < \sum_{i \in \items} \min(v_i, q_i)$. This happens with probability $\Prob[p < \sum_{j} V_j(s_j)]$. 
    
    If $\Prob[p < \sum_{j} V_j(s_j)] \geq 1/2$, then the principal's revenue is at least $\Prob[p < \sum_{j} V_j(s_j)] \cdot p$. 
    We know that 
    \begin{align*}
        p \geq \sum_{j}\E[V_j(C \cdot r_j)] = \sum_{j}\E[\min(v_j, C \cdot r_j)] &\geq \sum_{j}\E[\min(C \cdot v_j, C \cdot r_j)] \\
        &= C \cdot  \sum_{j}\E[\min(v_j, r_j)] = C \cdot \sum_{j}\E[V_j(r_j)]. 
    \end{align*}%\mbc{we are still using $v_j$ here, instead of $v_j$ (in all but the last we can change to $v_j$, in the last we can write $\min(v_j, r_j)$)} 
    
    Since $C\geq 1 - \frac{1}{9} = \frac{8}{9}$, $\frac{1}{2} \cdot C \geq \frac{1}{2}\cdot \frac{8}{9}= \frac{4}{9}> \frac{1}{3}$. Hence the principal's revenue is at least $\frac{1}{2} \cdot (C \cdot \sum_{j}\E[V_j(r_j)])> 
    \frac{1}{3}\cdot \sum_{j}\E[V_j(r_j)]$. 

    Now, if  $\Prob[p < \sum_{j} V_j(s_j)] < 1/2$, then by \Cref{lem:containedResponse}, each item seller $i$'s strategy $s_i$ is supported on $[C \cdot r_i, r_i]$. By \Cref{coro:constVarianceBound}, 
    \begin{align*}
        \sigma^2(\vec{s}) = \sum_{i \in \items} \sigma_i^2(s_i)  \geq \sum_{i \in \items} \E_{q_i \sim s_i}[\sigma_i^2(q_i)] \geq \paren*{1 - \frac{1}{K}} \cdot \sum_{i \in \items} \sigma_i^2(r_i) = \paren*{1 - \frac{1}{K}} \sigma^2(\vec{r}) \geq \frac{3}{4} \cdot \sigma^2(\vec{r}).
    \end{align*}
    Moreover, it is clear that $\mu(\vec{s}) = \E[\sum_{i} V_i(s_i)] \geq \E[\sum_{i} V_i(C \cdot r_i)] = p - \sigma(V(\vec{r}))/4 \geq p - \sqrt{\frac{4}{3}} \cdot \sigma(\vec{s})/4$. Now, the error term in the Berry-Esseen Theorem (\Cref{claim:BerryEsseenGeneral}): 
    \begin{align*}
        \delta &= 0.5606 \cdot \frac{1}{\sigma(\vec{s})} \cdot \max_{j \in \items} \frac{\E[|V_j(s_j) - E[V_j(s_j)]|^3]}{\E[|V_j(s_j) - E[V_j(s_j)]|^2]} \\
        &\leq 0.5606  \cdot \frac{1}{\sqrt{3/4} \cdot \frac{12}{(\lambda(1-C))^{3/2}} \cdot \max_{j \in \items} \{r_j\}} \cdot \max_{j \in \items} \{r_j\}
        \leq \frac{(\lambda(1-C))^{3/2}}{10}. 
    \end{align*}
    Hence by the Berry-Esseen Theorem (\Cref{claim:BerryEsseenGeneral}),  
    \begin{align*}
        \Prob\sq*{p < \sum_{j \in \items} V_j(s_j)} &= \Prob\sq*{\sum_{j \in \items} V_j(s_j) \geq  \mu(\vec{s}) + \paren*{\frac{p -\mu(\vec{s})}{\sigma(\vec{s})}} \cdot  \sigma(\vec{s})} \\
        &\geq \Phi \paren*{\frac{\mu(\vec{s}) - p}{\sigma(\vec{s})}} - \delta \geq \Phi\left(-\frac{1}{2 \sqrt{3}}\right) -  \frac{(\lambda(1-C))^{3/2}}{10} \geq 0.38, 
    \end{align*}
    Where $\Phi$ denotes the CDF of the standard normal distribution $\N(0, 1)$.% \mbc{reminding the reader what function  $\Phi$ is.}
    
    Consequently, the principal seller's revenue is at least $0.38 \cdot C \cdot \E[\sum_{i \in \items} \truncV{i}{r_i}]$. By construction, $C$ is at least $1 - \frac{1}{9} = \frac{8}{9}$, hence $0.38 \cdot C \geq 0.38 \cdot \frac{8}{9} > \frac{1}{3}$. 
\end{proof}

\section{Omitted Proofs in Section \ref{sec:brev-prev-lb}} \label{app:sec:brev-prev-lb}

\propPerPartitionRev*
\begin{proof} 
    We first claim that in any equilibrium $(s_1, s_2)$  of $G_{p,\dist}$,
    the probability that $q_1 + q_2 > p$ for $q_1, q_2$ sampled from $s_1, s_2$ must be $1$.
    Consider any item seller equilibrium $(s_1, s_2)$ in $G_{p,\dist}$.
    Assume %for contradiction 
    that there exists $q_1, q_2$ in the support of $s_1, s_2$ such that $q_1 + q_2 \leq p$.
    Since $p = H + \varepsilon$, 
    \begin{align*}
       \min\{q_1, q_2\} \leq \frac{q_1 + q_2}{2} \leq \frac{p}{2} =  \frac{H}{2} + \frac{\varepsilon}{2}.
    \end{align*}
    Let's assume without loss of generality that item seller one has a lower price for the item ($q_1 = \min\{q_1, q_2\}$). Then, consider item seller one's revenue when pricing at $H$. Since this price is lower than the principal's price $p$, when the buyer has value $H$ for item one but value $0$ for item two, the buyer will always purchase from item seller one. This event happens with probability $x \cdot (1 - x)$ Hence 
    \begin{align*}
        \us_{1}(p, \{H, s_2\}) \geq H \cdot x \cdot (1 - x) = \frac{1}{x^2} \cdot x \cdot (1 - x) = \frac{1 - x}{x}. 
    \end{align*}
    On the other hand, when seller one price at $q_1$, their probability of sale is at most the probability that the buyer has a non zero value for the item, which is $x$. Hence 
    \begin{align*}
        \us_{1}(p, \{q_1, s_2\}) \leq q_1 \cdot x \leq  \paren*{\frac{H}{2} + \frac{\varepsilon}{2}} \cdot x \leq \frac{1}{2 x} +  \frac{\varepsilon \cdot x}{2} 
    \end{align*}
    Since $x < \frac{1}{2}$, for sufficiently small $\varepsilon$, $\frac{1}{2 x} +  \frac{\varepsilon \cdot x}{2} < \frac{1 - x}{x}$.  
    Thus, if $q_1 + q_2 \leq p$ with positive probability (over sampling $q_1\sim s_1$ and $q_2\sim s_2$) then item seller one strictly prefers deviating and pricing at $H$ over pricing at $q_1$, so $(s_1,s_2)$ cannot be an equilibrium.  % Therefore, the probability that item seller one prices below $H$ must be zero.}

    We conclude that for any $q_1 \sim s_1$, $q_2 \sim s_2$, it must be the case that $q_1 + q_2 > p$ with probability $1$. Hence when the buyer has high value for both items (happens with probability $x^2$), the buyer would purchase from the principal. As a result, the principal revenue is at least $p \cdot x^2 \geq 1$. 
    Additionally, each seller is getting positive revenue only in the case that her item is the unique item realized to $H$. In such a case, the revenue is maximized at the Myerson price $H$, as $p>H$ (the buyer never buys from the principal in that case). We thus conclude there is a unique NE: one in which each item seller is pricing at $H$.
\end{proof}

\propPartitionEquilibrium*

\begin{proof}
    Notice that when the principal seller is using partition bundling menu with the partition $T_1$, $\cdots$, $T_n$, %OLD: with for each item seller $j$, 
    the buyer's decision of purchase for items in each set $T_a$ is independent of information (either the buyer's value, the principal price, or item seller price) of set $T_b$ where $b \neq a$. We can use this to show that when the item sellers in $T_i$ are using identical strategies, the utility of the item sellers in $T_i$ are exactly the same between games $G_{p,\dist}$ over $\items$ and $G_{p(T_i), F_i\times F_i}$ over $T_i$. 
    
    Formally, observe that the only set $T$ which contains item $j$ and where $p(T) < \infty$ is $T_i$. Thus 
    \begin{align*}
        \us_{j}(p, \{s_j, s_{-j}\}) 
        &=
        \E_{q_j \sim s_j}\sq*{
            q_j \cdot \Prob_{v_j \sim \dist_j}[v_j \geq q_j] \cdot \Prob_{q_{-j} \sim s_{-j}, v_{-j} \sim \dist_{-j}}
            \sq*{
                \forall T \ni j, p(T) -q_j \geq \sum_{l \in T, l \neq j} \min(v_l, q_l)]
            }
        } \\
        &= \E_{q_j \sim s_j}\sq*{
            q_j \cdot \Prob_{v_j \sim \dist_j}[v_j \geq q_j] \cdot \Prob_{q_{-j} \sim s_{-j}, v_{-j} \sim \dist_{-j}}
            \sq*{
                p(T_i) -q_j \geq \sum_{l \in T_i, l \neq j} \min(v_l, q_l)]
            }
        }\\
        &= \E_{q_j \sim s_j}\sq*{
            q_j \cdot \Prob_{v_j \sim \dist_j}[v_j \geq q_j] \cdot \Prob_{q_l \sim s_l, v_{l} \sim \dist_{l}:l \in T_i, l \neq j}
            \sq*{
                p(T_i) - q_j \geq \sum_{l \in T, l \neq i} \min(v_l, q_l)]
            }
        }\\
        &= \us_{j}(p(T_i), \{s_l\}_{l \in T_i}). 
    \end{align*}

    Since the game $G_{p,\dist}$ and $G_{p(T_i), F_i\times F_i}$ are exactly the same for item sellers in $T_i$, they share the same equilibrium.
\end{proof}

\lemPrevBound*
\begin{proof}
Let the principal's pricing strategy be selling the bundle of each set $T_i$ at price $p(T_i) = H_i + \varepsilon = K^{2 \cdot 3^{(i-1)}} + \varepsilon$ for some small enough $\varepsilon$. Consider an equilibrium $(s_1, \cdots, s_n)$ in $G_{p,\dist}$. By \Cref{prop:PartitionEquilibrium}, for each $i \in [n]$, $\{s_j\}_{j \in T_i}$ are in equilibrium in the game $g_{p(T_i)}$. 
Since $H_i = \frac{1}{x_i^2}$ and each $x_i < \frac{1}{2}$, by \Cref{prop:perPartitionRev}, the principal's revenue from $g_{p(T_i)}$ is at least $1$. The principal's total revenue is then the sum over the revenue the principal gains over each set $T_i$, which must be at least $n = m/2$.      
\end{proof}

\claimCopySellerProbabilitySale*

\begin{proof}
    As we have discussed in \Cref{sec:approxopt}, when the principal is selling the grand bundle at price $p$, and when other item sellers are using pricing strategies $s_{-j}$, item seller $j$'s expected revenue when pricing at $q_j$ is equal to: 
    \begin{align*}
        \us_i(p, (q_j, s_{-j})) = \Prob\sq*{p - q_j \geq \sum_{k \neq j} V_k(s_k)} \cdot q_j \cdot \Prob[v_j \geq q_j].
    \end{align*}
     Consider the case where the item seller price at $q_j = p - 2 \cdot \sum_{l \leq i} H_l$. Thus the item seller obtains expected revenue:  
    \begin{align*}
        \us_i(p, (p - 2 \cdot \sum_{l \leq i} H_l, s_{-j})) &= \Prob\sq*{p - (p - 2 \cdot \sum_{l \leq i} H_l) \geq \sum_{k \neq j} V_k(s_k)} \cdot \paren*{p - 2 \cdot \sum_{l \leq i} H_l} \cdot \Prob[v_j \geq q_j]\\
        &\geq \Prob\sq*{2 \cdot \sum_{l \leq i} H_l \geq \sum_{k \neq j} v_k} \cdot \paren*{p - 2 \cdot \sum_{l \leq i} H_l} \cdot \Prob[v_j \geq q_j].
    \end{align*}
    (First line to second line is due to $V_k(s_k) = \min(v_k, s_k)$, and is dominated by $v_k$. ) 
    
    In our construction, for any $q_j > 0$, $\Prob[v_j \geq q_j]$ has the same value (since the buyer either has value $0$, or the high value for the item). Since $p \geq 3 \cdot H_i$, and each $H_{l}$ where $l < i$ is a vanishing fraction of $H_i$, $q_j = p - (2 + o(1)) \cdot H_i > 0$. As a result, $\Prob[v_j \geq q_j] = \Prob[v_j > 0]$. 

    When all items in set $\bigcup_{k = i+1}^n T_i$ other than item $j$ have value $0$, then the value of $\sum_{k \neq j} v_k$ is at most $2 \cdot \sum_{l \leq i} v_l$ (the inequality is tight when the buyer's value for all items in set $\bigcup_{k = 1}^i T_i$ is high). Hence  
    \begin{align*}
        \Prob\sq*{2 \cdot \sum_{l \leq i} H_l \geq \sum_{k \neq j} v_k} \geq \prod_{l=i+1}^n (1 - x_l)^2,
    \end{align*}
    and 
    \begin{align*}
        \us_i\paren*{p, \paren*{p - 2 \cdot \sum_{l \leq i} H_l, s_{-j}}} \geq \prod_{l=i+1}^n (1 - x_l)^2 \cdot  \paren*{p - 2 \cdot \sum_{l \leq i} H_l} \cdot \Prob[v_j \geq 0] > 0.  
    \end{align*}
    Since item seller $j$ clearly can make a positive profit by setting their price at $p - 2 \cdot \sum_{l \leq i} H_l$, their strategy $s_i$ would not include either pricing at $0$ or pricing at $\geq p$, either would guarantee that item seller $j$ gets $0$ revenue. Moreover, for any $q_j$ in the support of $s_j$, the item seller $j$ must get at least as much expected revenue from $q_j$, compared to pricing at $p - 2 \cdot \sum_{l \leq i} H_l$, namely, 
    \begin{align*}
        \us_i(p, (q_j, s_{-j})) = \Prob\sq*{p - q_j \geq \sum_{k \neq j} V_k(s_k)} \cdot q_j \cdot \Prob[v_j \geq q_j] \geq \prod_{l=i+1}^n (1 - x_l)^2 \cdot  \paren*{p - 2 \cdot \sum_{l \leq i} H_l} \cdot \Prob[v_j \geq 0]. 
    \end{align*}
    Since we have reasoned that $q_j > 0$, $\Prob[v_j \geq q_j] = \Prob[v_j > 0]$, and the above inequality, dividing $\Prob[v_j > 0]$ on both sides, is equivalent to 
    \begin{align*}
        \Prob\sq*{p - q_j \geq \sum_{k \neq j} V_k(s_k)} \cdot q_j \geq \prod_{l=i+1}^n (1 - x_l)^2 \cdot  \paren*{p - 2 \cdot \sum_{l \leq i} H_l}. 
    \end{align*}
    Since $q_j \leq p$ as we have reasoned, then 
    \begin{align*}
        \Prob\sq*{p - q_j \geq \sum_{k \neq j} V_k(s_k)} 
        &\geq \prod_{l=i+1}^n (1 - x_l)^2 \cdot  \frac{p - 2 \cdot \sum_{l \leq i} H_l}{p}\\
        &\geq \paren*{1 - \sum_{l=i+1}^n x_l} \cdot \paren*{1 - \frac{\sum_{l \leq i} 2 H_l}{p}}\\
        & \geq 1 - 3 x_{i+1} - \frac{3 H_i}{p} \\
        & = 1 - 3 \paren*{x_{i+1} + \frac{3}{p \cdot x_i^2}}. 
    \end{align*}
    The first line to second line is due to union bound, the second to third line is due to $H_{i}$ and $x_i$ being the dominant term in a geometric series, and the third to fourth line is because by our construction of the example, $H_i = x_i^2$. This proofs our claim. 
\end{proof}

\lemPrincipalRevPricingAtSI*

\begin{proof}
    Fix $i\in [n-1]$ such that $3 \cdot K^{2 \cdot 3^{(i-1)}}  = 3 \cdot H_i \leq  p < 3 \cdot H_{i+1} = 3 \cdot K^{2 \cdot 3^{i}}$. % $3 \cdot K^{2 \cdot 3^{(i-1)}}  = 3 \cdot H_i < p \leq 3 \cdot H_{i+1} = 3 \cdot K^{2 \cdot 3^{i}}$.
    
    The probability that the buyer purchases the grand bundle from the principal is $\Prob\sq*{p < \sum_{k \in \cup_{l=1}^n T_l} V_k(s_k)}$
    We know that when the buyer's value for all items in sets $\bigcup_{l=i+1}^n T_l$ are $0$, then 
    \begin{align*}
         \sum_{k \in \cup_{l=1}^n T_l} V_k(s_k) = \sum_{k \in \cup_{l=1}^{i} T_l} V_k(s_k) \leq\sum_{k \in \cup_{l=1}^{i} T_l} v_k = 2 \cdot \sum_{l=1}^i H_i < 3 \cdot H_i \leq p, %\mbc{should\ be\ \leq p? }   
    \end{align*}
    and hence the buyer would never purchase the grand bundle from the principal in this case. This means that 
    \begin{align*}
        \Prob\sq*{p < \sum_{k \in \cup_{l=1}^n T_l} V_k(s_k)} &= \Prob\cond*{
            p < \sum_{j \in \cup_{l=1}^n T_l} V_k(s_k)
        } {\sum_{k \in \cup_{l=i+1}^n T_l} v_j > 0} \\
        &\leq \sum_{j \in \cup_{l=i+1}^n T_l} \Prob\cond*{
            p < \sum_{k \in \cup_{l=1}^n T_l} V_k(s_k)
        } {v_j > 0} \cdot \Prob[v_j > 0]\\
        & = \sum_{j \in \cup_{l=i+1}^n T_l} \Prob_{q_j \sim s_j}\cond*{
            p - q_j < \sum_{k \in \cup_{l=1}^n T_l, k \neq j} V_k(s_k)
        } {v_j > 0} \cdot \Prob[v_j > 0]\\
        &\leq \sum_{j \in \cup_{l=i+1}^n T_l} 3 \paren*{x_{i+1} + \frac{1}{p \cdot x_i^2}} \cdot \Prob[v_j > 0] \\
        &= \paren*{\sum_{j \in \cup_{l=i+1}^n T_l} \Prob[v_j > 0]} \cdot 3 \paren*{x_{i+1} + \frac{1}{p \cdot x_i^2}}\\
        & = \paren*{2 \cdot \sum_{l=i+1}^n x_l} \cdot 3 \paren*{x_{i+1} + \frac{1}{p \cdot x_i^2}} \\
        &\leq 3 \cdot x_{i+1} \cdot 3 \paren*{x_{i+1} + \frac{1}{p \cdot x_i^2}} = 9 \cdot  x_{i+1} \cdot \paren*{x_{i+1} + \frac{1}{p \cdot x_i^2}}. 
    \end{align*}
    The first to second line is due to union bound; the second to third line is due to $V_j(s_j) = \min(v_j, q_j), q_j \sim s_j$ is equal to $q_j$ when $v_j \geq q_j$;  the third line to fourth line is by \Cref{claim:copySellerProbabilitySale}. The subsequent lines are simply due to exchange of terms and that the value of $x_l$ exponentially decreases with $l$. 
    
    Now, from our assumption in the lemma, we know that %\mbc{should be $p<...$?} 
    $p < 3 \cdot H_{i+1} = 3 \cdot \frac{1}{x_{i+1}^2}$. Moreover, by our construction, $x_{i+1} = x_{i}^3$. We now conclude that the principal seller's revenue is at most 
    \begin{align*}
        p \cdot \Prob\sq*{p < \sum_{k \in \cup_{l=1}^n T_l} V_k(s_k)} \leq p \cdot 9 \cdot  x_{i+1} \cdot \paren*{x_{i+1} + \frac{1}{p \cdot x_i^2}} \\
        \leq  3 \cdot \frac{1}{x_{i+1}^2} \cdot 9 \cdot  x^2_{i+1} + \frac{9 x_{i+1}}{x_i^2} \leq 27 + 9 x_i \leq 36. 
    \end{align*}
\end{proof}

\lemBrevBound*

\begin{proof}
When $p<3\cdot H_1< 3\cdot K^2$ the revenue of the principal is bounded by $3\cdot K^2$ and thus is $O(1)$. %\mbc{where do we handle the case that $p<3H_1$ ADDED ?}
When the principal is pricing at or above three times the highest value the buyer can have for an item, namely, $p \geq  3 \cdot H_n$, then the principal is pricing above the  highest value of the buyer for the grand bundle, since even if the buyer has a high value for every  item, their value for the grand bundle is %the geometric sum % MB: it is not a genometric sum.
$2 \sum_i H_i= 2 \sum_i K^{2 \cdot 3^{(i-1)}}\leq 2 \sum_i H_n/4^{i-1}< 3 \cdot H_n$. %, which is heavily dominated by $2 \cdot H_n$, and is at most $3 \cdot H_n$.
As a result, the principal sells with probability $0$ and obtains $0$ revenue.

In the remaining cases, there must be some $i\in [n-1]$ such that $3 \cdot K^{2 \cdot 3^{(i-1)}}  = 3 \cdot H_i \leq p<  % < p \leq 
3 \cdot H_{i+1} = 3 \cdot K^{2 \cdot 3^{i}}$. We can then apply \Cref{lem:PrincipalRevPricingAtSI} which shows that the revenue of the principal  is at most $36$. %can only get $O(1)$ revenue.%\mbc{this is out of place in this lemma:} \lindac{out of place part removed.} 
This completes our proof. 
\end{proof}
\end{document}